\documentclass[british,superscript,utf8,nomove,x11names,shortlabels]{article}
\usepackage{amsmath}
\usepackage{amsthm}
\usepackage{amssymb}
\usepackage{unicode-math}
\usepackage[a4paper]{geometry}
\usepackage{color}
\usepackage{babel}
\usepackage{float}
\usepackage{units}
\usepackage{mathtools}
\usepackage{url}
\usepackage{graphicx}
\usepackage{rotfloat}
\usepackage[bookmarks=true,bookmarksnumbered=false,bookmarksopen=false,
 breaklinks=true,pdfborder={0 0 0},pdfborderstyle={},backref=false,colorlinks=true]
 {hyperref}
\hypersetup{pdftitle={Variational Kinetics},
 pdfauthor={Ronan Fleming},
 colorlinks=true,citecolor=magenta,linkcolor=blue}

\makeatletter

\theoremstyle{plain}
\newtheorem{thm}{\protect\theoremname}
\newenvironment{lyxcode}
	{\par\begin{list}{}{
		\setlength{\rightmargin}{\leftmargin}
		\setlength{\listparindent}{0pt}
		\raggedright
		\setlength{\itemsep}{0pt}
		\setlength{\parsep}{0pt}
		\normalfont\ttfamily}%
	 \item[]}
	{\end{list}}

\@ifundefined{date}{}{\date{}}
\catcode`\^^99=\active
\protected\def^^99{%
  \ifmmode
    \text{\texttrademark}%
  \else
    \texttrademark{}%
  \fi
}

\usepackage{bbm}

\makeatother

\usepackage[style=numeric,url=false,isbn=false,eprint=false,doi=false,firstinits=true,sorting=none]{biblatex}
\providecommand{\theoremname}{Theorem}

\begin{document}
\title{Variational kinetics: \\
elementary reaction kinetics via conic optimisation}
\author{\noindent Ronan M.T. Fleming\textsuperscript{}$^{1,2,3}$\thanks{To whom correspondence should be addressed: ronan.mt.fleming@gmail.com},
Ines Thiele$^{1,2,4,5}$}
\maketitle
\begin{center}
$^{1}$Digital Metabolic Twin Centre, $^{2}$School of Medicine,$^{3}$Institute
for Clinical Trials,\\
 $^{4}$School of Microbiology, $^{5}$Ryan Institute, \\
University of Galway, University Road, Galway, Ireland,
\par\end{center}
\begin{abstract}
Genome-scale modelling methods primarily predict reaction fluxes,
whereas established high throughput experimental technologies primarily
measure molecular species concentrations. This apparently paradoxical
situation has arisen because implementing the nonlinear constraints
that represent reaction kinetic rate equations is challenging without
resorting to convenient yet inaccurate approximations or to expansions
that are valid only near a reference state. We present a mathematically
and computationally tractable solution to this problem. First, we
introduce a mathematical reformulation of established knowledge of
metabolic reactions and reaction kinetics in matrix–vector notation.
We then present \emph{variational kinetics}, a novel approach that
satisfies steady state reaction kinetics at genome scale by exponential
conic optimisation. The nonlinear rate law constraints are relaxed
to exponential cones, which renders the feasible set convex, and satisfaction
of elementary kinetics is recovered by minimising a strictly concave
merit function over that set, which attains zero if, and only if,
every rate law holds. We establish that a particular sequence of conic
optimisation problems converges to a stationary point of this merit
function, and that every such stationary point is a steady state satisfying
elementary kinetics. Moiety conservation, thermodynamic constraints
on elementary kinetic parameters, regularised steady states and linear
optimisation of external reaction rates are each accommodated within
the same conic formulation. We demonstrate the approach computationally
on a genome-scale metabolic model.

\begin{figure}[H]
\includegraphics[width=1\textwidth]{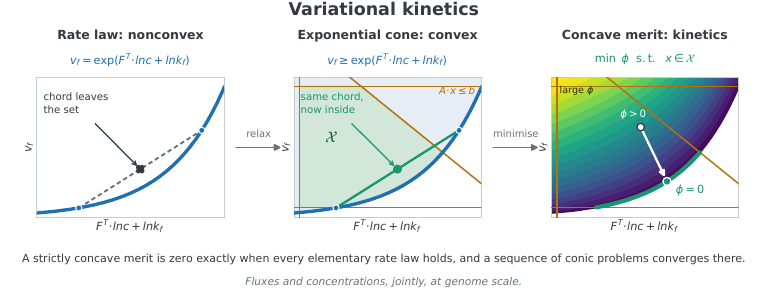}
\end{figure}
\end{abstract}

\paragraph*{Keywords}

Variational kinetics; exponential cone; conic optimisation; elementary
reaction kinetics; genome-scale metabolic model.

\section{Introduction}

Metabolism is the network of enzyme-catalysed reactions through which
cells extract energy and synthesise the molecules required for life.
Reconstructed at the scale of an entire genome, this network underpins
genome-scale models of metabolism, which have become standard tools
for interpreting biochemical, genetic and clinical data and for engineering
cellular chemistry. A metabolic network admits two complementary quantitative
descriptions: the rates at which its reactions proceed, that is their
fluxes, and the concentrations of the metabolites they interconvert
\parencite{fleming_conditions_2016}. A model that predicted both,
consistently, would let measurements of one be interpreted in terms
of the other — but the two descriptions have proven far easier to
obtain apart than together.

Constraint-based methods, such as flux balance analysis \parencite{orth_what_2010},
predict fluxes from network stoichiometry together with an optimisation
principle. They scale readily to genome-scale networks, but deliberately
omit reaction kinetics and therefore cannot predict metabolite concentrations.
High-throughput technologies, by contrast, increasingly measure concentrations
rather than fluxes. The quantities that are easiest to predict are
thus not the quantities that are easiest to measure. Perfectly closing
this gap requires introduction of reaction-kinetic rate laws that
couple concentrations to fluxes in a thermodynamically consistent
manner. However, the corresponding constraints are nonlinear, the
resulting feasible set is non-convex \parencite{qian_stoichiometric_2003},
and therefore satisfying them at genome-scale has proven both mathematically
and computationally demanding.

There are many modelling methods that introduce various aspects of
reaction-kinetics, thermodynamic consistency, or both, into genome-scale
models. We do not attempt an exhaustive reference list, rather we
refer to certain approaches as examples of the challenges associated
with different classes of methods. Thermodynamic constraints on the
direction of reactions can be applied by introduction of thermochemical
constraints \parencite{smith_thermo-flux_2026}, but this relies on
extrapolation from measured thermochemical data \parencite{noor_consistent_2013}
and sufficient metabolomic data. Thermodynamic constraints on combinations
of reaction directions can be efficiently satisfied at genome-scale
using linear optimisation \parencite{desouki_cyclefreeflux_2015}
and combined with constraints relating the ratio of unidirectional
fluxes to thermodynamic driving force \parencite{beard_relationship_2007}
using non-linear yet convex optimisation \parencite{fleming_variational_2012},
which can also be biased using available omics data \parencite{amestica-toledo_thermodynamically_2026}.
However, these methods omit constraints relating molecular species
abundance (enzymes, metabolites) to absolute reaction rates.

A complementary line of work takes a flux distribution as given and
predicts the accompanying metabolite concentrations by convex optimisation:
the max-min driving force method \parencite{noor_pathway_2014} selects
concentrations that maximise the smallest thermodynamic driving force
along a pathway, while enzyme cost minimisation \parencite{liebermeister_enzyme_2015}
selects them to minimise the total enzyme demand implied by the rate
laws. These formulations are convex and scale well, but they presuppose
a known flux distribution and enforce only thermodynamic feasibility,
or an enzyme-cost optimum, rather than the elementary rate law itself,
so they determine concentrations for a given flux rather than fluxes
and concentrations together. Provided that kinetic constraints and
kinetic parameters are formulated in a thermodynamically consistent
manner \parencite{lubitz_parameter_2010}, introduction of reaction-kinetic
rate laws obviates the need to apply separate thermodynamic constraints.
As introduction of reaction-kinetic rate laws is challenging, a variety
of different methods have instead attempted to represent various aspects
of reaction kinetics using approximations to rate laws. Reaction rates
can be bounded from above using the product of $k_{cat}$ times the
concentration of the catalysing enzyme and this, together with an
upper bound for the available enzyme pool, results in a scalable linear
approximation to reaction kinetics \parencite{domenzain_reconstruction_2023,adadi_prediction_2023,sanchez_improving_2017}
but does not take into account non-linear kinetic effects such as
saturation, thermodynamic driving force, or regulation. Rate laws
may be approximated by a log-linear expansion about a reference state,
in which the logarithm of a reaction rate is a linear function of
the logarithms of the reactant concentrations \parencite{savageau_biochemical_1970}.
The closely related linear-logarithmic (lin-log) kinetics \parencite{hatzimanikatis_effects_1997}
instead make the rate itself, scaled by enzyme level, a linear function
of the logarithmic concentrations, taking the elasticities of metabolic
control analysis as its parameters. However, these approximations
share a common compromise as expansions are built about a fixed reference
state and lose accuracy away from it (cf. Supplementary Figure \ref{fig:kineticFormats}).
An alternative stepwise approach is to represent reaction-kinetic
rate laws but anchor a model ensemble to reference steady states,
e.g., obtained from fluxomic data across many strains, then prune
it using perturbation data \parencite{tran_ensemble_2008,khodayari_kinetic_2014,khodayari_genome-scale_2016},
but this requires sufficient reference data at genome-scale, and the
retained ensemble is not uniquely identified, so its predictions remain
sensitive to the choice of reference state and to the sampling and
pruning procedure.

Reaction kinetics rate laws, and various approximations thereof, can
be added to genome-scale kinetic models then the feasible set of fluxes,
concentrations and parameters can be sampled in a manner consistent
with experimental data \parencite{miskovic_production_2010,toumpe_scalable_2026},
however the feasible set is non-linear and non-convex so there are
no guarantees that the numerical sample distribution will match the
theoretically desired distribution. Alternatively, one can embed nonlinear
rate laws directly into a constraint-based model and predict network
states using mixed-integer nonlinear optimisation algorithms that
exploit the mathematical properties particular to systems of reaction
rate laws \parencite{bekiaris_cobra-k_2026}, however reliable convergence
is not guaranteed as the network grows. Mass-action stoichiometric
simulation builds dynamic models by mapping measured concentrations
and fluxes onto the network \parencite{jamshidi_mass_2010,haiman_masspy_2021},
but it requires simultaneous measurements of concentrations and fluxes
to fix the model, and its mass-action form neglects enzyme saturation
and allosteric regulation, so it is accurate only near the state at
which it was parameterised. A common tension runs through these approaches:
one must either approximate the kinetics and forfeit biochemical fidelity,
depend on a pre-specified reference state and extensive parameterisation
and forfeit uniqueness and identifiability, or confront a nonconvex
optimisation that does not reliably scale. What remains missing is
a formulation of steady-state elementary reaction kinetics that is
at once mathematically exact, independent of any reference state,
and expressed as a tractable optimisation that determines reaction
fluxes and metabolite concentrations together.

Here we present variational kinetics, an approach that reformulates
steady-state elementary reaction kinetics as an optimisation problem
and solves it at genome scale through a sequence of exponential conic
optimisation problems. There is prior reason to expect such a reformulation
to be attainable: the thermodynamic and kinetic relationships that
govern an elementary reaction are log-linear in the chemical potentials
of its species, and the exponential cone captures exactly this log-linear
structure, so it was plausible in advance that elementary kinetics
could be cast as conic optimisation without approximation. We first
restate established results on metabolic reactions and reaction kinetics
in a consistent matrix-vector notation; we then develop the variational
kinetics formulation together with the numerical method used to solve
it, and characterise the conditions under which the iterative scheme
converges to a steady state; finally we illustrate the approach on
a genome-scale model of dopaminergic neuronal metabolism. Our aim
throughout is to answer a single question: can thermodynamically and
kinetically consistent reaction fluxes and metabolite concentrations
be predicted at genome scale from network structure and kinetic parameters
alone, without approximate rate laws and without a pre-specified reference
state?

\section{\protect\label{sec:Notation}Notation}

Throughout , $\mathbb{R}$, $\mathbb{R}^{n}$, and $\mathbb{R}^{m\times n}$
denote the field of real numbers, the vector space of $n$-tuples
of real numbers, and the space of $m\times n$ matrices with entries
in $\mathbb{R}$, respectively. Similarly, $\mathbb{Z}$, $\mathbb{Z}^{n}$,
$\mathbb{Z}^{m\times n}$ stand for integer numbers, the vector space
of $n$-tuples of integer number, and the space of matrices with entries
in $\mathbb{Z}$, respectively. $\mathbb{R}_{\ge0}^{n}$ and $\mathbb{R}_{>0}^{n}$
denote non-negative real $n$-tuples and positive real $n$-tuples
in $\mathbb{R}^{n}$, respectively, and $\mathbb{Z}_{\ge0}^{n}$ and
$\mathbb{Z}_{>0}^{n}$ denote non-negative integer $n$-tuples and
positive integer $n$-tuples in $\mathbb{Z}^{n}$, respectively. We
use Householder notation, that is a matrix is denoted by uppercase
Roman, such as $A\in\mathbb{R}^{m\times n}$. $A_{\textrm{i}}$ and
$A_{\textrm{:,j}}$ denote the $i^{th}$ row and the $j^{th}$ column
of $A$, respectively, where $i\in1,\ldots,m$ and $j\in1,\ldots,n$.
Note that subscript indexes are lower case Roman letters set in normal
font like $\textrm{i}$, rather than italic $i$. $A^{T}$ denotes
the transpose of a matrix $A$. $\mathbf{1}$ denotes a vector of
all ones and $I$ denote an identity matrix, with dimensions appropriate
to the circumstance. A calligraphic, uppercase, roman letter, e.g.,
$\mathcal{A}$, denotes a set, multiset or sequence, with $\{\cdot,\cdot\}$
denoting an unordered pair, $(\cdot,\cdot)$ denoting an ordered pair
and $(\cdot,\ldots,\cdot)$ denoting a sequence. Let $\left|\mathcal{A}\right|$
denote the cardinality of the set $\mathcal{A}$.The dot product of
$x$ and $y$ is denoted by $x^{T}\cdot y$, the Hadamard product
(element-wise product) is denoted by $x\odot y$, the Hadamard divisor
(element-wise division) is denoted by $x\oslash y$. Such products
and divisors of vectors require both vectors to have compatible dimensions.
Where $x$ is a vector $x^{-1}=\mathbf{1}\oslash x$ both denote a
component-wise inverse. $\left[\,\cdot,\cdot\,\right]$ denotes horizontal
concatenation and $\left[\begin{array}{c}
\cdot\\
\cdot
\end{array}\right]$ or $\left[\,\cdot;\cdot\,\right]$ denote vertical concatenation.
Also, $\exp(x)$ of a vector $x$ means component-wise exponential.
Where a diagonal matrix formed from a vector is required it is written
$\textrm{diag}(\cdot)$, while the Hadamard product is used wherever
both operands are vectors of the same dimension. $\nabla$ is the
gradient of a scalar valued function, or, for a vector valued function,
the matrix whose columns are the gradients of the components. The
expression $f(a\mid b)$ means that the vector valued function $f$
has a vector variable argument $a$, given a vector of parameters
$b$. Let 
\begin{eqnarray*}
\textrm{sign}(x) & \coloneqq & \begin{cases}
1 & \textrm{if}\;x>0,\\
-1 & \textrm{if}\;x<0,\\
0 & \textrm{if}\;x=0,
\end{cases}
\end{eqnarray*}
and component-wise where $x$ is a vector.

A disadvantage of reformulating nonlinear mathematical models in terms
of conic optimisation is an expansion in the number of variables.
With expansion of terms beyond the 26 letters in the Roman alphabet,
one is forced to compromise on established notation guidelines in
a manner that maintains a reasonable correspondence between symbols
in a scientific paper and the corresponding symbols in computer programming
code that are supposed to have the same meaning. The exponential or
natural logarithm of a vector is meant component-wise and $\exp(\ln(0)):=0$.
For example, let $x$ denote a vector, then $\ln\left(x\right)$ denotes
the component-wise natural logarithm of that vector and $lnx$ denotes
a variable that is envisaged to equal $\ln\left(x\right)$ at the
optimum of a conic optimisation problem, within tolerances specified
by parameters input into a numerical optimisation solver. This approach
enables transparency in representation of correspondence between related
variables, avoids premature exhaustion of the Roman alphabet, while
marginally extending beyond Householder notation guidelines.

The following exceptions to the conventions above are retained, because
each is well established in its own literature, named after a person,
or needed to avoid a typographic ambiguity.
\begin{itemize}
\item $\ell$ denotes the vector of moiety concentrations. A lower case
script ell is used in place of $l$ so that it cannot be confused
with the digit one.
\item $\mathcal{R}$, $\mathcal{T}$ and $\mathcal{P}$ denote the gas constant,
temperature and pressure. These are scalars, not sets, but the calligraphic
forms are conventional in chemical thermodynamics. Note that $\mathcal{R}(\cdot)$
also denotes the range of a matrix, which is a set; the two are distinguished
by the presence of an argument.
\item $\mathcal{L}$ denotes a Lagrangian, after Lagrange. It is a function
rather than a set.
\item $V_{\max}$ and $K_{M}$ denote the limiting rate and the Michaelis
constant of a Michaelis-Menten rate law. These are scalars in upper
case Roman, which is standard in enzyme kinetics.
\item Units of measurement are set in upright type, that is $\textrm{K}$
for kelvin, $\textrm{atm}$ for atmosphere, $\textrm{mol}$ for mole
and $\textrm{L}$ for litre.
\item A symbol that carries one established meaning in the biochemical literature
and a different established meaning in the optimisation literature
is not disambiguated here. For example $F$ denotes the forward stoichiometric
matrix in the sections on reaction kinetics and the matrix of the
affine conic constraint in the sections on conic optimisation, each
being standard in its own field. The section in which a symbol appears
determines which meaning is intended.
\end{itemize}

\section{Mathematical formulation of reaction kinetics}

\subsection{Reaction stoichiometry}

Consider a biochemical network with $m$ molecular species and $n$
reactions. Henceforth, species means molecular species. Typically,
though not always $m<n$. We assume that all net reactions are \emph{reversible}
and that each can be represented by a pair of unidirectional, \emph{forward}
and \emph{reverse,} reactions. With respect to the forward direction,
let the relative quantity, or \emph{stoichiometry}, of species $i$
participating as a substrate or catalyst in a \emph{forward reaction}
$j$, be a whole number entry $F_{\textrm{i,j}}>0$, in a \emph{forward
stoichiometric matrix} $F\in\mathbb{\mathbb{Z}}^{m\times n}$. Likewise,
with respect to the reverse direction, let the stoichiometry\emph{
}of species $i$ participating as a substrate or catalyst in \emph{reverse
reaction} $j$, be an entry $R_{\textrm{i,j}}>0$, in a \emph{reverse
stoichiometric matrix} $R\in\mathbb{\mathbb{Z}}^{m\times n}$. Then,
$N\in\mathbb{Z}^{m\times n}\coloneqq R-F$ is a (net) stoichiometric
matrix, where $N_{\textrm{i,j}}$ is the number of instances of molecule
$i$ that are consumed (negative) or produced (positive) in reaction
$j$. We assume that each column of $N$ corresponds a reaction where
mass is conserved.

\subsection{Elementary reaction kinetics}

Elementary reaction kinetics refers to the study of the individual,
simple steps that occur during a chemical reaction at the molecular
level. Each elementary reaction represents a single molecular event,
such as the breaking or formation of a chemical bond. An elementary
reaction is one for which no reaction intermediates have been detected
or need to be postulated in order to describe the chemical reaction
on a molecular scale. Elementary reactions provide a detailed, step-by-step
description of how a substrate interacts with an enzyme at the molecular
level. Let $v_{f}\in\mathbb{R}_{>0}^{n}$ and $v_{r}\in\mathbb{R}_{>0}^{n}$
denote forward and reverse elementary reaction rates, both of which
are a function of species concentrations $c\in\mathbb{R}_{>0}^{m}$
as well as forward and reverse elementary kinetic parameters, denoted
$k_{f}\in\mathbb{R}_{>0}^{n}$ and $k_{r}\in\mathbb{R}_{>0}^{n}$
respectively. Any reaction rate law assumes that reaction rate is
a function of concentrations and kinetic parameters, so let the net
reaction rate be 
\[
v_{net}(c\mid k_{f},k_{r})\coloneqq v_{f}(c\mid k_{f})-v_{r}(c\mid k_{r}).
\]
Henceforth, we assume that the rate of an elementary reaction is proportional
to the product of the concentrations of each of substrate (or catalyst),
each to the power of their respective stoichiometry in the reaction.
It follows that \emph{elementary kinetics} for the forward and reverse
reaction rates of reaction $j$ are given by
\begin{eqnarray*}
v_{f\textrm{j}}(c\mid k_{f\textrm{j}}) & \coloneqq & k_{f\textrm{j}}\prod c_{\textrm{i}}^{F_{\textrm{i,j}}},\\
v_{r\textrm{j}}(c\mid k_{r\textrm{j}}) & \coloneqq & k_{r\textrm{j}}\prod c_{\textrm{i}}^{R_{\textrm{i,j}}}.
\end{eqnarray*}
Using matrix vector notation and elementary logarithmic and exponential
identities, we formulate elementary kinetics for the forward and reverse
reaction rate vectors as
\begin{eqnarray}
v_{f}(c\mid k_{f}) & = & \exp(\ln(k_{f})+F^{T}\cdot\ln(c)),\label{eq:elementaryKineticsF}\\
v_{r}(c\mid k_{r}) & = & \exp(\ln(k_{r})+R^{T}\cdot\ln(c)).\label{eq:elementaryKineticsR}
\end{eqnarray}
The exponential or natural logarithm of a vector is meant component-wise
\footnote{Strictly, it is not proper to take the logarithm of a unit that has
physical dimensions. This difficulty can be avoided by considering
$c$ as a vector of mole fractions rather than concentrations (Eq.
19.93 in \parencite{berry_physical_2000}).}. A set of concentrations, kinetic parameters and rates that satisfy
Eq. (\ref{eq:elementaryKineticsF}) and Eq. (\ref{eq:elementaryKineticsR})
are said to be \emph{kinetically feasible}. Taking the logarithm of
both sides of Eqs. (\ref{eq:elementaryKineticsF}) and (\ref{eq:elementaryKineticsR})
we have
\begin{eqnarray}
\ln\left(v_{f}\right) & = & \ln(k_{f})+F^{T}\cdot\ln(c),\label{eq:logElementaryKineticsF}\\
\ln\left(v_{r}\right) & = & \ln(k_{r})+R^{T}\cdot\ln(c).\label{eq:logElementaryKineticsR}
\end{eqnarray}
Assuming elementary reaction kinetics, net rate is
\begin{eqnarray}
v_{net}(c\mid k_{f},k_{r}) & = & v_{f}(c\mid k_{f})-v_{r}(c\mid k_{r})=\exp(\ln(k_{f})+F^{T}\cdot\ln(c))-\exp(\ln(k_{r})+R^{T}\cdot\ln(c)).\label{eq:netElementaryReactionFlux}
\end{eqnarray}
Note that net rate is a function of species concentration given kinetic
parameters. That is, for now we assume we are given kinetic parameters
and are only interested in modelling concentrations and rates. Of
course, in reality the situation is more complicated as, at best,
one has experimental estimates of kinetic parameters.

\subsection{Steady state}

With respect to time $t$, the rate of change of species concentrations
is given by the dot product of net reaction stoichiometry and net
reaction rate
\begin{eqnarray*}
\dot{c}\coloneqq\frac{dc}{dt} & = & N\cdot v_{net}(c\mid k_{f},k_{r}),
\end{eqnarray*}
which is an ordinary differential equation. When $\dot{c}_{\textrm{i}}=0$
the rate of production equals the rate of consumption of species $i$,
that is, species $i$ is at a steady state. When $\dot{c}_{\textrm{i}}>0$
the rate of production is greater than the rate of consumption of
species $i$ and when $\dot{c}_{\textrm{i}}<0$ the rate of production
is less than the rate of consumption of species $i$. Assuming elementary
reaction kinetics, Eq. \ref{eq:netElementaryReactionFlux}, we have
\begin{eqnarray}
\dot{c} & = & N\cdot(\exp(\ln(k_{f})+F^{T}\cdot\ln(c))-\exp(\ln(k_{r})+R^{T}\cdot\ln(c))),\nonumber \\
 & = & [N,-N]\cdot\exp\left(\ln\left(\left[\begin{array}{c}
k_{f}\\
k_{r}
\end{array}\right]\right)+[F,R]^{T}\cdot\ln(c)\right)\nonumber \\
 & = & \left([R,F]-[F,R]\right)\cdot\exp\left(\ln\left(\left[\begin{array}{c}
k_{f}\\
k_{r}
\end{array}\right]\right)+[F,R]^{T}\cdot\ln(c)\right)\label{eq:dcdt-1-1-1}
\end{eqnarray}
where the latter are obtained by gathering related terms and substituting
$N=R-F$ to present, in matrix vector format, the fundamental equation
representing evolution of concentration with respect to time according
to elementary reaction kinetics.

\subsection{Mass balance}

Let $B\in\mathbb{Z}^{m\times k}$ be a stoichiometric matrix, where
each column corresponds to an \emph{external reaction,} which is a
modelling construct used to represent the exchange of mass between
a biochemical network and its environment. Let $w\in\mathbb{R}^{k}$
denote net exchange reaction rate. Assuming \emph{mass balance}, the
rate of change of species concentrations is equal to the net rate
of exchange with the environment. By convention, this is expressed
as
\begin{eqnarray*}
\dot{c} & = & -B\cdot w
\end{eqnarray*}
or equivalently
\begin{eqnarray}
N\cdot(v_{f}(c\mid k_{f})-v_{r}(c\mid k_{r}))+B\cdot w & = & 0\nonumber \\
N\cdot v_{net}(c\mid k_{f},k_{r})+B\cdot w & = & 0.\label{eq:massBalance}
\end{eqnarray}
Note that with this convention, if $w_{\textrm{j}}<0$ and $B_{\textrm{i,j}}<0$,
then this means that species $i$ is input from the environment, while
if $w_{\textrm{j}}>0$ and $B_{\textrm{i,j}}<0$, then this means
that species $i$ is output to the environment. Eq. (\ref{eq:massBalance})
means that $\textrm{production + input = consumption + output}$ for
every species. Henceforth, for brevity, we use the term steady state
to mean either a strict steady state, for molecular species not exchanged
across the boundary of the system, or for species that are exchanged
across the boundary (strictly mass balance). For each metabolite,
whether this means strictly steady state or mass balance, is evident
from the context.

\subsection{\protect\label{subsec:Moiety-conservation}Moiety conservation}

Every genome-scale stoichiometric matrix has linearly dependent rows,
that is $r\coloneqq\textrm{rank}(N)<m$. Let $L\in\mathbb{Z}_{}^{(m-r)\times m}$
denote a left nullspace basis for $N$, that is $L\cdot N=0.$ The
number of linearly dependent rows, or row rank deficiency, is $m-r$.
Each linearly dependent row corresponds to a conserved moiety, which
is a chemical substructure that remains invariant with respect to
the chemical transformations in a given network \cite{ghaderi_structural_2020}.
This \emph{moiety conservation} imposes constraints on the relationship
between an initial species concentration vector at time zero $c_{0}\in\mathbb{R}_{\ge0}^{m}$
and all subsequent species concentrations at time $t$, denoted $c\coloneqq c(t)\in\mathbb{R}_{\ge0}^{m}$,
since
\begin{eqnarray}
\int_{0}^{t}N\cdot v_{net}(c)\,dt & = & c-c_{0}\nonumber \\
0 & = & L\cdot(c-c_{0}).\label{eq:moietyConservation}
\end{eqnarray}
Given a stoichiometric matrix and molecular structures for each species,
using atom mapping and graph theoretical algorithms, it is possible
to compute a \emph{non-negative} left nullspace basis $L\in\mathbb{Z}_{\ge0}^{(m-r)\times m}$
where $L_{\textrm{i,j}}$ is equal to the number of instances of conserved
moiety $i$ in metabolite $j$ \cite{ghaderi_structural_2020,rahou_characterisation_2026}.
Hence, Eq. (\ref{eq:moietyConservation}) is referred to as a moiety
conservation constraint \cite{heinrich_regulation_1996} and $L$
a moiety incidence matrix \cite{rahou_characterisation_2026}.

\subsection{Mass balance subject to elementary reaction kinetics}

If one assumes that a system is at a steady state and that all reactions
must satisfy elementary reaction kinetics, then this requires the
solution to the following system of equations
\begin{equation}
N\cdot(\exp(\ln(k_{f})+F^{T}\cdot\ln(c))-\exp(\ln(k_{r})+R^{T}\cdot\ln(c)))+B\cdot w=0.\label{eq:massBalanceKinetics}
\end{equation}
Due to the exponential and logarithmic terms, it is clear that Eq.
(\ref{eq:massBalanceKinetics}) is non-linear. Furthermore, the set
of all solutions to Eq. (\ref{eq:massBalanceKinetics}) is known not
to be \emph{convex}, i.e., if one draws a straight line between two
points in that set, then there may be points along that line which
are not in that set. This non-convexity makes it challenging to find
solutions to Eq. (\ref{eq:massBalanceKinetics}), e.g., for the high
dimensional models that typically arise from genome-scale metabolic
models. Furthermore, due to various reasons, the uncertainty in our
knowledge of many kinetic parameters is large, so in a modelling context
$k_{f}$ and $k_{r}$ are not fixed parameters, but rather variables
that may be penalised from their deviation to a subset of known experimentally
measured kinetic parameters.

\subsection{Moiety conserved elementary reaction kinetics}

Biochemical networks are open systems that are forced away from equilibrium
by exchange of mass with their environment. Typically this is modelled
with a set of exchange reactions, each of which is a modelling construct
(pseudoreaction) that does not conserve mass and either uptakes species
from the environment or secretes species to the environment. However,
to the best of the authors' knowledge, there exist no conditions established
to guarantee that such a system admits a kinetic steady state. Previously,
we established an approach to force a biochemical network away from
equilibrium such that a non-equilibrium steady state still exists
\cite{fleming_mass_2012}. A corresponding existence theorem is proven
below in terms of ordinary differential equation theory.
\begin{thm}
\label{thm:existenceODE}Let the dynamical equation for mass conserved
elementary kinetics be
\begin{equation}
\dot{c}=(R-F)\cdot\left(k_{f}\odot\exp(F^{T}\cdot\ln(c))-k_{r}\odot\exp(R^{T}\cdot\ln(c))\right),\label{eq:odeKinetics}
\end{equation}
where $c=c(t)\in\mathbb{R}_{>0}^{m}$ is a species concentrations
at time $t>0$, $\dot{c}\in\mathbb{R}^{m}$ is the time derivative
of concentrations and $k_{f},k_{r}\in\mathbb{R}_{\ge0}^{n}$ are non-negative
forward and reverse kinetic parameters and $F,R\in\mathbb{Z}_{\ge0}^{m\times n}$
are forward and reverse stoichiometric matrices. Assuming a finite
and strictly positive initial concentration $c(0)\in\mathbb{R}_{>0}^{m}$,
and the existence of at least one strictly positive vector $\ell\in\mathbb{R}_{>0}^{m}$,
such that
\[
(R-F)^{T}\cdot\ell=0.
\]
then there exists at least one finite and non-negative steady state
concentration $c^{\star}\in\mathbb{R}_{>0}^{m}$, such that $\dot{c}=0$.
\end{thm}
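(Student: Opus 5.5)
The plan is to exploit the strictly positive conservation vector $\ell$ to confine trajectories to a compact convex set, and then apply a fixed-point argument to the flow. The first step is to note that $(R-F)^{T}\cdot\ell=0$ gives $\frac{d}{dt}\ell^{T}\cdot c=\ell^{T}\cdot(R-F)\cdot v_{net}=0$. Hence $\ell^{T}\cdot c(t)=\ell^{T}\cdot c(0)\eqqcolon\tau$ for all $t$ for which the solution exists. Define
\[
\mathcal{K}\coloneqq\{c\in\mathbb{R}_{\ge0}^{m}:\ \ell^{T}\cdot c=\tau\}.
\]
Because $\ell\in\mathbb{R}_{>0}^{m}$, the set $\mathcal{K}$ is a compact convex polytope (a scaled simplex) with $0\le c_{\textrm{i}}\le\tau/\ell_{\textrm{i}}$.

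The second step is to show that $\mathcal{K}$ is forward invariant. The right-hand side of Eq. (\ref{eq:odeKinetics}) is a polynomial in $c$, since $F,R$ are non-negative integer matrices, so it extends to a locally Lipschitz field on $\mathbb{R}_{\ge0}^{m}$. Next, check the boundary. If $c_{\textrm{i}}=0$, then every unidirectional reaction that consumes species $i$ has $F_{\textrm{i,j}}>0$ (forward) or $R_{\textrm{i,j}}>0$ (reverse), so its rate contains the factor $c_{\textrm{i}}$ and vanishes. Hence $\dot{c}_{\textrm{i}}\ge0$ on the face $c_{\textrm{i}}=0$, and the non-negative orthant is forward invariant by the standard tangency (Nagumo) criterion for quasi-positive fields. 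Combined with conservation of $\ell^{T}\cdot c$, solutions starting in $\mathcal{K}$ remain in the compact set $\mathcal{K}$, so they exist for all $t\ge0$ and the flow $\varphi_{t}:\mathcal{K}\to\mathcal{K}$ is continuous.

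The third step is a Brouwer argument to obtain a non-negative steady state. For each $k\in\mathbb{Z}_{>0}$, Brouwer's theorem gives a fixed point $c_{k}\in\mathcal{K}$ of $\varphi_{1/k}$, that is, a periodic orbit of period $1/k$. By compactness, a subsequence converges to some $c^{\star}\in\mathcal{K}$. Suppose $f(c^{\star})\neq0$, where $f$ denotes the vector field. Then $f$ stays close to $f(c^{\star})$ on a neighbourhood of $c^{\star}$, so any trajectory segment of length $1/k$ starting near $c^{\star}$ moves by approximately $f(c^{\star})/k\neq0$, which contradicts $1/k$-periodicity for large $k$. Hence $f(c^{\star})=0$. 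This yields a finite steady state $c^{\star}\ge0$ with $\ell^{T}\cdot c^{\star}=\tau$.

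The main obstacle is upgrading $c^{\star}\ge0$ to $c^{\star}\in\mathbb{R}_{>0}^{m}$, as written in the conclusion. The Brouwer argument alone cannot do this, because boundary equilibria can exist; for example, with $k_{r}=0$ a reactant may be exhausted. The statement says ``non-negative'' but writes $\mathbb{R}_{>0}^{m}$, and I would first establish the non-negative version as above. For strict positivity I would additionally assume $k_{f},k_{r}\in\mathbb{R}_{>0}^{n}$. Then every reaction is reversible, so the network is weakly reversible, and $\ell>0$ makes it conservative. I would then invoke the known theorem (Deng--Feinberg--Jones--Nachman; Boros) that a weakly reversible mass-action system has a positive equilibrium in every positive stoichiometric compatibility class, here $\mathcal{K}\cap\mathbb{R}_{>0}^{m}$. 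Alternatively, one could try to exclude boundary equilibria directly via persistence, using the fact that a vanishing siphon in a weakly reversible network must be exited.
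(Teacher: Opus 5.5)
Your proposal is correct and is essentially the paper's own proof: conservation of $\ell^{T}\cdot c$ confines the flow to a compact convex slice of the non-negative orthant, Brouwer's theorem gives fixed points of the flow maps $\Phi_{\tau_n}$ with $\tau_n\downarrow 0$, and any limit of these is an equilibrium. Your explicit quasi-positivity check, that $\dot{c}_{\textrm{i}}\ge 0$ whenever $c_{\textrm{i}}=0$, proves the forward invariance that the paper only asserts ``by assumption''.

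Your caution about positivity is also consistent with the paper, whose proof concludes only $c^{\star}\succeq 0$. Your $k_r=0$ example shows that $c^{\star}\in\mathbb{R}^{m}_{>0}$ cannot be obtained under the stated hypotheses. The added assumption $k_f,k_r>0$ together with Boros's theorem is therefore a genuine strengthening beyond what the paper proves, not a repair of your argument. One small correction: the positive compatibility class in that theorem is $(c(0)+\mathcal{R}(R-F))\cap\mathbb{R}^{m}_{>0}$, which lies inside $\mathcal{K}\cap\mathbb{R}^{m}_{>0}$ but need not equal it.
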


\begin{proof}
Consider\ref{eq:odeKinetics} an autonomous ordinary differential
equation
\begin{equation}
\dot{c}_{\textrm{i}}=\sum_{\textrm{j}=1}^{n}(R_{\textrm{i,j}}-F_{\textrm{i,j}})\left(k_{fj}\prod_{\textrm{i}=1}^{m}c_{\textrm{i}}^{F_{\textrm{i,j}}}-k_{rj}\prod_{\textrm{i}=1}^{m}c_{\textrm{i}}^{R_{\textrm{i,j}}}\right),\qquad c(0)>0.\label{eq:odeMonomials}
\end{equation}
By assumption, the system satisfies concentration non-negativity:
if $c_{\textrm{i}}(t)=0$ for some $i$, then $\dot{c}_{\textrm{i}}(t)\ge0$.
Hence the nonnegative orthant $\mathbb{R}_{\ge0}^{m}$ is forward
invariant. Multiplying(\ref{eq:odeKinetics}) by $\ell\in\mathbb{R}_{>0}^{m}$
yields
\[
\ell^{T}\cdot\dot{c}=\ell^{T}\cdot(R-F)\cdot(\cdot)=0,
\]
and therefore
\begin{equation}
\ell^{T}\cdot c(t)=\ell^{T}\cdot c(0)\quad\forall\,t\ge0.\label{P.3}
\end{equation}
Define
\[
\Omega\coloneqq\{\,c\in\mathbb{R}_{\ge0}^{m}:\ell^{T}\cdot c=\ell^{T}\cdot c(0)\,\}.
\]
Because $\ell\succ0$, each component satisfies
\[
0\le c_{\textrm{i}}\le\frac{\ell^{T}\cdot c(0)}{\ell_{\textrm{i}}},
\]
so $\Omega$ is nonempty, closed, bounded, and convex, hence compact.
By(\ref{P.3}) and forward invariance of $\mathbb{R}_{\ge0}^{m}$,
$\Omega$ is invariant under the dynamics.

Write the right-hand side of(\ref{eq:odeMonomials}) as the vector
field $g(c)\coloneqq(R-F)\cdot\left(k_{f}\odot\exp(F^{T}\cdot\ln(c))-k_{r}\odot\exp(R^{T}\cdot\ln(c))\right)$.
The reaction rates, written componentwise as monomials in(\ref{eq:odeMonomials}),
extend continuously to $\mathbb{R}_{\ge0}^{m}$, so $g$ is continuous
on $\Omega$. Since these monomials are smooth on $\mathbb{R}_{>0}^{m}$,
solutions are unique, and for each $\tau>0$ the time-$\tau$ flow
map
\[
\Phi_{\tau}:\Omega\to\Omega,\qquad\Phi_{\tau}(c_{0})=c(\tau),
\]
is a well-defined continuous self-map of $\Omega$.

Fix a sequence $\tau_{n}\downarrow0$. For each $n$, $\Phi_{\tau_{n}}$
maps the compact convex set $\Omega$ continuously into itself, so
Brouwer's fixed point theorem yields $c_{n}\in\Omega$ with
\begin{equation}
\Phi_{\tau_{n}}(c_{n})=c_{n}.\label{P.4}
\end{equation}
A point fixed by $\Phi_{\tau_{n}}$ has a $\tau_{n}$-periodic orbit
and is not yet a steady state, since any nonconstant orbit whose period
divides $\tau_{n}$ is also fixed by $\Phi_{\tau_{n}}$. To extract
an equilibrium, use compactness of $\Omega$ to pass to a subsequence
with $c_{n}\to c^{\star}\in\Omega$. In integral form,
\[
0=\Phi_{\tau_{n}}(c_{n})-c_{n}=\int_{0}^{\tau_{n}}g\!\left(\Phi_{s}(c_{n})\right)ds,
\]
so dividing by $\tau_{n}$ gives the time average
\begin{equation}
\frac{1}{\tau_{n}}\int_{0}^{\tau_{n}}g\!\left(\Phi_{s}(c_{n})\right)ds=0.\label{P.5}
\end{equation}
Because $g$ is bounded on the compact set $\Omega$, for every $s\in[0,\tau_{n}]$
\[
\left\Vert \Phi_{s}(c_{n})-c^{\star}\right\Vert \le\tau_{n}\,\sup_{c\in\Omega}\left\Vert g(c)\right\Vert +\left\Vert c_{n}-c^{\star}\right\Vert \longrightarrow0
\]
uniformly in $s$ as $n\to\infty$. Since $g$ is uniformly continuous
on $\Omega$, the integrand converges uniformly to $g(c^{\star})$,
and hence the average in(\ref{P.5}) converges to $g(c^{\star})$.
Therefore 
\[
g(c^{\star})=0,\qquad\text{i.e.}\qquad\dot{c}^{\star}=0,
\]
so $c^{\star}$ is a steady state of (\ref{eq:odeMonomials}), with
$c^{\star}\succeq0$ and finite (cf.. Figure \ref{fig:Thm1} for an
illustration of this argument). This completes the proof.
\end{proof}
\begin{figure}
\includegraphics[width=1\textwidth]{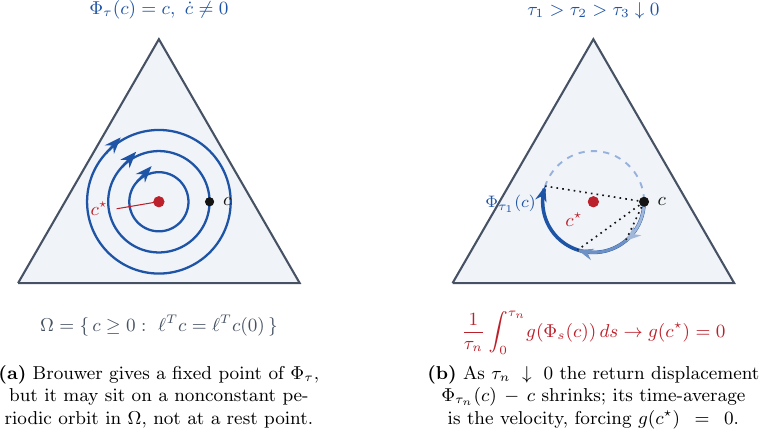}\caption{\protect\label{fig:Thm1}Illustration of a key part of the proof of
Theorem \ref{thm:existenceODE} (a) A single-time statement $\Phi_{\tau}(c^{\star})=c^{\star}$
for one fixed $\tau>0$ is not by itself enough to conclude $\dot{c}^{\star}=0$.
It asserts only that the orbit through $c^{\star}$ returns to $c^{\star}$
after time $\tau$, i.e. that the orbit is $\tau$-periodic. A nonconstant
periodic orbit whose period divides $\tau$ satisfies the same fixed-point
equation while having $\dot{c}\protect\neq0$ throughout, so a fixed
point of $\Phi_{\tau}$ need not be an equilibrium. (b) Letting $\tau_{n}\downarrow0$
removes this gap: the time-averaged velocity over $[0,\tau_{n}]$
vanishes at each $c_{n}$, and in the limit this average becomes the
instantaneous velocity $g(c^{\star})$, forcing $g(c^{\star})=0$.}
\end{figure}

\subsection{\protect\label{subsec:Thermodynamic-constraints-on-kfkr}Thermodynamically
feasible kinetic parameters}

We assume chemical potential $u\in\mathbb{R}^{m}$ is 
\[
u\coloneqq u^{\circ}+\mathcal{R}\mathcal{T}\ln\left(c\right)
\]
where $u^{\circ}\in\mathbb{R}^{m}$ is a vector of standard chemical
potentials. This is a simplification of chemical potential in biochemical
thermodynamics, but the mathematical form is the same for a variety
of more sophisticated formulations (cf Appendix \ref{subsec:Chemical-potential})
. The change in chemical potential for a system of biochemical reactions
is denoted
\[
\Delta u\coloneqq N^{T}\cdot u.
\]
At thermodynamic equilibrium for all reactions, the sum of substrate
chemical potentials equals the sum of product chemical potentials
for each reaction, so the change in chemical potential for all reactions
is zero and therefore 
\[
N^{T}\cdot u=0\Leftrightarrow\frac{-N^{T}\cdot u^{\circ}}{\mathcal{R}\mathcal{T}}=N^{T}\cdot\ln(c_{eq}),
\]
where $c_{eq}\in\mathbb{\mathbb{R}}^{m}$ is a vector of species concentrations
at equilibrium. At thermodynamic equilibrium, without a driving force,
the forward and reverse elementary reaction rates must be equal \emph{(detailed
balance} \cite{berry_physical_2000}). This requirement means that
elementary kinetic parameters are thermodynamically constrained, as
the following sequence of algebraic steps show
\begin{eqnarray}
v_{f}(c_{eq}|k_{f})-v_{r}(c_{eq}|k_{r}) & = & 0\nonumber \\
v_{f}(c_{eq}|k_{f}) & = & v_{r}(c_{eq}|k_{r})\nonumber \\
\exp(\ln(k_{f})+F^{T}\cdot\ln(c_{eq})) & = & \exp(\ln(k_{r})+R^{T}\cdot\ln(c_{eq}))\nonumber \\
\ln(k_{f})+F^{T}\cdot\ln(c_{eq}) & = & \ln(k_{r})+R^{T}\cdot\ln(c_{eq})\nonumber \\
\ln(k_{f})-\ln(k_{r}) & = & R^{T}\cdot\ln(c_{eq})-F^{T}\cdot\ln(c_{eq})\nonumber \\
\ln\left(\frac{k_{f}}{k_{r}}\right) & = & (-F+R)^{T}\cdot\ln(c_{eq})\nonumber \\
\ln\left(\frac{k_{f}}{k_{r}}\right) & = & N^{T}\cdot\ln(c_{eq})\nonumber \\
\ln\left(\frac{k_{f}}{k_{r}}\right) & = & -N^{T}\cdot\frac{u^{\circ}}{\mathcal{R}\mathcal{T}}\label{eq:thermoFeasibleKineticParam}
\end{eqnarray}
Elementary reaction kinetics (\ref{eq:elementaryKineticsR}) coupled
with the thermodynamic constraints in Eq. (\ref{eq:thermoFeasibleKineticParam})
is referred to as \emph{mass action kinetics}, represented by the
following pair of equation systems
\begin{eqnarray}
\dot{c} & = & N\cdot(\exp(\ln(k_{f})+F^{T}\cdot\ln(c))-\exp(\ln(k_{r})+R^{T}\cdot\ln(c))),\nonumber \\
\ln\left(\frac{k_{f}}{k_{r}}\right) & = & -N^{T}\cdot\frac{u^{\circ}}{\mathcal{R}\mathcal{T}}.\label{eq:steadyStateMassActionKinetics}
\end{eqnarray}

\subsection{Thermodynamic constraints on reaction rates}

Thermodynamic constraints on kinetic reactions imply thermodynamic
constraints on reaction rates. To observe this implication, start
with the definition of elementary reaction kinetics for the forward
and reverse rates in Eq. (\ref{eq:elementaryKineticsF}) and (\ref{eq:elementaryKineticsR})
and let $v_{f}=v_{f}(c\mid k_{f})$ and $v_{r}=v_{r}(c\mid k_{r})$,
so we have
\begin{eqnarray*}
\frac{v_{f}}{v_{r}} & = & \frac{\exp(\ln(k_{f})+F^{T}\cdot\ln(c))}{\exp(\ln(k_{r})+R^{T}\cdot\ln(c))}\\
 & = & \exp(\ln(k_{f})-\ln(k_{r})+F^{T}\cdot\ln(c)-R^{T}\cdot\ln(c))\\
 & = & \exp(\ln(k_{f})-\ln(k_{r}))\odot\exp(F^{T}\cdot\ln(c)-R^{T}\cdot\ln(c))\\
 & = & \exp(\ln(k_{f})-\ln(k_{r}))\odot\exp(-N^{T}\cdot\ln(c))\\
 & = & \left(\frac{k_{f}}{k_{r}}\right)\odot\exp(-N^{T}\cdot\ln(c)).
\end{eqnarray*}
Taking the logarithm of both sides, we have 
\begin{eqnarray*}
\ln\left(\frac{v_{f}}{v_{r}}\right) & = & \ln\left(\frac{k_{f}}{k_{r}}\right)-N^{T}\cdot\ln(c).
\end{eqnarray*}
Using the thermodynamic constraints on kinetic parameters in (\ref{eq:thermoFeasibleKineticParam})
we observe that

\begin{eqnarray*}
\ln\left(\frac{v_{f}}{v_{r}}\right) & = & -N^{T}\cdot\frac{u^{\circ}}{\mathcal{R}\mathcal{T}}-N^{T}\cdot\ln(c),\\
\mathcal{R}\mathcal{T}\ln\left(\frac{v_{f}}{v_{r}}\right) & = & -N^{T}\cdot(u^{\circ}+\mathcal{R}\mathcal{T}\ln(c)).
\end{eqnarray*}
Using the definition of chemical potential in Eq. (\ref{eq:chemicalPotential})
we then obtain

\begin{eqnarray}
\mathcal{R}\mathcal{T}\ln\left(\frac{v_{f}}{v_{r}}\right) & = & -N^{T}\cdot u\label{eq:logvfvrNtu}
\end{eqnarray}
which is a thermodynamic constraint on reaction rates that must hold
for any pair of forward and reverse rates and any potential at a given
instance, regardless of whether a system is at equilibrium or not,
and regardless if a system is in a dynamic or steady state. Eq. (\ref{eq:logvfvrNtu})
incorporates a representation of energy conservation, since each species
is assigned a single chemical potential. Eq. (\ref{eq:logvfvrNtu})
also incorporates a representation of the second law of thermodynamics,
since 
\begin{eqnarray*}
v_{f}=v_{r} & \Leftrightarrow & N^{T}\cdot u=0\\
v_{f}>v_{r} & \Leftrightarrow & N^{T}\cdot u<0\\
v_{f}<v_{r} & \Leftrightarrow & N^{T}\cdot u>0
\end{eqnarray*}
which ensures that the net rate of each reaction is zero at thermodynamic
equilibrium, and away from thermodynamic equilibrium the sign of net
rate is opposite to the sign of change in chemical potential, i.e.
, net rate is down a gradient in chemical potential.

\subsection{A thermodynamically open system admitting an elementary kinetic steady
state}

Given a stoichiometric matrix $N\in\mathbb{Z}^{m\times n}$ and a
moiety incidence matrix $L\in\mathbb{Z}_{\ge0}^{(m-r)\times m}$ consider
the \emph{cyclic stoichiometric matrix} \cite{ghaderi_structural_2020}
\begin{equation}
C\coloneqq\left[\begin{array}{cc}
N & -I\\
0 & L
\end{array}\right]\in\mathbb{Z}^{\left(2m-r\right)\times\left(n+m\right)}\label{eq:cyclicStoichiometricMatrix}
\end{equation}
where each of the additional $m$ columns is termed a \emph{perpetireaction},
which is a modelling construct that represents the transformation
of a single metabolite into its constituent conserved moieties. Appendix
\ref{sec:Cyclic-stoichiometric-matrix} establishes that $\textrm{rank}\left(C\right)=m$,
the matrix
\begin{equation}
\left[\begin{array}{c}
I_{n}\\
N
\end{array}\right]\in\mathbb{Z}^{(n+m)\times n}\label{eq:nullBasisC}
\end{equation}
is a basis for the right nullspace of $C$, and the matrix $\left[\begin{array}{cc}
L, & I_{m-r}\end{array}\right]\in\mathbb{N}^{(m-r)\times(2m-r)}$ is a basis for the left nullspace of $C$. Since every row of \ref{eq:nullBasisC}
contains at least one non-zero, every reaction in $C$ participates
in at least one vector in the nullspace of $C$. We shall refer back
to this property in a subsequent section. Since $\left[\begin{array}{cc}
L & I\end{array}\right]^{T}1>0$ every reaction in $C$ is stoichiometrically consistent \cite{gevorgyan_detection_2008}
so by Theorem \ref{thm:existenceODE}, the corresponding elementary
kinetic system admits at least one steady state.

Assume a system of chemical reactions defined by a cyclic stoichiometric
matrix \ref{eq:cyclicStoichiometricMatrix} with corresponding parameter
vectors $k_{f},k_{r}\in\mathbb{R}^{n}$ and $p_{f},p_{r}\in\mathbb{R}^{m}$.
If both are thermodynamically feasible then the Hadamard divisors
of forward and reverse parameters form a vector in the range of the
cyclic stoichiometric matrix, $\mathcal{R}(C^{T})$. That is, there
exists a $u\in\mathbb{R}^{m}$ and $q\in\mathbb{R}^{m-r}$ such that
\begin{eqnarray}
\left[\begin{array}{c}
\ln\left(k_{f}\oslash k_{r}\right)\\
\ln\left(p_{f}\oslash p_{r}\right)
\end{array}\right] & = & \left[\begin{array}{cc}
N & -I\\
0 & L
\end{array}\right]^{T}\cdot\left[\begin{array}{c}
u\\
q
\end{array}\right]\in\mathcal{R}(C^{T}).\label{eq:thermoFeasibleCyclic}
\end{eqnarray}
Equivalently

\begin{eqnarray*}
\ln\left(k_{f}\oslash k_{r}\right) & = & N^{T}\cdot u,\\
\ln\left(p_{f}\oslash p_{r}\right) & = & -u+L^{T}\cdot q.
\end{eqnarray*}
In contrast, if both $k_{f},k_{r}$ and $p_{f},p_{r}$ are thermodynamically
infeasible then the Hadamard divisors of forward and reverse parameters
form a vector in the nullspace of the cyclic stoichiometric matrix
$\mathcal{N}(C)$. That is

\begin{eqnarray*}
\left[\begin{array}{cc}
N & -I\\
0 & L
\end{array}\right]\cdot\left[\begin{array}{c}
\ln\left(k_{f}\oslash k_{r}\right)\\
\ln\left(p_{f}\oslash p_{r}\right)
\end{array}\right] & = & 0,
\end{eqnarray*}
or equivalently
\begin{eqnarray*}
N\ln\left(k_{f}\oslash k_{r}\right) & = & \ln\left(p_{f}\oslash p_{r}\right),\\
L\ln\left(p_{f}\oslash p_{r}\right) & = & 0.
\end{eqnarray*}
However, if $k_{f},k_{r}$ are thermodynamically feasible but $p_{f},p_{r}$
are thermodynamically infeasible, then there exists a $u\in\mathbb{R}^{m}$
such that
\begin{eqnarray*}
\ln\left(k_{f}\oslash k_{r}\right) & = & N^{T}\cdot u,\\
L\ln\left(p_{f}\oslash p_{r}\right) & = & 0.
\end{eqnarray*}

Let $\bar{N}\in\mathbb{Z}^{m\times r}$ denote a basis for the range
of the internal reactions $\mathcal{R}(N)$, then

\begin{eqnarray}
\ln\left(p_{f}\oslash p_{r}\right)=\bar{N}w & \Rightarrow & L\ln\left(p_{f}\oslash p_{r}\right)=0\label{eq:thermoInfeasiblePerpeti}
\end{eqnarray}
where $w\in\mathbb{R}^{r}$. Without loss of generality, let $\ln(p_{r})=0$,
therefore $\ln\left(p_{f}\right)=\bar{N}w.$ Let $\ell\in\mathbb{R}_{>0}^{m-r}$
denote the concentration of each conserved moiety. Given a cyclic
stoichiometric matrix (\ref{eq:cyclicStoichiometricMatrix}), with
the aforementioned parameterisation, the dynamical equation for the
corresponding elementary reaction kinetic system is
\begin{eqnarray}
\left[\begin{array}{cc}
N & -I\\
0 & L
\end{array}\right]\exp\left(\left[\begin{array}{c}
\ln(k_{f})\\
\ln(p_{f})
\end{array}\right]+\left[\begin{array}{cc}
F & I\\
0 & 0
\end{array}\right]^{T}\cdot\left[\begin{array}{c}
\ln(c)\\
\ln(\ell)
\end{array}\right]\right)\ldots\label{eq:cyclicDynamicalSystem}\\
-\left[\begin{array}{cc}
N & -I\\
0 & L
\end{array}\right]\exp\left(\left[\begin{array}{c}
\ln(k_{r})\\
\ln(p_{r})
\end{array}\right]+\left[\begin{array}{cc}
R & 0\\
0 & L
\end{array}\right]^{T}\cdot\left[\begin{array}{c}
\ln(c)\\
\ln(\ell)
\end{array}\right]\right) & = & \left[\begin{array}{c}
\dot{c}\\
\dot{\ell}
\end{array}\right],
\end{eqnarray}
and steady states satisfy
\begin{eqnarray}
(R-F)\cdot\left(\exp(\ln(k_{f})+F^{T}\cdot\ln(c))-\exp(\ln(k_{r})+R^{T}\cdot\ln(c))\right)\ldots\label{eq:cyclicSteadyState1}\\
-(R-F)\left(\exp(\ln(p_{f})+\ln(c))+\exp(\ln(p_{r})+L^{T}\cdot\ln(\ell))\right) & = & 0,\\
L\left(\exp(\ln(p_{f})+\ln(c))-\exp(\ln(p_{r})+L^{T}\cdot\ln(\ell))\right) & = & 0.\label{eq:cyclicSteadyState2}
\end{eqnarray}
By Theorem (\ref{thm:existenceODE}), we are assured a solution to
(\ref{eq:cyclicSteadyState1}) and (\ref{eq:cyclicSteadyState2})
exists. Moreover, since (\ref{eq:cyclicSteadyState2}) is implied
by (\ref{eq:cyclicSteadyState1}), then (\ref{eq:cyclicSteadyState2})
is redundant.

We now show that thermodynamic infeasibility of the exchange parameters
forces the guaranteed steady state out of equilibrium. A system of
cyclic mass-action kinetics is detailed balanced when every forward
elementary rate equals its reverse, $v_{f}=v_{r}$. The following
result shows that this cannot occur once the perpeti parameters violate
the Wegscheider conditions, even when the internal kinetic parameters
$k_{f},k_{r}$ remain thermodynamically feasible.
\begin{thm}
\label{thm:noDetailedBalance}Consider the cyclic stoichiometric system(\ref{eq:cyclicStoichiometricMatrix})
with parameters $k_{f},k_{r}\in\mathbb{R}^{n}$ and $p_{f},p_{r}\in\mathbb{R}^{m}$,
and let a steady state be guaranteed by Theorem \ref{thm:existenceODE}.
Assume $k_{f},k_{r}$ are thermodynamically feasible but $p_{f},p_{r}$
are thermodynamically infeasible, so that there exist $u\in\mathbb{R}^{m}$
and $w\in\mathbb{R}^{r}$ such that
\[
\ln\left(k_{f}\oslash k_{r}\right)=N^{T}\cdot u,\qquad\ln\left(p_{f}\oslash p_{r}\right)=\bar{N}w,
\]
where $\bar{N}$ is a basis for the range $\mathcal{R}(N)$. Then
the steady state does not satisfy detailed balance; that is, $v_{f}\neq v_{r}$.
\end{thm}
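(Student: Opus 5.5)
We argue by contradiction. Suppose the steady state $(c^{\star},\ell^{\star})$ guaranteed by Theorem \ref{thm:existenceODE} satisfies detailed balance, i.e.\ every forward elementary rate of the cyclic system (\ref{eq:cyclicDynamicalSystem}) equals its reverse. We read off the log-linear equalities this imposes, for the internal reactions and for the perpetireactions separately. Then we show these equalities force the perpeti parameters to be thermodynamically feasible in the sense of (\ref{eq:thermoFeasibleCyclic}), contradicting the hypothesis $\ln(p_{f}\oslash p_{r})=\bar{N}w$ with $\bar{N}w\neq0$.

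\textbf{Step 1 (internal reactions).} Detailed balance for the first $n$ columns gives, as in the derivation of (\ref{eq:thermoFeasibleKineticParam}),
\[
\ln(k_{f}\oslash k_{r})=N^{T}\cdot\ln(c^{\star}).
\]
We will not actually need to combine this with $\ln(k_{f}\oslash k_{r})=N^{T}u$ beyond noting consistency.

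\textbf{Step 2 (perpetireactions).} By the forward and reverse matrices in (\ref{eq:cyclicDynamicalSystem}), detailed balance for the $m$ perpetireactions reads $\ln(p_{f})+\ln(c^{\star})=\ln(p_{r})+L^{T}\ln(\ell^{\star})$, that is,
\[
\ln(p_{f}\oslash p_{r})=-\ln(c^{\star})+L^{T}\ln(\ell^{\star}).
\]
Substituting the hypothesis gives $\bar{N}w=-\ln(c^{\star})+L^{T}q$ with $q=\ln(\ell^{\star})$.

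\textbf{Step 3 (contradiction).} Multiply the identity of Step 2 by $\bar{N}^{T}$. Since $L\bar{N}=0$, we have $\bar{N}^{T}L^{T}=0$, so
\[
\bar{N}^{T}\bar{N}w=-\bar{N}^{T}\ln(c^{\star}).
\]
This alone is not contradictory, so the real content must come from combining Step 1 with the infeasibility assumption. The cleanest version is to show that detailed balance makes the whole parameter vector lie in $\mathcal{R}(C^{T})$, with $u=\ln(c^{\star})$ and $q=\ln(\ell^{\star})$, exactly as in (\ref{eq:thermoFeasibleCyclic}). That is, $p$ is then feasible. Its log-ratio lies in $\mathcal{R}(N)\cap(\text{span of }-\ln c^{\star}+\mathcal{R}(L^{T}))$, and the paper's dichotomy (feasible $\Leftrightarrow$ in $\mathcal{R}(C^{T})$, infeasible $\Leftrightarrow$ Eq.~(\ref{eq:thermoInfeasiblePerpeti}) with nonzero $w$) is meant to exclude this.

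\textbf{Main obstacle.} The obstacle is making "infeasible" precise. $\mathcal{R}(N)$ and $\mathcal{R}(L^{T})=\mathcal{N}(N^{T})^{\perp\perp}$ are orthogonal complements in $\mathbb{R}^{m}$, but $-\ln c^{\star}$ is free, so a vector in $\mathcal{R}(N)$ can still be written as $-\ln c^{\star}+L^{T}q$. The argument therefore has to use the coupling with Step 1: $\ln c^{\star}$ is already pinned by $N^{T}\ln c^{\star}=\ln(k_{f}\oslash k_{r})=N^{T}u$. Hence $\ln c^{\star}=u+L^{T}z$ for some $z$, so that $\bar{N}w=-u+L^{T}(q-z)$. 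Projecting onto $\mathcal{R}(N)$ gives $\bar{N}w=-P_{\mathcal{R}(N)}u$.

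To close the argument I would interpret thermodynamic infeasibility of $p$ as the condition that $\bar{N}w$ is not of this form relative to the given $u$, i.e.\ $\bar{N}w\neq -P_{\mathcal{R}(N)}u$. This is the Wegscheider-type incompatibility the paper intends. Under that reading, Step 3 yields the contradiction, so $v_{f}\neq v_{r}$ for at least one reaction. Stating and justifying this interpretation, rather than the algebra, is where I expect the difficulty to lie.
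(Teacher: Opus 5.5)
Your argument is correct and is essentially the paper's proof. In Steps 1--2 you show that the combined log-ratio vector equals $C^{T}[\ln c^{\star};\ln\ell^{\star}]\in\mathcal{R}(C^{T})$. The paper closes the argument by pairing this with the nullspace basis $[I_{n};N]$ of $C$, which gives $a\coloneqq\ln(k_{f}\oslash k_{r})+N^{T}\ln(p_{f}\oslash p_{r})=N^{T}(u+\bar{N}w)=0$. It then simply identifies thermodynamic infeasibility of $p_{f},p_{r}$ with $a\neq0$.

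Your closing condition $\bar{N}w\neq-P_{\mathcal{R}(N)}u$ is exactly the same condition, because $N^{T}x=0\iff P_{\mathcal{R}(N)}x=0$ and $\bar{N}w\in\mathcal{R}(N)$. So the interpretation you were unsure about is just the negation of the criterion (\ref{eq:thermoFeasibleCyclic}), since $\mathcal{R}(C^{T})=\mathcal{N}(C)^{\perp}$. The projection in Step 3 is therefore unnecessary. Your suspicion that the bare existence of $u,w$ in the statement is not enough is also justified: choosing $\bar{N}w=-P_{\mathcal{R}(N)}u$ gives $a=0$ and a detailed-balanced equilibrium.

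The one ingredient the paper has and you lack is the use of steadiness. The paper assumes only internal $v_{f}=v_{r}$. It then uses the species block $Nv=b$ of the steady-state equations to force the perpeti net flux $b$ to vanish, which yields the stronger conclusion that the internal reactions themselves carry nonzero net flux. Your argument never uses that the state is steady, so it only shows that some reaction of the cyclic system, possibly a perpetireaction, is unbalanced. Adding that one step recovers the paper's conclusion.
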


\begin{proof}
By (\ref{eq:thermoFeasibleCyclic}) the cyclic system is thermodynamically
feasible exactly when the combined log-parameter vector lies in $\mathcal{R}(C^{T})$.
Since (\ref{eq:nullBasisC}) is a basis for the right nullspace of
$C$ and $\mathcal{R}(C^{T})=\mathcal{N}(C)^{\perp}$, this membership
is equivalent to the single cycle condition obtained by pairing the
parameter vector with that basis,
\begin{equation}
a\coloneqq\left[\begin{array}{c}
I_{n}\\
N
\end{array}\right]^{T}\cdot\left[\begin{array}{c}
\ln\left(k_{f}\oslash k_{r}\right)\\
\ln\left(p_{f}\oslash p_{r}\right)
\end{array}\right]=\ln\left(k_{f}\oslash k_{r}\right)+N^{T}\cdot\ln\left(p_{f}\oslash p_{r}\right)=0.\label{eq:cycleAffinity}
\end{equation}
With $\ln\left(k_{f}\oslash k_{r}\right)=N^{T}\cdot u$ and $\ln\left(p_{f}\oslash p_{r}\right)=\bar{N}w$,
condition (\ref{eq:cycleAffinity}) reads $N^{T}\cdot(u+\bar{N}w)=0$;
the assumed thermodynamic infeasibility of $p_{f},p_{r}$ is precisely
the failure of this identity, so $a\neq0$. Suppose now, to the contrary,
that the steady state were detailed balanced, that is $v_{f}=v_{r}$.
Then the internal net flux $v\coloneqq v_{f}-v_{r}$ vanishes, and
the species balance (\ref{eq:cyclicSteadyState1}), which reads $Nv=b$
for the perpeti net flux $b\coloneqq p_{f}\odot c-p_{r}\odot\exp(L^{T}\cdot\ln\ell)$,
forces $b=0$ as well. Every forward rate then equals its reverse,
so there is a state $(c,\ell)$ with $\ln\left(k_{f}\oslash k_{r}\right)=N^{T}\cdot\ln c$
and $\ln\left(p_{f}\oslash p_{r}\right)=-\ln c+L^{T}\cdot\ln\ell$;
equivalently
\[
\left[\begin{array}{c}
\ln\left(k_{f}\oslash k_{r}\right)\\
\ln\left(p_{f}\oslash p_{r}\right)
\end{array}\right]=C^{T}\cdot\left[\begin{array}{c}
\ln c\\
\ln\ell
\end{array}\right]\in\mathcal{R}(C^{T}).
\]
Pairing this with the nullspace basis(\ref{eq:nullBasisC}) and using
$C\left[I_{n};N\right]=0$ gives $a=\left[I_{n};N\right]^{T}\cdot C^{T}\cdot\left[\ln c;\ln\ell\right]=\left(C\left[I_{n};N\right]\right)^{T}\cdot\left[\ln c;\ln\ell\right]=0$,
contradicting $a\neq0$. Hence $v_{f}\neq v_{r}$, so the internal
reactions carry a nonzero net flux and the steady state does not satisfy
detailed balance.
\end{proof}
The obstruction is the single cycle affinity $a$ in (\ref{eq:cycleAffinity}):
thermodynamic feasibility of the full cyclic system is equivalent
to $a=0$. By (\ref{eq:thermoInfeasiblePerpeti}) the infeasible exchange
parameters satisfy $L\ln\left(p_{f}\oslash p_{r}\right)=0$, so $\ln\left(p_{f}\oslash p_{r}\right)\in\mathcal{R}(N)$.
The entire thermodynamic driving therefore lies within the internal
cycle space $\mathcal{R}(N)$, and it is the perpeti reactions that
sustain the resulting nonequilibrium steady state. Note that detailed
balance is the condition $v_{f}=v_{r}$; Theorem\ref{thm:noDetailedBalance}
asserts its negation.

The strictly positive conservation vector required by Theorem\ref{thm:existenceODE}
is furnished by the cyclic construction itself. Every left-null vector
of $C$ has the form $\ell=\left[L^{T}\cdot y;\,y\right]$ with $y\in\mathbb{R}^{m-r}$:
its metabolite block is $L^{T}\cdot y$ and its moiety block is $y$.
Choosing $y>0$, and recalling that $L\geq0$ with every metabolite
belonging to at least one conserved moiety, so that each column of
$L$ contains a positive entry, one obtains $L^{T}\cdot y>0$ and
hence $\ell>0$ strictly. The cyclic matrix $C$ is therefore constructed
precisely so that the strictly positive left-null vector demanded
by Theorem\ref{thm:existenceODE} is guaranteed to exist, and this
vector depends only on the structural matrices $N$ and $L$, not
on any kinetic parameter values. In particular, arbitrarily setting
the parameters $p_{f},p_{r}$ but keeping $C$, $L$ or $N$ invariant,
means $\ell>0$ is preserved and a steady state continues to exist.
Consistently with Theorem\ref{thm:noDetailedBalance}, thermodynamic
infeasibility of the perpeti parameters is not merely permitted but
is the mechanism that drives this guaranteed steady state away from
equilibrium, so that the internal reactions carry a nonzero net flux.

\subsection{Mathematical classification of elementary reaction kinetics}

Classification of a function in mathematical terms is important when
one seeks to identify whether there exist established algorithms and
software that either enables one to obtain a numerical solution that
is in the zero set of that function, or numerically optimise over
the zero set of that function. This section attempts to mathematically
classify the function (\ref{eq:dcdt-1-1-1}), and can be omitted on
a first pass, but is a topic we shall return to in the discussion.
Recall that $Q\in\mathbb{R}^{m\times m}$ is positive definite if
$x^{T}\cdot Q\cdot x>0$ for all $x\in\mathbb{R}^{m}\ne0$ while $A$
is indefinite if there exist $x_{1}$ and $x_{2}$ such that $x_{1}^{T}\cdot A\cdot x_{1}>0$
and $x_{2}^{T}\cdot Ax_{2}<0$.

Given the cyclic stoichiometric matrix in \ref{eq:cyclicStoichiometricMatrix},
the corresponding forward and reverse stoichiometric matrices are
denoted
\[
\bar{F}\;\coloneqq\;\begin{bmatrix}F & I_{m}\\
0 & 0
\end{bmatrix}\qquad\bar{R}\;\coloneqq\;\begin{bmatrix}R & 0\\
0 & L
\end{bmatrix}\in\mathbb{R}^{(2m-r)\times(n+m)}.
\]
 Consider the following coordinate transformations 
\begin{eqnarray*}
k & \coloneqq & [\ln(k_{f});\ln(p_{f});\ln(k_{r});\ln(p_{r})],\\
x & \coloneqq & [\ln(c);\ln(\ell)].
\end{eqnarray*}
Therefore the set of steady sates is

\begin{eqnarray}
f(x) & \coloneqq & \left([\bar{R},\bar{F}]-[\bar{F},\bar{R}]\right)\cdot\exp(k+[\bar{F},\bar{R}]^{T}\cdot x)=0,\label{eq:f(x)fundamental}\\
 & = & [\bar{R},\bar{F}]\cdot\exp(k+[\bar{F},\bar{R}]^{T}\cdot x)-[\bar{F},\bar{R}]\cdot\exp(k+[\bar{F},\bar{R}]^{T}\cdot x)=0.\nonumber 
\end{eqnarray}

By definition $[\bar{F},\bar{R}]$ has non-negative entries. Furthermore,
under biochemically realistic assumptions \parencite{fleming_conditions_2016},
$[\bar{F},\bar{R}]$ is full row rank. This is a form of kinetic consistency,
in the sense that the stoichiometric signatures of the metabolites
across the forward and reverse elementary reactions, are linearly
independent, and it underpins the duality between fluxes and concentrations
in biochemical networks \parencite{fleming_conditions_2016}. Define
the following split of $f(x)$ into two parts
\[
f(x)\coloneqq\nabla\varphi(x)-a(x)
\]
where

\begin{eqnarray*}
\varphi(x) & := & \mathbf{1}^{T}\cdot\exp(k+[\bar{F},\bar{R}]^{T}\cdot x),\\
\nabla\varphi(x) & = & [\bar{F},\bar{R}]\cdot\exp(k+[\bar{F},\bar{R}]^{T}\cdot x)>0,\\
\nabla^{2}\varphi(x) & = & [\bar{F},\bar{R}]\cdot\textrm{diag}\left(\exp\left(k+[\bar{F},\bar{R}]^{T}\cdot x\right)\right)\cdot[\bar{F},\bar{R}]^{T},\\
a(x) & := & [\bar{R},\bar{F}]\cdot\exp(k+[\bar{F},\bar{R}]^{T}\cdot x)>0,\\
\nabla a(x) & = & [\bar{F},\bar{R}]\cdot\textrm{diag}\left(\exp\left(k+[\bar{F},\bar{R}]^{T}\cdot x\right)\right)\cdot[\bar{R},\bar{F}]^{T},\\
\nabla f(x) & = & \nabla^{2}\varphi(x)-\nabla a(x).
\end{eqnarray*}
Since $\exp$ is a convex function and $[\bar{F},\bar{R}]$ is full
row rank then $\varphi(x)$ is a strictly convex function of $x$,
so its gradient $\nabla\varphi(x)\in\mathbb{R}_{\ge0}^{m}$ is strictly
monotone and is equal to the rate of consumption of each species,
and $\nabla^{2}\varphi(x)\in\mathbb{R}^{m\times m}$ is a symmetric
positive definite matrix. Full row rank of $[\bar{F},\bar{R}]$ makes
the Hessian $\nabla^{2}\varphi(x)$ is positive definite rather than
merely positive semidefinite. In contrast, $a(x)\in\mathbb{R}_{\ge0}^{m}$
is the rate of production of each species and $\nabla a(x)\in\mathbb{R}^{m\times m}$
is an asymmetric indefinite matrix, so it is not the gradient of any
scalar valued function. Also, the Hessian of a strictly convex function
is positive definite and vice versa. Therefore $\nabla f(x)$ is the
difference between a positive definite and an asymmetric square matrix,
which may be indefinite, in which case $\nabla f(x)$ is not the Hessian
of any convex function \cite{boyd_convex_2004}. This makes it difficult
to solve for $x$ such that $f(x)=0$ with established algorithms
using convex optimisation or monotone variational inequality theory
\parencite{rockafellar_variational_1998}. For a restricted class
of biochemical networks, previously we established that $f(\cdot)$
is not monotone, however it is \emph{duplomonotone}, that is $f(x)^{T}\cdot\nabla f(x)\cdot f(x)\ge0$.
This enabled formulation of a globally convergent algorithm whose
stationary states solve $f(x)=0$, but it is not known if $f(\cdot)$
is duplomonotone in general \cite{artacho_globally_2014}. Moreover,
even if $f(\cdot)$ is duplomonotone, there still remains the unsolved
problem to optimise over the set $\Omega\coloneqq\{x\,|\,f(x)=0\}$.

Note that, to express concentration dynamics in the logarithmic coordinate,
differentiate $c=\exp(x)$ component-wise
\[
\dot{c}=\exp(x)\odot\dot{x}=\exp(x)\odot\dot{x}.
\]
 Because $\exp(x)$ is componentwise invertible the velocity field
in the logarithmic coordinate is
\[
\dot{x}=\exp(x)\odot\cdot\dot{c}=-\exp(x)\odot f(x).
\]
The steady-state set is identical in both coordinates,
\[
\dot{c}=0\quad\iff\quad\dot{x}=0\quad\iff\quad f(x)=0.
\]
As such, up to coordinate transformation, $f(x)=0$ is the fundamental
equation defining the set of steady states of a network.

\section{Constraint-based modelling of biochemical networks}

Instead of trying to directly find a solution to Eq (\ref{eq:massBalanceKinetics}),
or in addition Eq. (\ref{eq:steadyStateMassActionKinetics}), the
field of constraint-based modelling of biochemical networks arose,
whereby a subset of the constraints represented by (\ref{eq:steadyStateMassActionKinetics})
are either removed or relaxed, and an optimisation problem was formulated
to select an optimal vector that also satisfied the resulting simplified
system of equations. The motivation for introducing the particular
optimisation problems in this section is that each retains a subset
of the (primal) constraints on a kinetically feasible steady state
but their objectives are different and each of their optimality conditions
correspond to different subsets of the feasible set of kinetically
feasible steady states. In particular, as explained in Section \ref{sec:vkSteady},
Theorem \ref{thm:stationaryIsSteady} establishes that these subsets
form a nested chain of sets $\mathcal{S}_{4}\subseteq\mathcal{S}_{3}\subseteq\mathcal{S}_{2}\subseteq\mathcal{S}_{1}$,
with the variational kinetics algorithm optimising towards the innermost
set $\mathcal{S}_{4}$ of elementary kinetic steady states. This is
further interpreted in Section \ref{sec:vkSteady}. The first such
optimisation problem was a linear optimisation problem.

\subsection{Linear optimisation: Introduction}

The standard form of a linear optimisation problem is

\begin{eqnarray}
\underset{x}{\text{min}}\;c^{T}\cdot x\nonumber \\
\text{s.t.}\;A\cdot x\le b &  & \textrm{}\label{eq:LP}
\end{eqnarray}
where $c\in\mathbb{R}^{n}$ is a coefficient vector, \emph{$c^{T}\cdot x$}
is a \emph{linear} \emph{objective function,} $x\in\mathbb{R}^{n}$
is a vector of variables $c\in\mathbb{R}^{n}$ is a given vector of
coefficients, $A\in\mathbb{R}^{m\times n}$ is a given linear constraint
matrix and $b\in\mathbb{R}^{m}$ is a given vector of data. The constraints
in Eq. (\ref{eq:LP}) define a polyhedral convex set, which may either
be empty (no solution exists), admit one solution (hence no need for
an optimisation problem) or admit an infinite number of solutions
(well posed optimisation problem). Optimisation of the\emph{ }objective
function is expressed as minimisation by convention and results in
identification of an optimal vector $x^{\star}$ wherein the value
of $c^{T}\cdot x^{\star}$ is minimal, i.e. there does not exist another
vector satisfying $A\cdot x\le b$ such that the value of the objective
is any less. Two distinct vectors $x_{1}^{\star}$ and $x_{2}^{\star}$
are referred to as alternate optimal solutions if $c^{T}\cdot x_{1}^{\star}=c^{T}\cdot x_{2}^{\star}$.
In general, there exist an infinite number of optimal vectors $x^{\star}$
that each have the same minimal value of the\emph{ }objective function.\footnote{Such problems are solved using a numerical optimisation solver that
represents real valued numbers in finite precision, so typically a
value of $x_{\textrm{i}}^{\star}=10^{-6}$ or less should be considered
zero (though this depends on the optimisation solver).}

\subsection{Flux balance analysis}

Rather than modelling a system of reactions in terms of mass action
kinetics, an alternative approach is to represent net reaction rate
as a variable rather than as a function of kinetic parameters and
species concentrations. Recall that when assuming elementary reaction
kinetics, net rate is the difference between a pair of unidirectional
reactions, each of which is an explicit function of kinetic parameters
and species concentrations, cf Eq. (\ref{eq:netElementaryReactionFlux}).
An advantage of this approach is that the set of net rate vectors
that satisfy steady state is a polyhedral convex set. This enables
one to formulate and efficiently solve various optimisation problems
where the solution is a rate vector that is optimal with respect to
a particular objective function.

When the objective function is linear, this approach is known as \emph{flux
balance analysis} and is represented by the optimisation problem

\begin{eqnarray}
\underset{v,w}{\text{min}}\;c^{T}\cdot v+d^{T}\cdot w\nonumber \\
\text{s.t.}\;N\cdot v+B\cdot w=0 &  & \textrm{}\label{eq:FBA}\\
l\le v\le u,
\end{eqnarray}
where $v\in\mathbb{R}^{n}$ is a vector of net rates, one for each
internal reaction in the biochemical system and $c\in\mathbb{R}^{n}$
is a given vector of linear objective coefficients, one for each internal
reaction. Furthermore, to represent the exchange of mass between the
system and its environment $w\in\mathbb{R}^{k}$ is a vector of net
rates, one for each external reaction and $d\in\mathbb{R}^{k}$ is
a given vector of linear objective coefficients, one for each exchange
reaction\emph{.} Often, in applications $c=0$ and only one $d_{\textrm{i}}\ne0$.

As before, $N\in\mathbb{R}^{m\times n}$ is a stoichiometric matrix
representing internal reactions and $B$ is a stoichiometric matrix
representing external reactions. Compare the first constraint in Problem
(\ref{eq:FBA}) with the steady state constraint in Eq. (\ref{eq:massBalance}).
Both seem similar but they are fundamentally different as in Eq. (\ref{eq:massBalance})
net rate is a function of concentrations and kinetic parameters, while
in Problem (\ref{eq:FBA}) net rate is a variable vector that is not
a function of any other variables or parameters, which makes it substantially
less constrained.

In Problem (\ref{eq:FBA}) the last constraint is a set of box constraints,
represented by lower and upper bounds on the net rate for each reaction,
$l,u\in\mathbb{R}^{n}$. Qualitatively, these bounds may be set based
on known directionality of biochemical reactions. The convention is
that net rate is positive for a reaction that proceeds from substrates
to products in a left to right direction when expressed as a reaction
equation. In most metabolic networks, irreversible reactions are represented
with bounds like $0\le v\le u$ so net rate must be positive. Reversible
reactions are represented as $-\infty\le v\le\infty$ which means
that box constraint is not active. Quantitatively such bounds may
be set based on experimentally measured reaction rates, e.g., in an
\emph{in vitro} culture, by measuring the difference between fresh
and spent medium concentrations, one may estimate the rate of uptake
or secretion of a metabolite by a system, which can be used to set
quantitative bounds on exchange reaction rates. Generally, there are
also box constraints on net external reaction rates.

\subsection{Cycle-free flux balance analysis}

Inspired by Desouki et al. \cite{desouki_cyclefreeflux_2015}, we
previously observed \cite{fleming_cardinality_2023} that a thermodynamically
feasible flux may be computed by a single linear optimisation problem

\begin{equation}
\begin{array}{ll}
\underset{z,w}{\textrm{min}} & \left\Vert v\right\Vert _{1}+c^{T}\cdot w\\
\;\textrm{s.t.} & N\cdot v+B\cdot w=0\;:y\\
 & l_{}\leq v\leq u_{}\;:s\\
 & l_{w}\leq w\leq u_{w}\;:t
\end{array}\label{eq:TFBA}
\end{equation}
where $l$ and $u$ denote lower and upper bounds on internal reaction
fluxes, with the constraint that $l\in\{0,-\infty\}^{n}$ and $u\in\{0,\infty\}^{n}$,
while $l_{w}\in\mathbb{R}^{k}$ and $u_{w}\in\mathbb{R}^{k}$ denote
lower and upper bounds on external reaction fluxes, respectively.
The vectors $y\in\mathbb{R}^{m}$, $s\in\mathbb{R}^{m}$ and $t\in\mathbb{R}^{m},$
are dual variables to the steady state constraint, bounds on internal
reaction rates and bounds on external reaction rates, respectively.

The optimality conditions of Problem \ref{eq:TFBA} are
\begin{eqnarray*}
N\cdot v^{\star}+B\cdot w^{\star} & = & 0\\
\nabla\left\Vert v^{\star}\right\Vert _{1}=\textrm{sign}(v^{\star}) & = & -N^{T}\cdot y^{\star}-s^{\star}\\
c & = & -B^{T}\cdot y^{\star}-t^{\star}
\end{eqnarray*}
where $-y^{\star}$ may interpreted as a vector proportional to the
chemical potentials of each metabolite and $-N_{\textrm{j}}^{T}\cdot y^{\star}$
is proportional to the change of chemical potential for reaction $j$.
When $l\in\{0,-\infty\}^{n}$ and $u\in\{0,\infty\}^{n}$ then the
optimal dual variable $s_{\textrm{j}}^{\star}$ to the inequality
constraints on internal reaction $j$, is non-zero if and only if
$v_{\textrm{j}}^{\star}$ is zero, that is $z_{\textrm{j}}^{\star}=0\iff s_{\textrm{j}}^{\star}\ne0$
and $v_{\textrm{j}}^{\star}\ne0\iff s_{\textrm{j}}^{\star}=0$. Therefore
$v_{\textrm{j}}^{\star}\ne0\iff$$\textrm{sign}(v_{\textrm{j}}^{\star})=-N_{\textrm{j}}^{T}\cdot y^{\star}$,
which enforces energy conservation and the second law of thermodynamics
on the optimal vector of nonzero internal reaction fluxes \cite{fleming_variational_2012}.
However, when $z_{\textrm{j}}=0$, this is a relaxation of the thermodynamic
sign constraint $\textrm{sign}(v_{\textrm{j}})=-\textrm{sign}(N_{\textrm{j}}^{T}\cdot y)$,
since $v_{\textrm{j}}^{\star}=0\iff$$v_{\textrm{j}}^{\star}=N_{\textrm{j}}^{T}\cdot y^{\star}+s_{\textrm{j}}^{\star}=0$,
so $v_{\textrm{j}}^{\star}=0\iff N_{\textrm{j}}^{T}\cdot y^{\star}=-s_{\textrm{j}}^{\star}$
and therefore $v_{\textrm{j}}=0\nLeftrightarrow N_{\textrm{j}}^{T}\cdot y=0$.
Biochemically, one may interpret this relaxation as saying that a
zero internal reaction flux does not imply a zero change in chemical
potential. For example, a nonzero change in chemical potential may
still be consistent with zero net flux when an enzyme is absent for
the corresponding reaction. To summarise, herein we define thermodynamic
consistency as the requirement that any nonzero net flux be driven
by a change in chemical potential for the corresponding reaction,
that is
\begin{eqnarray*}
v_{\textrm{j}}^{\star}>0 & \Rightarrow & N_{\textrm{j}}^{T}\cdot y^{\star}<0,\\
v_{\textrm{j}}^{\star}<0 & \Rightarrow & N_{\textrm{j}}^{T}\cdot y^{\star}>0.
\end{eqnarray*}
However, a reactions must admit a non-zero flux that is thermodynamically
consistent to be deemed thermodynamically flux consistent, so we omit
reactions where zero net flux is the only thermodynamically consistent
solution obtained.

\subsection{Convex optimisation: Introduction}

\subsubsection{Convex functions}

In convex optimisation, the constraints define a polyhedral convex,
set while the objective may be non-linear, but it must be \emph{convex
function}. The \emph{epigraph} of a function is the set of points
lying on or above the function's graph. A function is convex if and
only if its epigraph is a convex set. The function $\phi(x):\mathbb{R}^{n}\rightarrow\mathbb{R}$
given by $x^{T}\cdot x$ is convex. Given $h\in\mathbb{R}^{n}$, the
quadratic function $\phi(x\mid h):\mathbb{R}^{n}\rightarrow\mathbb{R}$,
given by
\[
\phi(x\mid h)\coloneqq\nicefrac{1}{2}(h-x)^{T}\cdot(h-x)
\]
 is convex. The exponential function $\phi(x):\mathbb{R}^{n}\rightarrow\mathbb{R}$,
given by
\[
\phi(x)\coloneqq\mathbf{1}^{T}\cdot\exp(x)
\]
 is convex. The negative entropy function $\phi(x):\mathbb{R}_{\ge0}^{n}\rightarrow\mathbb{R}$,
given by
\[
\phi(x)\coloneqq x^{T}\cdot\ln(x)
\]
 is also convex.

\subsubsection{Standard form convex optimisation}

The standard form of a convex optimisation problem is

\begin{eqnarray}
\underset{x}{\text{min}}\;\phi(x)\nonumber \\
\text{s.t.}\;A\cdot x\le b &  & \textrm{}\label{eq:CO}
\end{eqnarray}
where \emph{$\phi(x)$} is a \emph{convex} \emph{objective function,}
$x\in\mathbb{R}^{n}$ is a vector of variables, $A\in\mathbb{R}^{m\times n}$
is a given linear constraint matrix and $b\in\mathbb{R}^{m}$ is a
given vector of data.

\subsection{Entropic flux balance analysis}

There are several shortcomings with flux balance analysis. An important
one is that the prediction of internal reaction rate may not, and
for genome-scale computational models usually do not, satisfy thermodynamic
constraints. In particular do not satisfy energy conservation and
the second law of thermodynamics. Previously, we developed a novel
method to satisfy the aforementioned constraints using a convex optimisation
problem \cite{fleming_variational_2012}. Since the negative entropy
function is convex, the following is a convex optimisation problem

\begin{equation}
\begin{array}{ll}
\underset{v_{f},v_{r}>0}{\textrm{min}} & v_{f}^{T}\cdot\ln(v_{f})+v_{r}^{T}\cdot\ln(v_{r})\\
\textrm{s.t.} & N\cdot(v_{f}-v_{r})+B\cdot w=0\quad:y
\end{array}\tag{}\label{eq:EP1}
\end{equation}
where we introduce the dual variable $y\in\mathbb{R}^{m}$, which
by convention is written to the right hand side of the primal constraints.
The Lagrangian for this problem is
\[
\mathcal{L}(v_{f},v_{r},w,y)=v_{f}^{T}\cdot\ln(v_{f})+v_{r}^{T}\cdot\ln(v_{r})-y^{T}\cdot(N\cdot(v_{f}-v_{r})+B\cdot w)
\]
and by setting its partial derivatives to equal zero we obtain the
optimality conditions for Problem (\ref{eq:EP1}), which are
\begin{eqnarray}
\frac{\partial\mathcal{L}}{\partial v_{f}} & = & \ln(v_{f}^{\star})+1+N^{T}\cdot y^{\star}=0\label{eq:EPvf}\\
\frac{\partial\mathcal{L}}{\partial v_{r}} & = & \ln(v_{r}^{\star})+1-N^{T}\cdot y^{\star}=0\label{eq:EPvr}\\
\frac{\partial\mathcal{L}}{\partial w} & = & B^{T}\cdot y^{\star}=0\nonumber \\
\frac{\partial\mathcal{L}}{\partial y} & = & N\cdot(v_{f}^{\star}-v_{r}^{\star})+B\cdot w^{\star}=0\nonumber 
\end{eqnarray}
By subtracting (\ref{eq:EPvr}) from (\ref{eq:EPvf}) we obtain
\begin{eqnarray}
\ln(v_{f}^{\star})-\ln(v_{r}^{\star})+2N^{T}\cdot y^{\star} & = & 0\nonumber \\
\ln\left(\frac{v_{f}^{\star}}{v_{r}^{\star}}\right) & = & -2N^{T}\cdot y^{\star}.\label{eq:lnvfvrNty}
\end{eqnarray}
Comparing (\ref{eq:lnvfvrNty}) with (\ref{eq:logvfvrNtu}), we can
see that the optimality conditions of Problem (\ref{eq:EP1}) satisfy
the desired thermodynamic constraints and $2y^{*}$ can be interpreted
as a vector proportional to the chemical potential of each species.

Consider Problem (\ref{eq:EP1}) with the addition of box constraints
and a linear objective on external net rates
\begin{eqnarray}
\underset{v_{f},v_{r},w}{\text{min}}\;v_{f}^{T}\cdot\ln(v_{f})+v_{r}^{T}\cdot\ln(v_{r})+d^{T}\cdot w\nonumber \\
\text{s.t.}\;N\cdot(v_{f}-v_{r})+B\cdot w=0 &  & :-y\label{eq:EntropicFBA}\\
l-(v_{f}-v_{r})\le0 &  & :-z_{l}\\
(v_{f}-v_{r})-u\le0 &  & :-z_{u}
\end{eqnarray}
where $l,u\in\mathbb{R}^{n}$ are given lower and upper bounds on
internal net rate and where $y\in\mathbb{R}^{m}$, $z_{l}\in\mathbb{R}_{\ge0}^{n}$
and $z_{u}\in\mathbb{R}_{\ge0}^{n}$ are dual variables to the steady
state constraints, lower bounds on internal net rate, and upper bounds
on internal net rate, respectively. A dual variable to an equality
constraint may be positive or negative, while a dual variable to an
inequality constraint must be restricted in sign. By convention, they
are non-negative. The Lagrangian for problem (\ref{eq:EntropicFBA})
is
\begin{equation}
\mathcal{L}(v_{f},v_{r},w,y,z_{l},z_{u})\coloneqq v_{f}^{T}\cdot\ln(v_{f})+v_{r}^{T}\cdot\ln(v_{r})-y^{T}\cdot(N\cdot(v_{f}-v_{r})+B\cdot w)-z_{l}^{T}\cdot(l-(v_{f}-v_{r}))-z_{u}^{T}\cdot((v_{f}-v_{r})-u)+d^{T}\cdot w\label{eq:LagrangianQP-FBA1-1}
\end{equation}
and its partial derivatives are

\begin{eqnarray}
\frac{\partial\mathcal{L}}{\partial v_{f}} & = & \ln(v_{f}^{\star})+1-N^{T}\cdot y^{\star}+z_{l}^{\star}-z_{u}^{\star}=0\label{eq:EPvf-1}\\
\frac{\partial\mathcal{L}}{\partial v_{r}} & = & \ln(v_{r}^{\star})+1+N^{T}\cdot y^{\star}-z_{l}^{\star}+z_{u}^{\star}=0\label{eq:EPvr-1}\\
\frac{\partial\mathcal{L}}{\partial w} & = & d-B^{T}\cdot y^{\star}=0\nonumber \\
\frac{\partial\mathcal{L}}{\partial y} & = & N\cdot(v_{f}^{\star}-v_{r}^{\star})+B\cdot w^{\star}=0\nonumber 
\end{eqnarray}
Additionally, it can be shown (5.5 in \cite{boyd_convex_2009}) that
an optimum solution satisfies
\begin{eqnarray}
z_{l}^{\star}\odot(l-(v_{f}^{\star}-v_{r}^{\star})) & = & 0\label{eq:compSlackl}\\
z_{u}^{\star}\odot((v_{f}^{\star}-v_{r}^{\star})-u) & = & 0\label{eq:compSlacku}
\end{eqnarray}
where $\odot$ denotes the component-wise (Hadamard) product of a
pair of vector arguments. These are referred to as \emph{complementary
slackness} conditions, because 
\begin{eqnarray*}
z_{lj}^{\star}>0 & \Leftrightarrow & l_{\textrm{j}}-(v_{fj}^{\star}-v_{rj}^{\star})=0,\\
z_{lj}^{\star}=0 & \Leftrightarrow & l_{\textrm{j}}-(v_{fj}^{\star}-v_{rj}^{\star})<0,
\end{eqnarray*}
and the same for the upper bound constraints, under strict complementarity.
By subtracting (\ref{eq:EPvr-1}) from (\ref{eq:EPvf-1}) we obtain
\begin{eqnarray}
\ln(v_{f}^{\star})-\ln(v_{r}^{\star})-2N^{T}\cdot y^{\star}+2(z_{l}^{\star}-z_{u}^{\star}) & = & 0,\nonumber \\
\ln\left(\frac{v_{f}^{\star}}{v_{r}^{\star}}\right) & = & 2\left(N^{T}\cdot y^{\star}-(z_{l}^{\star}-z_{u}^{\star})\right).\label{eq:lnvfvrNty-1}
\end{eqnarray}
The dual variables to the box constraints in (\ref{eq:lnvfvrNty-1})
could potentially interfere with satisfaction of \ref{eq:logvfvrNtu},
so context-specific models must be generated in a thermodynamically
consistent way \cite{preciat_xomicstomodel_2025}, and bounds on net
rate must be set in a thermodynamically consistent way to avoid this
issue \cite{fleming_cardinality_2023}. Essentially, a solution to
Problem (\ref{eq:EntropicFBA}) will be thermodynamically feasible
provided Problem (\ref{eq:EntropicFBA}) is posed in a way that admits
a thermodynamically feasible solution \cite{fleming_cardinality_2023}.
The existence of a thermodynamically feasible solution is necessary,
but not sufficient, for the existence of a kinetically feasible solution,
as elaborated further below (cf Section (\ref{sec:vkSteady})).

\emph{Entropic flux balance analysis} is a parameterised variant of
Problem (\ref{eq:EP1}) that also enables penalisation of deviation
from measured rates \cite{preciat_mechanistic_2023}. Compared with
a variety of constraint-based modelling approaches, entropic flux
balance analysis has been shown to enable superior predictions of
reaction rates in a context specific model of dopaminergic neuronal
metabolism. . However, entropy maximisation alone it has several shortcomings.
Each primal solution ($v_{f}^{\star},v_{r}^{\star}$) to an entropic
flux balance analysis problem is a unique function of input parameters,
but it is not obvious what the most appropriate parameterisation is.
Maximisation of the relative entropy of unidirectional fluxes, with
a prior derived from transcriptomic data, has been demonstrated to
further increase prediction accuracy \parencite{amestica-toledo_thermodynamically_2026},
which partly addresses the question of the ideal parameters. However,
entropy maximisation tends to bias net rate to reactions whose stoichiometric
coefficients are large in magnitude because, all else being equal,
for a single unit of rate, a reaction with large stoichiometric coefficient
will move more mass than one with a small stoichiometric coefficient.
This issue arises because of reactions in a model with high molecularity,
which are lumped representations of sets of reactions.

With (relative) entropy maximisation, the predicted potentials are
linearly dependent, since the stoichiometric matrix is row rank deficient.
Therefore, only the predicted change in chemical potential ($N^{T}\cdot y^{\star}$)
is unique, not the potential vector $y^{\star}$ itself. The variables
experimentally measured most frequently are species concentrations
while (entropic) flux balance analysis predicts rates and change in
chemical potential. This makes comparison of measurements and predictions
difficult. When concentrations of molecular species are not represented
as variables, it makes it difficult to incorporate data on metabolite
concentrations. While predicted rates are thermodynamically feasible,
they may not satisfy known reaction rate laws and as such they are
not guaranteed to be kinetically feasible. Because known reaction
rate laws are not represented, an important established feature of
(bio)chemistry is not represented, potentially resulting in prediction
artefacts. This motivates a search for novel modelling methods that
also incorporate kinetic constraints, yet retain the theoretical and
numerical advantages of convex optimisation methods.

\section{\protect\label{sec:Variational-elementary-kinetics}Variational elementary
kinetics}

\subsection{Conic optimisation: Introduction}

In \emph{conic optimisation}, the objective is linear, but the constraints
are an intersection of a polyhedral convex set and one or more\emph{
convex cones}. A cone $\mathcal{K}$ is proper when it is (a) closed
(contains its boundary or more technically limit points), (b) pointed,
$\mathcal{K}\cap(-\mathcal{K})=\{0\}$ and (c) it has nonempty interior.
A \emph{proper} \emph{convex cone} defines a convex set. Henceforth,
for brevity, cone means proper convex cone. A \emph{conic inequality
}is a constraint 
\[
x\in\mathcal{K}
\]
where $\mathcal{K}$ is a proper convex cone. Each conic inequality
satisfies certain properties, e.g., a conic inequality is preserved
by non-negative linear combinations, that is
\[
x\in\mathcal{K},y\in\mathcal{K}\Rightarrow\alpha x+\beta y\in\mathcal{K}\;\forall\alpha,\beta\ge0.
\]
A simple example of a cone is the set defined by the non-negative
orthant, 
\[
\mathcal{K}_{\ge0}\coloneqq\left\{ x\in\mathbb{R}^{n}\mid x\ge0\right\} \Leftrightarrow x\in\mathcal{K}_{\ge0}.
\]
The boundary of a cone is denoted $\partial\mathcal{K}$. The boundary
of the non-negative orthant cone is where there exists at least one
coordinate that is zero, that is
\[
\partial\mathcal{K}_{\ge0}\coloneqq\left\{ x\in\mathbb{R}^{n}\mid x\ge0,\exists x_{\textrm{j}}=0\right\} 
\]
The interior of a cone is denoted $\textrm{int}\mathcal{K}$. The
interior of the non-negative orthant cone is the set of points where
all coordinates are strictly positive, that is

\[
\textrm{int}\mathcal{K}_{\ge0}\coloneqq\left\{ x\in\mathbb{R}^{n}\mid x>0\right\} .
\]
Conic optimisation is focused on optimisation over certain types of
cones that admit the expression of a well behaved barrier function,
that is a smooth function with mathematical properties that enable
an interior point algorithms to enforce feasibility with respect to
a conic equality, while optimising within the cone. For example, the
\emph{exponential cone} is a $3$ dimensional cone defined\footnote{Warning: some articles in the literature define the same exponential
cone but with a different convention for the order of the variables
$x_{1},x_{2},x_{3}$. Throughout, we stick to the convention adopted
by the MOSEK conic optimisation solver an its associated documentation,
e.g., \href{https://docs.mosek.com/modeling-cookbook/expo.html}{https://docs.mosek.com/modeling-cookbook/expo.html}} as the closure of the set of points that satisfy
\begin{equation}
\mathcal{K}_{exp}\coloneqq\left\{ (x_{1},x_{2},x_{3})\mid x_{1}\ge x_{2}\exp\left(\frac{x_{3}}{x_{2}}\right),x_{1},x_{2}>0\right\} \Leftrightarrow\left(\begin{array}{c}
x_{1}\\
x_{2}\\
x_{3}
\end{array}\right)\in\mathcal{K}_{exp}\subset\mathbb{R}^{3}.\label{eq:Kexp}
\end{equation}
Observe that the epigraph of an exponential function is a two dimensional
slice of an exponential cone where $x_{2}=1$, that is
\[
\left(\begin{array}{c}
x_{1}\\
1\\
x_{3}
\end{array}\right)\in\mathcal{K}_{exp}\Leftrightarrow x_{1}\ge\exp\left(x_{3}\right),x_{1}>0.
\]
Equivalently, the exponential cone may be defined in logarithmic rather
than exponential terms as the closure of the set of points that satisfy
\begin{equation}
\mathcal{K}_{exp}\coloneqq\left\{ (x_{1},x_{2},x_{3})\mid-x_{3}\ge x_{2}\ln\left(\frac{x_{2}}{x_{1}}\right),x_{1},x_{2}>0\right\} \Leftrightarrow\left(\begin{array}{c}
x_{1}\\
x_{2}\\
x_{3}
\end{array}\right)\in\mathcal{K}_{exp}\subset\mathbb{R}^{3}.\label{eq:KexpLogTerms}
\end{equation}
The \emph{rotated quadratic cone} is a $2+n$ dimensional cone
\[
\mathcal{Q}^{2+n}\coloneqq\left\{ (x,y,z)\mid2xy\ge z_{1}^{2}+z_{2}^{2}+\ldots+z_{n}^{2},x\ge0,y\ge0\right\} \Leftrightarrow\left(\begin{array}{c}
x\\
y\\
z_{1}\\
\vdots\\
z_{n}
\end{array}\right)\in\mathcal{Q}^{2+n},
\]
which is a $2+n$ dimensional cone. Any positive semidefinite matrix
may be factorised as $H^{T}\cdot H$, where $H\in\mathbb{R}^{k\times n}$
and $k=\textrm{rank}(H)$. Given such a factor $H$ and a vector $h\in\mathbb{R}^{n}$
a convex quadratic set can be represented by a rotated quadratic cone
of the form
\begin{equation}
t\ge\nicefrac{1}{2}\left(h-x\right)^{T}\cdot H^{T}\cdot H\cdot\left(h-x\right)\Leftrightarrow\left(\begin{array}{c}
t\\
1\\
H\cdot(x-h)
\end{array}\right)\in\mathcal{Q}^{2+n},\label{eq:affineQuadratic}
\end{equation}
which is an affine quadratic constraint.

Any convex constraint can be represented as a conic inequality, with
minor modifications to make $\mathcal{K}$ proper. A cone can be constructed
from any convex function $\phi(x):\mathbb{R}^{n}\rightarrow\mathbb{R}$
because the epigraph of the perspective of a convex function is a
convex cone. It follows, that any constraint involving a convex function
$\phi(x_{1}):\mathbb{R}^{m}\rightarrow\mathbb{R}$ can be represented
as a conic linear inequality, that is
\[
x_{3}\ge\phi(x_{1})\Leftrightarrow\left(\begin{array}{c}
x_{1}\\
1\\
x_{3}
\end{array}\right)\in\mathcal{K}_{\phi}.
\]
Let $A\in\mathbb{R}^{m\times n}$, $b\in\mathbb{R}^{m}$, $F\in\mathbb{R}^{k\times n}$
and $g\in\mathbb{R}^{k}$, then the standard form for a \emph{conic
optimisation} problem is

\begin{eqnarray}
\underset{x}{\text{min}} & c^{T}\cdot x\nonumber \\
\text{s.t.} & A\cdot x\le b, & \textrm{}\label{eq:conicP}\\
 & F\cdot x+g\in\mathcal{K}.
\end{eqnarray}
In practice, current conic optimisation solvers support a limited
number of types of cones, therefore whether a problem can be solved
by a given solver depends on which types of cone the solver supports.

\subsection{\protect\label{subsec:Variational-kinetics:-ele}Conification of
elementary reaction kinetics}

Our novel approach is to reformulate Eq. (\ref{eq:massBalanceKinetics})
into a system of equations that represent a convex set, which is amenable
to optimisation, yet satisfy Eq. (\ref{eq:massBalanceKinetics}) at
the solution to an optimisation problem. Eq. (\ref{eq:massBalanceKinetics})
may be rewritten as
\begin{align}
N\cdot(v_{f}-v_{r})+B\cdot w=0,\label{eq:NvfvfBw}\\
v_{f}=\exp(\ln(k_{f})+F^{T}\cdot\ln(c)),\label{eq:vf=00003D}\\
v_{r}=\exp(\ln(k_{r})+R^{T}\cdot\ln(c)),\label{eq:vr=00003D}
\end{align}
which still represents the same non-convex set as Eq. (\ref{eq:massBalanceKinetics}).
By introducing logarithmic variables in place of kinetic parameters
and concentration, 
\begin{eqnarray}
lnk_{f} & \coloneqq & \ln\left(k_{f}\right),\label{eq:lnkf}\\
lnk_{r} & \coloneqq & \ln\left(k_{r}\right),\label{eq:lnkr}\\
lnc & \coloneqq & \ln\left(c\right),\label{eq:lnc}
\end{eqnarray}
then Eq. (\ref{eq:NvfvfBw}), (\ref{eq:vf=00003D}) and (\ref{eq:vr=00003D})
become

\begin{align}
N\cdot(v_{f}-v_{r})+B\cdot w=0,\label{eq:NvBw0}\\
v_{f}=\exp(lnk_{f}+F^{T}\cdot lnc),\label{eq:vf}\\
v_{r}=\exp(lnk_{r}+R^{T}\cdot lnc),\label{eq:vr}\\
k_{f}=\exp(lnk_{f}),\label{eq:kf}\\
k_{r}=\exp(lnk_{r}),\label{eq:kr}\\
c=\exp(lnc),\label{eq:c}
\end{align}
which, again, still represents the same non-convex set as Eq. (\ref{eq:massBalanceKinetics}).
Assume we have a solution $\left(v_{f},v_{r},lnk_{f},lnk_{r},lnc\right)$
to Eq. (\ref{eq:NvBw0})-(\ref{eq:vr}), then it is straightforward
to compute the exponentials of $lnk_{f},lnk_{r},lnc$ to obtain $k_{f},k_{r},c$
so for the sake of clarity, we will momentarily omit (\ref{eq:kf}),
(\ref{eq:kr}) and (\ref{eq:c}) from consideration.

Eq. (\ref{eq:vf=00003D}) and (\ref{eq:vr=00003D}) each involves
an exponential term, which can be relaxed by replacing each equality
by an inequality to give

\begin{align}
N\cdot(v_{f}-v_{r})+B\cdot w=0,\label{eq:NvBw}\\
v_{f}\ge\exp(lnk_{f}+F^{T}\cdot lnc),\label{eq:vf_ge}\\
v_{r}\ge\exp(lnk_{r}+R^{T}\cdot lnc),\label{eq:vr_ge}
\end{align}
It is clear that Eq. (\ref{eq:NvBw}) represents a (polyhedral) convex
set. However, each of the inequalities (\ref{eq:vf_ge}) and (\ref{eq:vr_ge})
also represents a convex set, even though it is nonlinear. This can
be appreciated by recognising that the exponential is a convex function
which is the same as saying that the epigraph of an exponential function
is a convex set. In fact, each term of the form $x_{1}\ge\exp(x_{3})$
is a two dimensional plane within an exponential cone (\ref{eq:Kexp})
with $x_{2}=1.$ That is, one may express each of the constraints
involving an exponential term as either of the two equivalent forms
\[
x_{1}\ge\exp\left(x_{3}\right)\iff\left(\begin{array}{c}
x_{1}\\
1\\
x_{3}
\end{array}\right)\in\mathcal{K}_{exp},
\]
where it is implicit in this representation that $x_{2}=1$. Therefore,
Eq. (\ref{eq:NvBw}), (\ref{eq:vf_ge}) and (\ref{eq:vr_ge}) may
be expressed as

\begin{align}
N\cdot(v_{f}-v_{r})+B\cdot w=0,\label{eq:NvBw-1}\\
\left(\begin{array}{c}
v_{f}\\
1\\
F^{T}\cdot lnc+lnk_{f}
\end{array}\right)\in\mathcal{K}_{exp}^{n},\label{eq:vf_ge-1}\\
\left(\begin{array}{c}
v_{r}\\
1\\
R^{T}\cdot lnc+lnk_{r}
\end{array}\right)\in\mathcal{K}_{exp}^{n},\label{eq:vr_ge-1}
\end{align}
where there are a set of $n$ exponential cone constraints, one for
each forward reaction, and a second set $n$ exponential cone constraints,
one for each reverse reaction. The order of the variables is to be
interpreted as set of $n$ component-wise exponential cone constraints,
where $1$ is a $n\times1$ dimensional vector of constants. That
is 
\[
\left(\begin{array}{c}
v_{f}\\
1\\
F^{T}\cdot lnc+lnk_{f}
\end{array}\right)\in\mathcal{K}_{exp}^{n}\iff\left(\begin{array}{c}
v_{fj}\\
1\\
F_{\textrm{:,j}}^{T}\cdot lnc+lnk_{f,\textrm{j}}
\end{array}\right)\in\mathcal{K}_{exp}\;\forall\;\textrm{j}\;\in1\ldots n.
\]
Without loss of generality, assume that we are given logarithmic kinetic
parameters $lnk_{f}$ and $lnk_{r}$. Constraints (\ref{eq:NvBw-1})-(\ref{eq:vr_ge-1})
define a nonlinear, yet convex set, at the boundary of which are solutions
to Eq. (\ref{eq:NvBw0})-(\ref{eq:vr}), which may be obtained at
the optimum of the following \emph{conic optimisation} problem

\begin{eqnarray}
\underset{v_{f},v_{r}w,lnc}{\text{min}}\qquad c_{v_{f}}^{T}\cdot v_{f}+c_{v_{r}}^{T}\cdot v_{r}+c_{lnc}^{T}\cdot lnc\nonumber \\
\text{s.t.}\qquad N\cdot(v_{f}-v_{r})+B\cdot w=0 &  & \textrm{}\label{eq:vk1}\\
\left(\begin{array}{c}
v_{f}\\
1\\
F^{T}\cdot lnc+lnk_{f}
\end{array}\right)\in\mathcal{K}_{exp}^{n} & \iff & v_{f}\ge\exp\left(F^{T}\cdot lnc+lnk_{f}\right)\\
\left(\begin{array}{c}
v_{r}\\
1\\
R^{T}\cdot lnc+lnk_{r}
\end{array}\right)\in\mathcal{K}_{exp}^{n} & \iff & v_{r}\ge\exp\left(R^{T}\cdot lnc+lnk_{r}\right).
\end{eqnarray}
This begs the question, how does one choose the parameters $c_{v_{f}}$,
$c_{v_{r}}$ and $c_{lnc}$ such that the optimal solution to this
problem lies at the boundary of each exponential cone, where $v_{f}^{\star}=\exp\left(F^{T}\cdot lnc^{\star}+lnk_{f}\right)$
and $v_{r}^{\star}=\exp\left(R^{T}\cdot lnc^{\star}+lnk_{r}\right)$?
An answer to this questions requires some additional conic optimisation
theory, which is framed in terms of generic conic optimisation problem.

\subsection{\protect\label{subsec:Correspondence-with-reaction-kinetics-minor}Correspondence
with reaction kinetics}

The correspondence between the reaction kinetic problem \ref{eq:vk1}
and the generic conic optimisation problem \ref{eq:conicP} can be
understood from the following

\[
x\coloneqq\left[\begin{array}{c}
v_{f}\\
v_{r}\\
1\\
1\\
lnc\\
w
\end{array}\right],\qquad c\coloneqq\left[\begin{array}{c}
c_{v_{f}}\\
c_{v_{r}}\\
0\\
0\\
c_{lnc}\\
0
\end{array}\right],\qquad A\coloneqq\left[\begin{array}{ccccc}
N & -N & 0 & 0 & 0\end{array}\right],\qquad b\coloneqq B\cdot w,
\]

and

\[
F\coloneqq\left[\begin{array}{cccccc}
I & 0 & 0 & 0 & 0 & 0\\
0 & 0 & 0 & I & 0 & 0\\
0 & 0 & F^{T} & 0 & 0 & 0\\
0 & I & 0 & 0 & 0 & 0\\
0 & 0 & 0 & 0 & I & 0\\
0 & 0 & R^{T} & 0 & 0 & 0
\end{array}\right],g\coloneqq\left[\begin{array}{c}
0\\
0\\
lnk_{f}\\
0\\
0\\
lnk_{r}
\end{array}\right],
\]

where 
\begin{eqnarray*}
F_{1} & \coloneqq & \left[\begin{array}{cccccc}
I & 0 & 0 & 0 & 0 & 0\\
0 & I & 0 & 0 & 0 & 0
\end{array}\right],\qquad g_{1}\coloneqq\left[\begin{array}{c}
0\\
0
\end{array}\right],\\
F_{2} & \coloneqq & \left[\begin{array}{cccccc}
0 & 0 & 0 & I & 0 & 0\\
0 & 0 & 0 & 0 & I & 0
\end{array}\right],\qquad g_{1}\coloneqq\left[\begin{array}{c}
0\\
0
\end{array}\right],\\
F_{3} & \coloneqq & \left[\begin{array}{cccccc}
0 & 0 & F^{T} & 0 & 0 & 0\\
0 & 0 & R^{T} & 0 & 0 & 0
\end{array}\right],\qquad g_{1}\coloneqq\left[\begin{array}{c}
lnk_{f}\\
lnk_{f}
\end{array}\right],
\end{eqnarray*}
and therefore the exponential cone constraints are
\[
F_{1}\cdot x+g_{1}\ge\exp\left(F_{3}\cdot x+g_{3}\right),\quad F_{2}\cdot x+g_{2}=1,
\]
with equality of each row when the corresponding exponential cone
constraint is active.

\subsection{Conic optimisation: optimality conditions}

\subsubsection{Dual cone}

We define a \emph{dual cone }as 
\[
\mathcal{K}^{\star}\coloneqq\left\{ s\in\mathbb{R}^{n}\mid x^{T}\cdot s\ge0\;\forall x\in\mathcal{K}\right\} .
\]
Since $x\in\mathcal{K}$, $s\in\mathcal{K}^{\star}$ and $x^{T}\cdot s\ge0$
then $x$ and $s$ are referred to as \emph{primal} and \emph{dual}
variables, as they reside within primal and dual cones. Where $\mathcal{K}$
is a proper convex cone, the dual cone $\mathcal{K}^{\star}$ has
the following properties, (a) it is a proper cone, (b) the dual of
the dual cone is the primal cone, that is $\left(\mathcal{K}^{\star}\right)^{\star}=\mathcal{K}$,
and (c) the interior of the dual cone $\textrm{int}\mathcal{K}^{\star}$
is given by
\[
\textrm{int}\mathcal{K}^{\star}=\left\{ s\in\mathbb{R}^{n}\mid s\ne0,\:x\ne0,\:x^{T}\cdot s>0\;\forall x\in\mathcal{K}\right\} .
\]
Let $A\in\mathbb{R}^{m\times n}$, then dual of a linear subspace
is the orthogonal subspace. For example let the nullspace be $\mathcal{L}\coloneqq\left\{ x\in\mathbb{R}^{n}\mid Ax=0\right\} $
then the dual to the nullspace is $\mathcal{L}^{\perp}=\mathcal{L}^{\star}$,
which is the row space $\mathcal{L}^{\perp}=\mathcal{L}^{\star}=\left\{ s\in\mathbb{R}^{n}\mid s=A^{T}\cdot z,z\in\mathbb{R}^{m}\right\} $.
The dual of the exponential cone is the closure 
\begin{equation}
\mathcal{K}_{exp}^{\star}\coloneqq\textrm{cl}\left\{ (s_{1},s_{2},s_{3})\mid s_{1}\ge-s_{3}\exp\left(\frac{s_{2}}{s_{3}}-1\right),s_{1}>0,s_{3}<0\right\} \Leftrightarrow\left(\begin{array}{c}
s_{1}\\
s_{2}\\
s_{3}
\end{array}\right)\in\mathcal{K}_{exp}^{\star}.\label{eq:dualExpCone}
\end{equation}
The non-negative orthant cone 
\[
\mathcal{K}\coloneqq\mathbb{R}_{\ge0}^{n}
\]
is \emph{self-dual} since the dual of the non-negative orthant is
the non-negative orthant, that is

\[
\mathbb{R}_{\ge0}^{n}=\left\{ s\mid x^{T}\cdot s\ge0,\forall x\in R_{\ge0}^{n}\right\} =\mathcal{K}^{\star}.
\]

\subsubsection{Optimality conditions}

Let $A\in\mathbb{R}^{m\times n}$, $b\in\mathbb{R}^{m}$, $F\in\mathbb{R}^{3k\times n}$
and $g\in\mathbb{R}^{3k}$, then consider the primal \emph{exponential
conic linear optimisation} problem

\begin{eqnarray}
\underset{x}{\text{min}} & c^{T}\cdot x\nonumber \\
\text{s.t.} & A\cdot x=b & :y\label{eq:primalConicLPv}\\
 & F\cdot x+g\in\mathcal{K}_{exp}^{k} & :s
\end{eqnarray}
where $y\in\mathbb{R}_{}^{m}$ is a vector of dual variables to the
linear equality constraints and $s\in\mathbb{R}^{3k}$ is a vector
of dual variables to each conically constrained term. The \emph{Lagrangian}
analogue of problem (\ref{eq:primalConicLPv}) is
\begin{eqnarray}
\mathcal{L}(x,y,s) & \coloneqq & c^{T}\cdot x-y^{T}\cdot(A\cdot x-b)-s^{T}\cdot(F\cdot x+g)\label{eq:LagrangianConicLP}\\
 &  & F\cdot x+g\in\mathcal{K}\nonumber \\
 &  & s\in\mathcal{K}^{\star}
\end{eqnarray}
The gradient of the Lagrangian with respect to $x$ and $y$ are zero
at the optimum of problem (\ref{eq:primalConicLPv}), that is
\begin{eqnarray*}
\frac{\partial\mathcal{L}(x,y,s)}{\partial x} & = & c-A^{T}\cdot y^{\star}-F^{T}\cdot s^{\star}=0\\
\frac{\partial\mathcal{L}(x,y,s)}{\partial y} & = & b-Ax^{\star}=0.
\end{eqnarray*}
Since $F\cdot x+g\in\mathcal{K}$ and $s\in\mathcal{K}^{\star}$ we
have $s^{T}\cdot(F\cdot x+g)\ge0$ because they are primal and dual
variables. At the optimum of \ref{eq:LagrangianConicLP} we have the
complementarity condition
\[
s^{\star T}\cdot(F\cdot x^{\star}+g)=0,
\]
because if $s^{T}\cdot(F\cdot x+g)>0$ then it would be possible to
further minimise the Lagrangian, which contradicts minimality, therefore
$s^{T}\cdot(F\cdot x+g)=0$. This is a complementary condition, rather
than complementary slackness as in (\ref{eq:compSlackl}) or (\ref{eq:compSlacku})
as $s^{T}\cdot(F\cdot x+g)=0$ does not imply that $s\odot(F\cdot x+g)=0.$

Actually, there is one complementarity condition involving 3 primal
and 3 dual terms per primal exponential conic constraint. Specifically,
each cone corresponds to one complementarity constraint of the form
\[
s_{1}(F\cdot x+g)_{1}+s_{2}(F\cdot x+g)_{2}+s_{2}(F\cdot x+g)_{2}=0.
\]
We now express these conditions for a set of $k$ exponential cones
in matrix-vector form. Let $B\in\{0,1\}^{k\times3k}$ be a matrix
where $k$ is the number of exponential cones in the cone product
$\mathcal{K}_{exp}^{k}$, and $3k$ is the number of rows of $F$,
each corresponding to a conically constrained term. Let $B_{\textrm{i,j}}=1$
if cone $i$ involves conically constrained term $j$ and zero otherwise.
Then the set of complementarity conditions may be expressed as $B\cdot(s^{\star}\odot(F\cdot x^{\star}+g))=0$.
Combining the aforementioned constraints then $x$, $y$ and $s$
are optimal if and only if 
\begin{eqnarray}
A\cdot x^{\star}-b & = & 0,\quad:y^{\star}\label{eq:COoptimality1}\\
c-A^{T}\cdot y^{\star}-F^{T}\cdot s^{\star} & = & 0,\quad:x^{\star}\\
F\cdot x^{\star}+g & \in & \mathcal{K},\quad:s^{\star}\\
s^{\star} & \in & \mathcal{K}^{\star},\\
B\cdot(s^{\star}\odot(F\cdot x^{\star}+g)) & = & 0,\label{eq:COoptimality5}
\end{eqnarray}
which represent the optimality conditions of Problem \ref{eq:primalConicLPv},
and the corresponding dual variables.

\subsubsection{Primal and dual problems in conic optimisation}

Let $F\in\mathbb{R}^{k\times n}$ and $g\in\mathbb{R}^{k}$. Note
that we follow the notation in the conic optimisation community by
using $F$ to denote the matrix in the affine conic constraints, which
clashes with the use of the same upper case roman letter for forward
stoichiometric matrix. Let $c^{\star}\coloneqq c^{T}\cdot x^{\star}$
denote the optimal value of the objective of the following primal
\emph{conic linear optimisation} problem

\begin{eqnarray}
\underset{x}{\text{min}} & c^{T}\cdot x\nonumber \\
\text{s.t.} & F\cdot x+g\in\mathcal{K} & :\textrm{s}\label{eq:primalConicLP}
\end{eqnarray}
Let $d^{\star}\coloneqq g^{T}\cdot y^{\star}$ be the optimal value
of the following \emph{dual conic linear optimisation} problem

\begin{eqnarray}
\underset{y}{\text{max}} & g^{T}\cdot y\nonumber \\
\text{s.t.} & c-F^{T}\cdot y=0 & \textrm{:}x\label{eq:dualConicLP}\\
 & y\in\mathcal{K}^{\star}
\end{eqnarray}
then, without exception, \emph{weak duality} holds
\[
c^{\star}=c^{T}\cdot x^{\star}\ge d^{\star}=-g^{T}\cdot y^{\star}.
\]
If the primal or dual problem is strictly feasible, then \emph{strong
duality} holds, that is $c^{\star}=d^{\star}$. If the primal is \emph{strictly}
feasible, then the dual optimum is attained if, $d^{\star}$ is finite.
If $d^{\star}$ is not finite, a solver will report that the dual
problem is unbounded. If the dual is \emph{strictly} feasible, then
the primal optimum is attained, if $c^{\star}$ is finite. If $c^{\star}$
is not finite, a solver will report that the primal problem is unbounded.
This is more restrictive than linear programming duality, where the
strong duality holds if the primal or dual problem is feasible (not
strictly feasible).

\subsection{Variational elementary kinetics: optimality conditions}

The primal optimisation problem in Problem (\ref{eq:vk1}), with the
addition of dual variables is

\begin{eqnarray}
\underset{v_{f},v_{r},w,lnc}{\text{min}}\qquad c_{v_{f}}^{T}\cdot v_{f}+c_{v_{r}}^{T}\cdot v_{r}+c_{lnc}^{T}\cdot lnc\nonumber \\
\text{s.t.}\qquad N\cdot(v_{f}-v_{r})+B\cdot w=0 &  & :-y\label{eq:vk1_s}\\
\left(\begin{array}{c}
v_{f}\\
1\\
F^{T}\cdot lnc+lnk_{f}
\end{array}\right)\in\mathcal{K}_{exp}^{n} &  & :-\left(\begin{array}{c}
s_{vf}\\
s_{f1}\\
s_{F}
\end{array}\right)\\
\left(\begin{array}{c}
v_{r}\\
1\\
R^{T}\cdot lnc+lnk_{r}
\end{array}\right)\in\mathcal{K}_{exp}^{n} &  & :-\left(\begin{array}{c}
s_{vr}\\
s_{r1}\\
s_{R}
\end{array}\right)
\end{eqnarray}
Note that $s$ is used to denote any dual variable to a cone constraint,
its subscript is reflective of the corresponding primal term and $s_{vf},s_{vr}s_{f1},s_{r1},s_{F},s_{R}\in\mathbb{R}^{n}$.
The \emph{Lagrangian} corresponding to Problem (\ref{eq:vk1_s}) is

\begin{eqnarray}
\mathcal{L}(v_{f},v_{r},w,lnc,y,s) & \coloneqq & c_{v_{f}}^{T}\cdot v_{f}+c_{v_{r}}^{T}\cdot v_{r}+c_{lnc}^{T}\cdot lnc-y^{T}\cdot(N\cdot(v_{f}-v_{r})+B\cdot w)\label{eq:LagrangianVK1}\\
 &  & -\left(\begin{array}{c}
s_{vf}\\
s_{f1}\\
s_{F}
\end{array}\right)^{T}\cdot\left(\begin{array}{c}
v_{f}\\
1\\
F^{T}\cdot lnc+lnk_{f}
\end{array}\right)-\left(\begin{array}{c}
s_{vr}\\
s_{r1}\\
s_{R}
\end{array}\right)^{T}\cdot\left(\begin{array}{c}
v_{r}\\
1\\
R^{T}\cdot lnc+lnk_{r}
\end{array}\right)\\
 &  & \left(\begin{array}{c}
v_{f}\\
1\\
F^{T}\cdot lnc+lnk_{f}
\end{array}\right)\in\mathcal{K}_{exp}^{n},\left(\begin{array}{c}
v_{r}\\
1\\
R^{T}\cdot lnc+lnk_{r}
\end{array}\right)\in\mathcal{K}_{exp}^{n}\\
 &  & \left(\begin{array}{c}
s_{vf}\\
s_{f1}\\
s_{F}
\end{array}\right)\in\mathcal{K}_{exp}^{n\star},\left(\begin{array}{c}
s_{vr}\\
s_{r1}\\
s_{R}
\end{array}\right)\in\mathcal{K}_{exp}^{n\star}\nonumber 
\end{eqnarray}
where the last two pairs of terms represent the requirement for primal
and dual terms to be constrained to lie within primal and dual exponential
cones. The optimality conditions may be obtained by (a) setting the
partial derivatives of the Lagrangian with respect to the variables
$v_{f},v_{r},w,lnc,y$ to zero, that is

(\ref{eq:vk1_s})
\begin{eqnarray}
\frac{\partial\mathcal{L}}{\partial v_{f}} & = & c_{v_{f}}-N^{T}\cdot y^{\star}-s_{vf}^{\star}=0\label{eq:dL_vf}\\
\frac{\partial\mathcal{L}}{\partial v_{r}} & = & c_{v_{r}}+N^{T}\cdot y^{\star}-s_{vr}^{\star}=0\label{eq:dL_vr}\\
\frac{\partial\mathcal{L}}{\partial w} & = & -B^{T}\cdot y^{\star}=0\nonumber \\
\frac{\partial\mathcal{L}}{\partial lnc} & = & c_{lnc}-F\cdot s_{F}^{\star}-R\cdot s_{R}^{\star}=0\label{eq:dL_p}\\
\frac{\partial\mathcal{L}}{\partial y} & = & N\cdot(v_{f}^{\star}-v_{r}^{\star})+B\cdot w^{\star}=0\nonumber 
\end{eqnarray}
(b) expressing the complementarity conditions between primal and dual
variables, and

\begin{eqnarray}
\left(\begin{array}{c}
s_{vf}^{\star}\\
s_{f1}^{\star}\\
s_{F}^{\star}
\end{array}\right)^{T}\cdot\left(\begin{array}{c}
v_{f}^{\star}\\
1\\
F^{T}\cdot lnc^{\star}+lnk_{f}
\end{array}\right) & = & 0\label{eq:complemenatarityF}\\
\left(\begin{array}{c}
s_{vr}^{\star}\\
s_{r1}^{\star}\\
s_{R}^{\star}
\end{array}\right)^{T}\cdot\left(\begin{array}{c}
v_{r}^{\star}\\
1\\
R^{T}\cdot lnc^{\star}+lnk_{r}
\end{array}\right) & = & 0\nonumber 
\end{eqnarray}
(c) specifying that the primal and dual terms are constrained to reside
within primal and dual conic cones, respectively, with

\begin{eqnarray*}
\left(\begin{array}{c}
v_{f}^{\star}\\
1\\
F^{T}\cdot lnc^{\star}+lnk_{f}
\end{array}\right)\in\mathcal{K}_{exp}^{n} &  & \left(\begin{array}{c}
v_{r}^{\star}\\
1\\
R^{T}\cdot lnc^{\star}+lnk_{r}
\end{array}\right)\in\mathcal{K}_{exp}^{n}\\
\left(\begin{array}{c}
s_{vf}^{\star}\\
s_{f1}^{\star}\\
s_{F}^{\star}
\end{array}\right)\in\mathcal{K}_{exp}^{^{\star}n} &  & \left(\begin{array}{c}
s_{vr}^{\star}\\
s_{r1}^{\star}\\
s_{R}^{\star}
\end{array}\right)\in\mathcal{K}_{exp}^{^{\star}n}
\end{eqnarray*}
where by the definition of the primal exponential cone we have $v_{f}^{\star},v_{r}^{\star}>0$
and from the definition of the closure of the dual exponential cone
we have $s_{vf}^{\star},s_{vr}^{\star}>0$ and $s_{F}^{\star},s_{R}^{\star}<0$.

Elementary reaction kinetics requires satisfaction of the constraints
\begin{eqnarray}
\left[\begin{array}{c}
v_{f}\\
v_{r}
\end{array}\right] & = & \exp\left(\left[\begin{array}{cc}
F & ,R\end{array}\right]^{T}\cdot lnc+\left[\begin{array}{c}
lnk_{f}\\
lnk_{r}
\end{array}\right]\right),\label{eq:elementaryKinetics}
\end{eqnarray}
while the optimality conditions of Problem (\ref{eq:vk1_s}) has relaxed
these constraints to
\begin{eqnarray*}
\left(\begin{array}{c}
v_{f}^{\star}\\
1\\
F^{T}\cdot lnc^{\star}+lnk_{f}
\end{array}\right)\in\mathcal{K}_{exp}^{n} & \Leftrightarrow & v_{f}^{\star}\ge\exp\left(F^{T}\cdot lnc^{\star}+lnk_{f}\right),\\
\left(\begin{array}{c}
v_{r}^{\star}\\
1\\
R^{T}\cdot lnc^{\star}+lnk_{r}
\end{array}\right)\in\mathcal{K}_{exp}^{n} & \Leftrightarrow & v_{r}^{\star}\ge\exp\left(R^{T}\cdot lnc^{\star}+lnk_{r}\right).
\end{eqnarray*}
so when any one of these inequalities is strict, the corresponding
unidirectional flux is in the interior of an exponential cone. Let
$(v_{f}^{\bullet},v_{r}^{\bullet},lnc^{\bullet})$ denote an optimal
solution of Problem (\ref{eq:vk1_s}) that is also at the boundary
of the exponential cone, that is

\begin{eqnarray}
\left(\begin{array}{c}
v_{f}^{\bullet}\\
1\\
F^{T}\cdot lnc^{\bullet}+lnk_{f}
\end{array}\right)\in\partial\mathcal{K}_{exp}^{n} & \Leftrightarrow & v_{f}^{\bullet}=\exp\left(F^{T}\cdot lnc^{\bullet}+lnk_{f}\right),\label{eq:expConeBoundaryvf}\\
\left(\begin{array}{c}
v_{r}^{\bullet}\\
1\\
R^{T}\cdot lnc^{\bullet}+lnk_{r}
\end{array}\right)\in\partial\mathcal{K}_{exp}^{n} & \Leftrightarrow & v_{r}^{\bullet}=\exp\left(R^{T}\cdot lnc^{\bullet}+lnk_{r}\right),\label{eq:expConeBoundaryvr}
\end{eqnarray}
which are identical to constraints in Eq. (\ref{eq:elementaryKinetics})
required for elementary reaction kinetics to hold. Problem (\ref{eq:vk1_s})
has two free parameter vectors $c_{v_{f}}\in\mathbb{R}^{n}$ and $c_{v_{r}}\in\mathbb{R}^{n}$.

The optimality conditions (\ref{eq:dL_vf}) and (\ref{eq:dL_vr})
relate the free parameters $c_{v_{f}}\in\mathbb{R}^{n}$ and $c_{v_{r}}\in\mathbb{R}^{n}$
to the dual variables via
\begin{eqnarray*}
-c_{v_{r}}+s_{v_{r}}^{\star}= & N^{T}\cdot y^{\star} & =c_{v_{f}}-s_{v_{f}}^{\star}
\end{eqnarray*}
and since $s_{v_{f}}^{\star},s_{v_{r}}^{\star}\mathbb{R}_{\ge0}^{n}$,
we have 
\begin{eqnarray*}
-c_{v_{r}}\le & N^{T}\cdot y^{\star} & \le c_{v_{f}}
\end{eqnarray*}
which demonstrates that the values of $-c_{v_{r}}$ and $c_{v_{f}}$
are lower and upper bounds on $N^{T}\cdot y^{\star}$, therefore $-c_{v_{r}}\le c_{v_{f}}$,
therefore $0\le c_{v_{f}}+c_{v_{r}}$. A sufficient condition for
the latter is $c_{v_{f}},c_{v_{r}}\in\mathbb{R}_{\ge0}^{n}$. The
optimality condition (\ref{eq:dL_p}) relates the free parameters
$c_{lnc}$ to the dual variables via
\begin{eqnarray}
c_{lnc} & = & \left[\begin{array}{cc}
F & ,R\end{array}\right]\cdot\left[\begin{array}{c}
s_{F}^{\star}\\
s_{R}^{\star}
\end{array}\right]\label{eq:cp-1}
\end{eqnarray}
but since $s_{F}^{\star},s_{R}^{\star}\le0$ and $\left[\begin{array}{cc}
F & ,R\end{array}\right]\in\mathbb{R}_{\ge0}^{m\times2n}$, then $c_{lnc}\in\mathbb{R}_{\le0}^{m}$, that is, the objective
coefficients corresponding to logarithmic concentration must be negative.
An intuitive explanation of signs of the linear objective coefficients
is that prior to a stationary point
\[
\left[\begin{array}{c}
v_{f}\\
v_{r}
\end{array}\right]>\exp\left(\left[\begin{array}{cc}
F & ,R\end{array}\right]^{T}\cdot lnc+\left[\begin{array}{c}
lnk_{f}\\
lnk_{r}
\end{array}\right]\right),
\]
therefore minimising $v_{f}$ and $v_{r}$ and maximising $lnc$ will
encourage each exponential cone constraint to be active at an optimal
solution. For any choice of positive values for the entries of $c_{f},c_{r}$
and negative values for the entries of $c_{lnc}$ one can obtain an
optimal solution to Problem (\ref{eq:vk1_s}) where (\ref{eq:expConeBoundaryvf})
and (\ref{eq:expConeBoundaryvr}) are satisfied, provided the linear
constraints are omitted. Therefore, in Section (\ref{sec:SCLPstationarity})
we introduce an algorithm, consisting of a iterative sequence of conic
optimisation problems, to optimise these parameters and prove its
convergence and in Section (\ref{sec:vkSteady}) we demonstrate that
it converges to satisfy (\ref{eq:expConeBoundaryvf}) and (\ref{eq:expConeBoundaryvr}).

\section{\protect\label{sec:SCLPstationarity}Convergence of a sequence of
conic optimisation problems}

\subsection{Conic optimisation: classes of optimal solutions}

In this section, we introduce an algorithm that considers the linear
objective coefficient vectors, $(c_{f},c_{r},c_{lnc})$ in (\ref{eq:vk1_s}),
as parameters to be optimised such that elementary kinetics is satisfied.
This algorithm is explained as an abstract sequence of exponential
conic linear optimisation Problems, each as in (\ref{eq:primalConicLPv}),
rather than directly in terms of kinetics because the result is more
general and the explanation more concise. In Problem \ref{eq:primalConicLPv}
a constraint is said to be \emph{active} if perturbing it would change
the value of the optimal linear objective. Given the input data $\{c,A,b,F,g\}$,
optimality conditions in Equations \ref{eq:COoptimality1} - \ref{eq:COoptimality5}
define an optimal solution to Problem \ref{eq:primalConicLPv} and
each non-zero entry in one of optimal variable vectors in the set
$\{y^{\star},x^{\star},s^{\star}\}$ indicates a constraint that is
active at an optimal solution. Equivalently, the dual variable corresponding
to the active primal constraint, or primal variable corresponding
to the active dual constraint is nonzero. For Problem \ref{eq:primalConicLPv},
assuming the input data $\{A,b,F,g\}$ are invariant, it is the vector
of linear objective coefficients $c\in\mathbb{R}^{n}$ that determines
the constraints that are active at an optimal solution. From this
perspective, the set of optimal variable vectors $\{y^{\star},x^{\star},s^{\star}\}$
is a nonlinear function of a parameter $c$.

The solutions to Problem \ref{eq:primalConicLPv}, each a function
of a particular linear objective coefficient vector $c\in\mathbb{R}^{n}$,
may be classed by the combination of optimality constraints that are
active, equivalently the set of optimal variable vectors $\{y^{\star},x^{\star},s^{\star}\}$
that are non-zero. In certain circumstances, we seek a class of optimal
solution where each of the conic constraints are active, equivalently
each $s^{\star}\ne0$. For example, a nonlinear yet convex conic constraint
is a relaxation of a desired nonlinear and non-convex constraint,
one may seek a optimal solution where that conic constraint is active.
For example, given Problem (\ref{eq:primalConicLPv}), we may seek
to identify a $c^{\bullet}$ such that each exponential cone constraint
is active at an optimal solution, equivalently the corresponding dual
variables are nonzero, that is $s_{\textrm{j}}^{\star}(c^{\bullet})\ne0$
in (\ref{eq:COoptimality1}-\ref{eq:COoptimality5}) for all $j\in1\ldots3k$.
In the following, we approach the problem of identifying a $c^{\bullet}$
such that $s_{\textrm{j}}^{\star}(c^{\bullet})\ne0$ as a major optimisation
problem, where a merit function $\phi:\mathbb{R}^{n}\to\mathbb{R}$
is minimised subject to constraints, denoted $\mathcal{X}$, by solving
an iterative sequence of minor conic optimisation problems, each of
the form \ref{eq:primalConicLPv}. First, in Section \ref{sec:SCLPstationarity}
Theorem \ref{thm:lyap} demonstrates, in an abstract sense, that this
iterative sequence of conic optimisation problems converges to a stationary
point of the merit function, subject to the constraints. We do not
attempt to demonstrate, for the abstract case, that each stationary
point corresponds to a solution where a set of exponential conic constraints
is active, because it depends on the particular properties of the
input data $\{A,b,F,g\}$. However, in Section \ref{sec:vkSteady},
Theorem \ref{thm:stationaryIsSteady} demonstrates, for the particular
constraints that appear in variational elementary kinetics, each stationary
point must correspond to an optimal solution where every exponential
cone constraint in Problem \ref{eq:vk1_s} is active.

\begin{sidewaysfigure}
\includegraphics[width=1\textwidth]{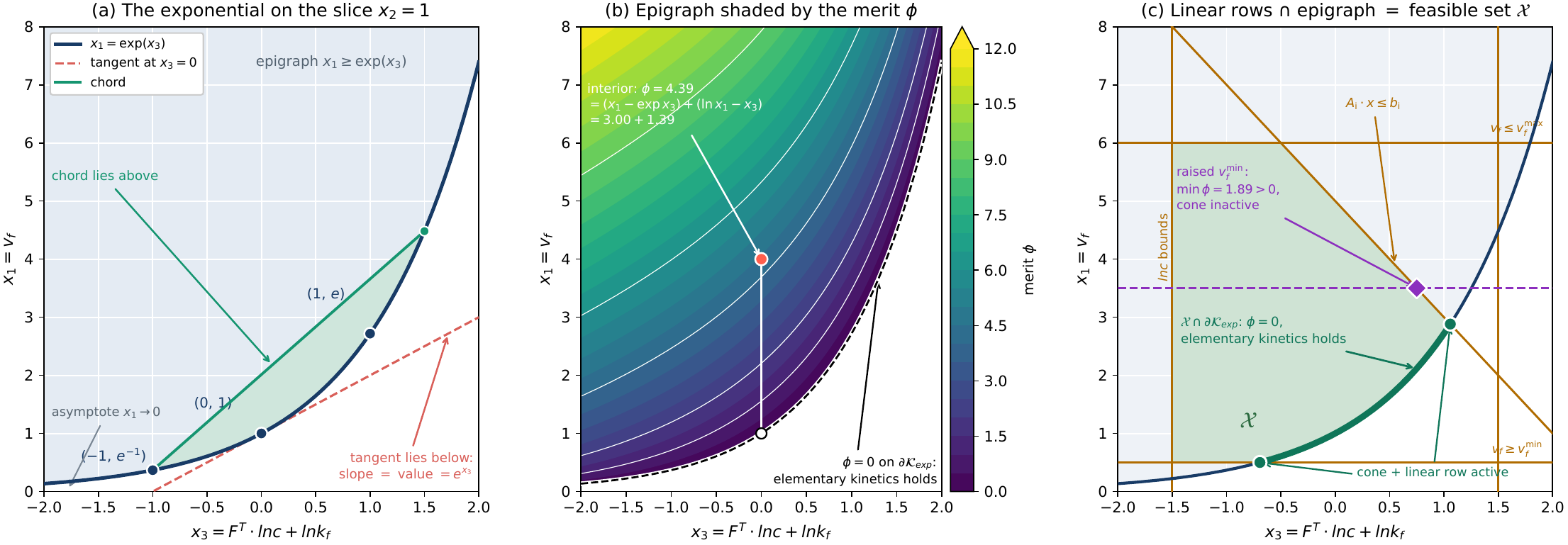}

\caption{\protect\label{fig:expConeFeasibleSet}Building the variational kinetics
feasible set from the exponential cone. All three panels are drawn
on the two-dimensional slice $x_{2}=1$ of the exponential cone $\mathcal{K}_{exp}$
of (\ref{eq:Kexp}), on which cone membership reduces to the epigraph
of the exponential, $x_{1}\ge\exp(x_{3})$. For a single forward reaction
the slice is read kinetically by $x_{1}=v_{f}$ and $x_{3}=F^{T}\cdot lnc+lnk_{f}$,
so the curve $x_{1}=\exp(x_{3})$ is the elementary forward rate law
(\ref{eq:vf}) and the region above it is its conic relaxation $v_{f}\ge\exp(F^{T}\cdot lnc+lnk_{f})$
of (\ref{eq:vf_ge}). (a) The exponential, with the points $(-1,e^{-1})$,
$(0,1)$ and $(1,e)$ marked. Two constructions establish that the
region above the curve is convex, which is what permits the relaxation
to be conic at all: every chord joining two points of the curve lies
above it (green), and every tangent lies below it (red, drawn at $x_{3}=0$,
where the slope equals the value because $\exp$ is its own derivative).
The curve approaches $x_{1}\to0$ as $x_{3}\to-\infty$ without attaining
it, the kinetic content being that the elementary rate law returns
a strictly positive forward flux at every finite concentration. (b)
The same epigraph, shaded by the boundaryseeking merit function (\ref{eq:phi-def}),
which for one forward reaction is $\phi=(v_{f}-\exp(F^{T}\cdot lnc+lnk_{f}))+(\ln v_{f}-F^{T}\cdot lnc-lnk_{f})$.
It is nonnegative throughout the epigraph and vanishes exactly on
the boundary $\partial\mathcal{K}_{exp}$ (dashed), where the rate
law holds with equality. The white segment decomposes $\phi$ at the
interior point $(x_{3},x_{1})=(0,4)$ into its two additive contributions,
the linear gap $v_{f}-\exp(F^{T}\cdot lnc+lnk_{f})=3.00$ and the
logarithmic gap $\ln v_{f}-F^{T}\cdot lnc-lnk_{f}=1.39$, so that
$\phi=4.39$; the two vanish together and only on the boundary, which
is why a strictly positive $\phi$ certifies that at least one exponential
cone constraint is inactive. Note that $\phi$ is strictly \emph{concave},
its Hessian on this slice being $\mathrm{diag}(-1/x_{1}^{2},-\exp(x_{3}))$,
so minimising it is not a convex programme and its minimum over a
compact convex set is attained at an extreme point of that set. (c)
The feasible set $\mathcal{X}$ of (\ref{eq:feasibleSet}), formed
as the intersection of the linear rows $A\cdot x\le b$ (orange) with
the epigraph. Four rows are the box bounds any kinetic model carries,
$v_{f}^{\mathrm{min}}\le v_{f}\le v_{f}^{\mathrm{max}}$, and the
two bounds on $lnc$; the fifth, $A_{\mathrm{i}}\cdot x\le b_{\mathrm{i}}$,
is a general row coupling flux to the rate exponent, as a steady state
row does once projected onto this slice. The bold arc is $\mathcal{X}\cap\partial\mathcal{K}_{exp}$,
the kinetically consistent subset on which $\phi=0$. It is a continuum
rather than a single point, so the merit alone does not determine
the solution and the remaining constraints select within it; its two
endpoints, $(-0.693,0.500)$ and $(1.059,2.883)$, are points at which
the exponential cone and a linear row are simultaneously active. The
dashed line shows the consequence of raising $v_{f}^{\mathrm{min}}$
until the polyhedron no longer reaches the boundary: no feasible point
then satisfies elementary kinetics, and $\min\phi$ over $\mathcal{X}$
is attained at the vertex marked by the diamond, $(0.75,3.50)$, with
$\phi^{\star}=1.89>0$ and the exponential cone inactive throughout
$\mathcal{X}$. This is the strictly positive merit stationary point
that additional assumptions on the data $\{A,b,F,g\}$ are required
to exclude, and detecting it is precisely what the sign of $\phi$
at convergence is for.}
\end{sidewaysfigure}

\subsection{Convergence to stationarity}

The theorem below introduces an iterative sequence of conic optimisation
problems that converges to a stationary point.
\begin{thm}
\label{thm:lyap} Let $A\in\mathbb{R}^{m\times n}$, $b\in\mathbb{R}^{m}$,
$F\in\mathbb{R}^{3k\times n}$, and $g\in\mathbb{R}^{3k}$. Partition
$F$ and $g$ into $v$-row blocks 
\[
F=\begin{bmatrix}F_{1}\\
F_{2}\\
F_{3}
\end{bmatrix},\qquad g=\begin{bmatrix}g_{1}\\
g_{2}\\
g_{3}
\end{bmatrix},\qquad F_{\textrm{i}}\in\mathbb{R}^{v\times n},\ g_{\textrm{i}}\in\mathbb{R}^{v}\ (\textrm{i}=1,2,3).
\]
Define the convex feasible set 
\begin{equation}
\mathcal{X}\coloneqq\left\{ x\in\mathbb{R}^{n}:\ A\cdot x\le b,\ \ F\cdot x+g\in\mathcal{K}_{exp}^{v},\ \ F_{2}\cdot x+g_{2}=1\right\} .\label{eq:feasibleSet}
\end{equation}
Assume $\mathcal{X}$ is nonempty and compact. Define the boundary-seeking
merit function $\phi:\mathbb{R}^{n}\to\mathbb{R}$ by 
\begin{align}
\phi(x) & \coloneqq1^{T}\cdot\left(\,(F_{1}x+g_{1})-\exp(F_{3}x+g_{3})+\ln(F_{1}x+g_{1})-(F_{3}x+g_{3})\,\right),\label{eq:phi-def}
\end{align}
and the major optimisation problem 
\begin{eqnarray}
\underset{x}{\text{min}} & \phi(x)\nonumber \\
\text{s.t.} & x\in\mathcal{X}, & \textrm{}\label{eq:majorOpt}
\end{eqnarray}
where $\phi(x)\ge$0 and 
\[
\phi(x)=0\iff F_{1}x+g_{1}=\exp\left(F_{3}x+g_{3}\right).
\]
Consider the iterative scheme: choose any $x_{0}\in\mathcal{X}$,
and for each $k\ge0$ select 
\begin{equation}
x_{k+1}\in\arg\min_{x\in\mathcal{X}}\ \nabla\phi(x_{\textrm{k}})^{T}\cdot x.\label{eq:VIiteration}
\end{equation}
Then\\
\textbf{ (i)(Concavity)} $\phi$ is concave on $\mathcal{X}$, and
$\nabla\phi(x)$ exists for all $x\in\mathcal{X}$.\textbf{ }\\
\textbf{(ii)(Lyapunov descent)} The sequence $\{\phi(x_{\textrm{k}})\}$
is nonincreasing, and with 
\[
\delta_{\textrm{k}}\coloneqq\nabla\phi(x_{\textrm{k}})^{T}\cdot(x_{\textrm{k}}-x_{k+1})\ \ge\ 0,
\]
one has the one-step decrease bound 
\begin{equation}
\phi(x_{k+1})\ \le\ \phi(x_{\textrm{k}})-\delta_{\textrm{k}}\qquad\forall k\ge0.\label{eq:descent}
\end{equation}
\textbf{(iii) (Summability of stationarity gaps)} $\sum_{\textrm{k}=0}^{\infty}\delta_{\textrm{k}}<\infty$,
hence $\delta_{\textrm{k}}\to0$.\\
\textbf{(iv) (Limit points are stationary, equivalently Variational
Inequality solutions)} Every accumulation point $x^{\bullet}$ of
$\{x_{\textrm{k}}\}$ is a stationary point of Problem \ref{eq:majorOpt},
equivalently, it satisfies the variational inequality 
\begin{equation}
\nabla\phi(x^{\bullet})^{T}\cdot(x-x^{\bullet})\ \ge\ 0\qquad\forall x\in\mathcal{X}.\label{eq:VI}
\end{equation}
\end{thm}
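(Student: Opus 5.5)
The plan is the standard Frank–Wolfe (conditional gradient) argument for a concave objective over a compact convex set: concavity replaces the usual line search and gives monotone descent for free.

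For (i), write $u\coloneqq F_{1}x+g_{1}$ and $z\coloneqq F_{3}x+g_{3}$. Then $\phi$ is the sum over components of $u_{\textrm{j}}-\exp(z_{\textrm{j}})+\ln(u_{\textrm{j}})-z_{\textrm{j}}$.
\begin{itemize}
\item The terms $u$ and $-z$ are affine in $x$.
\item $-\exp(z)$ is concave, being the negative of a convex function composed with an affine map.
\item $\ln(u)$ is concave wherever $u>0$.
\end{itemize}
A sum of concave functions of affine maps is concave. Differentiability requires $u>0$ on $\mathcal{X}$. On the slice $F_{2}x+g_{2}=1$, cone membership gives $u\ge\exp(z)>0$ whenever $z$ is finite, and $z$ is finite because $\mathcal{X}$ is compact. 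So $u$ stays strictly positive on $\mathcal{X}$, and by continuity and compactness it is bounded away from zero. The gradient is then
\[
\nabla\phi(x)=F_{1}^{T}\cdot\left(\mathbf{1}+u^{-1}\right)-F_{3}^{T}\cdot\left(\exp(z)+\mathbf{1}\right),
\]
which is continuous on $\mathcal{X}$. The same facts give $\phi\ge0$: each component satisfies $u-e^{z}\ge0$ and $\ln u-z\ge0$, with equality exactly on the boundary.

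For (ii), the minor problem (\ref{eq:VIiteration}) minimises a linear function over a nonempty compact set, so a minimiser $x_{k+1}$ exists. Since $x_{\textrm{k}}\in\mathcal{X}$ is itself feasible for that problem, optimality gives $\nabla\phi(x_{\textrm{k}})^{T}x_{k+1}\le\nabla\phi(x_{\textrm{k}})^{T}x_{\textrm{k}}$, that is $\delta_{\textrm{k}}\ge0$. Concavity gives the first-order upper bound
\[
\phi(y)\le\phi(x)+\nabla\phi(x)^{T}(y-x).
\]
Taking $y=x_{k+1}$ and $x=x_{\textrm{k}}$ yields (\ref{eq:descent}) directly, and monotonicity of $\{\phi(x_{\textrm{k}})\}$ follows.

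For (iii), sum (\ref{eq:descent}) telescopically from $k=0$ to $K$ to get
\[
\sum_{k=0}^{K}\delta_{\textrm{k}}\le\phi(x_{0})-\phi(x_{K+1})\le\phi(x_{0}),
\]
using $\phi\ge0$; alternatively, $\phi$ is bounded below on compact $\mathcal{X}$ by continuity. Letting $K\to\infty$ gives summability, and hence $\delta_{\textrm{k}}\to0$.

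For (iv), take a subsequence $x_{k_{j}}\to x^{\bullet}$; then $x^{\bullet}\in\mathcal{X}$ because $\mathcal{X}$ is closed. Fix any $x\in\mathcal{X}$. By optimality of $x_{k_{j}+1}$,
\[
\nabla\phi(x_{k_{j}})^{T}(x-x_{k_{j}})\ge\nabla\phi(x_{k_{j}})^{T}(x_{k_{j}+1}-x_{k_{j}})=-\delta_{k_{j}}.
\]
Continuity of $\nabla\phi$ on $\mathcal{X}$ lets us pass to the limit, which gives (\ref{eq:VI}). The variational inequality is the first-order stationarity condition for Problem (\ref{eq:majorOpt}) over a convex set. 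It is also equivalent to the KKT form, with normal-cone multipliers, under the strict feasibility conditions discussed for conic duality earlier.

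The main obstacle is the regularity claim in (i), namely that $F_{1}x+g_{1}$ is bounded away from zero on $\mathcal{X}$, which makes $\ln$ and its gradient well defined and continuous. This is what the limit argument in (iv) relies on. The closure of $\mathcal{K}_{exp}$ contains points with $x_{2}=0$ and $x_{1}$ possibly zero, so the argument must use the slice constraint $F_{2}x+g_{2}=1$ together with compactness. Compactness bounds $z$ below, so $u\ge e^{z}\ge e^{\min z}>0$. I would state this explicitly before invoking continuity.
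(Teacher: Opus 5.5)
Your proposal is correct and follows the paper's proof essentially step for step. Concavity plus the supporting-hyperplane inequality at $x_{k}$ gives the one-step decrease, telescoping against a lower bound on $\phi$ gives summability of $\delta_{k}$, and continuity of $\nabla\phi$ on $\mathcal{X}$ gives the variational inequality at every accumulation point. The differences are cosmetic: you pass to the limit directly in $\nabla\phi(x_{k_{j}})^{T}(x-x_{k_{j}})\ge-\delta_{k_{j}}$ where the paper argues by contradiction with an explicit $\varepsilon$, and you justify $F_{1}x+g_{1}>0$ on $\mathcal{X}$ through the slice $F_{2}x+g_{2}=1$ and the closure of $\mathcal{K}_{exp}$, which the paper merely asserts.
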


\begin{proof}
Note that $x_{k+1}$ is the optimum of Problem (\ref{eq:primalConicLPv})
with $c\coloneqq\nabla\phi(x_{\textrm{k}})$.\textbf{ Step 1: Concavity
and explicit gradient.} When $x\in\mathcal{X}$, the constraint $F\cdot x+g\in\mathcal{K}_{exp}^{v}$
implies $F_{1}x+g_{1}>0$ componentwise, hence $\ln(F_{1}x+g_{1})$
and $(F_{1}x+g_{1})^{-1}$ are well-defined. In (\ref{eq:phi-def})
each summand is a sum of: a linear function of $x$, plus $\ln(\cdot)$
composed with an affine map, plus $-\exp(\cdot)$ composed with an
affine map, plus another linear function of $x$. Since $\ln$ is
concave on $(0,\infty)$ and $\exp$ is convex on $\mathbb{R}$, the
function $-\exp$ is concave, and composing concave functions with
affine maps preserves concavity on their domains. Therefore $\phi$
is concave on any set where $F_{1}x+g_{1}>0$ componentwise; in particular,
it is concave on $\mathcal{X}$. The gradient of $\phi$ is
\begin{equation}
\nabla\phi(x)=F_{1}^{T}\cdot\Bigl(1+(F_{1}x+g_{1})^{-1}\Bigr)\;-\;F_{3}^{T}\cdot\Bigl(\exp(F_{3}x+g_{3})+1\Bigr).\label{eq:gradPhi}
\end{equation}
For $x\in\mathcal{X}$, $F_{1}x+g_{1}>0$, so $(F_{1}x+g_{1})^{-1}$
is well-defined and $\nabla\phi(x)$ exists on $\mathcal{X}$.

\medskip{}
\textbf{Step 2: Supporting hyperplane inequality for concave $\phi$.}
A standard consequence of concavity and differentiability is that
for all $x,y$ in the domain of $\phi$, 
\begin{equation}
\phi(y)\ \le\ \phi(x)\ +\ \nabla\phi(x)^{T}\cdot(y-x).\label{eq:support}
\end{equation}
We will apply (\ref{eq:support}) with $x=x_{\textrm{k}}$ and $y=x_{k+1}$.

\medskip{}
\textbf{Step 3: Optimality of $x_{k+1}$ in the linearised exponential-conic
subproblem.} By definition (\ref{eq:VIiteration}), $x_{k+1}$ minimises
the linear functional $x\mapsto\nabla\phi(x_{\textrm{k}})^{T}\cdot x$
over $\mathcal{X}$. Hence, for every $x\in\mathcal{X}$, 
\begin{equation}
\nabla\phi(x_{\textrm{k}})^{T}\cdot x_{k+1}\ \le\ \nabla\phi(x_{\textrm{k}})^{T}\cdot x.\label{eq:linopt}
\end{equation}
In particular, taking $x=x_{\textrm{k}}\in\mathcal{X}$ gives 
\begin{equation}
\nabla\phi(x_{\textrm{k}})^{T}\cdot(x_{k+1}-x_{\textrm{k}})\ \le\ 0.\label{eq:stepneg}
\end{equation}
Define the stationarity gap 
\begin{equation}
\delta_{\textrm{k}}\coloneqq\nabla\phi(x_{\textrm{k}})^{T}\cdot(x_{\textrm{k}}-x_{k+1})\ \ge\ 0,\label{eq:delta}
\end{equation}
which is nonnegative by (\ref{eq:stepneg}).

\medskip{}
\textbf{Step 4: Lyapunov descent inequality.} Apply the supporting
inequality (\ref{eq:support}) with $x=x_{\textrm{k}}$ and $y=x_{k+1}$:
\[
\phi(x_{k+1})\ \le\ \phi(x_{\textrm{k}})\ +\ \nabla\phi(x_{\textrm{k}})^{T}\cdot(x_{k+1}-x_{\textrm{k}}).
\]
Using (\ref{eq:stepneg}) (or equivalently (\ref{eq:delta})) yields
\[
\phi(x_{k+1})\ \le\ \phi(x_{\textrm{k}})-\delta_{\textrm{k}},
\]
which is exactly (\ref{eq:descent}) (cf. Figure \ref{fig:concaveDescent}).
In particular, $\phi(x_{k+1})\le\phi(x_{\textrm{k}})$, so $\{\phi(x_{\textrm{k}})\}$
is nonincreasing. Thus $\phi$ is a (discrete-time) Lyapunov function
for the iteration (\ref{eq:VIiteration}).
\end{proof}
\begin{figure}
\includegraphics[width=1\columnwidth]{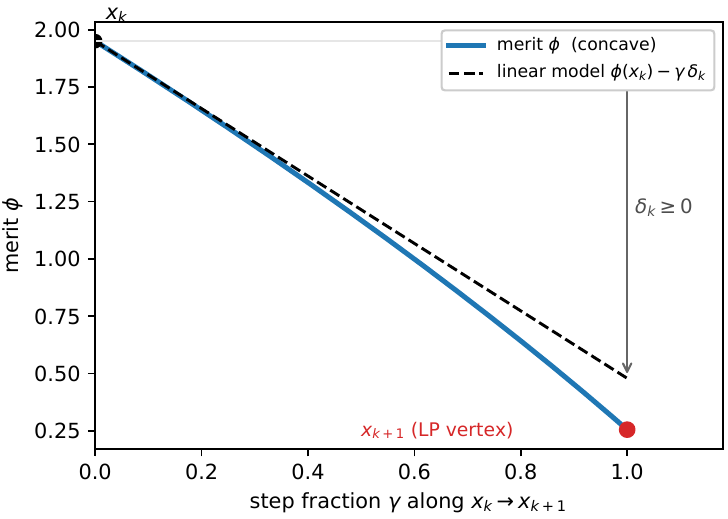}\caption{\protect\label{fig:concaveDescent}Concavity, the full step, and the
guaranteed decrease underlying the convergence proof. Along the segment
from the current iterate $x_{\textrm{k}}$ to the linear-minimisation
vertex $x_{k+1}$, parameterised by the step fraction $\gamma\in[0,1]$,
the merit $\phi$ (blue) is concave, so it lies below its supporting
line $\phi(x_{\textrm{k}})-\gamma\,\delta_{\textrm{k}}$ (dashed),
where $\delta_{\textrm{k}}=\nabla\phi(x_{\textrm{k}})^{T}\cdot(x_{\textrm{k}}-x_{k+1})\ge0$
is the stationarity gap. At the full step $\gamma=1$ (reaching $x_{k+1}$)
this gives $\phi(x_{k+1})\le\phi(x_{\textrm{k}})-\delta_{\textrm{k}}$,
the monotone decrease underlying the Lyapunov argument.}
\end{figure}

\begin{proof}
\medskip{}
\textbf{Step 5: Summability of $\delta_{\textrm{k}}$ and $\delta_{\textrm{k}}\to0$.}
Because $\mathcal{X}$ is compact and $\phi$ is continuous on $\mathcal{X}$,
$\phi$ attains a finite lower bound on $\mathcal{X}$: 
\[
\phi_{\inf}\coloneqq\inf_{x\in\mathcal{X}}\phi(x)\ >\ -\infty.
\]
Sum (\ref{eq:descent}) from $k=0$ to $\textrm{k}=n-1$: 
\[
\phi(x_{\textrm{n}})\ \le\ \phi(x_{0})-\sum_{\textrm{k}=0}^{n-1}\delta_{\textrm{k}}.
\]
Rearrange and use $\phi(x_{\textrm{n}})\ge\phi_{\inf}$: 
\[
\sum_{\textrm{k}=0}^{n-1}\delta_{\textrm{k}}\ \le\ \phi(x_{0})-\phi(x_{\textrm{n}})\ \le\ \phi(x_{0})-\phi_{\inf}.
\]
Letting $N\to\infty$ shows $\sum_{\textrm{k}=0}^{\infty}\delta_{\textrm{k}}<\infty$.
Since each $\delta_{\textrm{k}}\ge0$, it follows that $\delta_{\textrm{k}}\to0$.

\medskip{}
\textbf{Step 6: Accumulation points satisfy the variational inequality.}
Let $x^{\bullet}$ be any accumulation point of $\{x_{\textrm{k}}\}$.
Since $\mathcal{X}$ is compact and all iterates lie in $\mathcal{X}$,
there exists a subsequence $\{x_{\textrm{k}_{\textrm{j}}}\}$ such
that 
\[
x_{\textrm{k}_{\textrm{j}}}\ \to\ x^{\bullet}\in\mathcal{X}\qquad\text{as }j\to\infty.
\]
We prove that $x^{\bullet}$ satisfies (\ref{eq:VI}). Assume for
contradiction that (\ref{eq:VI}) fails. Then there exists $\hat{x}\in\mathcal{X}$
and a scalar $\varepsilon>0$ such that 
\begin{equation}
\nabla\phi(x^{\bullet})^{T}\cdot(\hat{x}-x^{\bullet})\ \le\ -2\varepsilon.\label{eq:contr0}
\end{equation}
Because $\nabla\phi$ is continuous on $\mathcal{X}$ (see (\ref{eq:gradPhi})
and continuity of $\exp$ and reciprocal on $F_{1}x+g_{1}>0$, we
have $\nabla\phi(x_{\textrm{k}_{\textrm{j}}})\to\nabla\phi(x^{\bullet})$.
Also $x_{\textrm{k}_{\textrm{j}}}\to x^{\bullet}$. Therefore the
scalar sequence 
\[
\nabla\phi(x_{\textrm{k}_{\textrm{j}}})^{T}\cdot(\hat{x}-x_{\textrm{k}_{\textrm{j}}})
\]
converges to $\nabla\phi(x^{\bullet})^{T}\cdot(\hat{x}-x^{\bullet})$.
Hence, for all sufficiently large $j$, 
\begin{equation}
\nabla\phi(x_{\textrm{k}_{\textrm{j}}})^{T}\cdot(\hat{x}-x_{\textrm{k}_{\textrm{j}}})\ \le\ -\varepsilon.\label{eq:contr1}
\end{equation}
Rearranging (\ref{eq:contr1}) gives 
\begin{equation}
\nabla\phi(x_{\textrm{k}_{\textrm{j}}})^{T}\cdot(x_{\textrm{k}_{\textrm{j}}}-\hat{x})\ \ge\ \varepsilon.\label{eq:contr2}
\end{equation}
Now use the defining optimality property (\ref{eq:linopt}) for the
step from $x_{\textrm{k}_{\textrm{j}}}$ to $x_{k_{\textrm{j}}+1}$,
choosing $x=\hat{x}$: 
\[
\nabla\phi(x_{\textrm{k}_{\textrm{j}}})^{T}\cdot x_{k_{\textrm{j}}+1}\ \le\ \nabla\phi(x_{\textrm{k}_{\textrm{j}}})^{T}\cdot\hat{x},
\]
which is equivalent to 
\begin{equation}
\nabla\phi(x_{\textrm{k}_{\textrm{j}}})^{T}\cdot(x_{\textrm{k}_{\textrm{j}}}-x_{k_{\textrm{j}}+1})\ \ge\ \nabla\phi(x_{\textrm{k}_{\textrm{j}}})^{T}\cdot(x_{\textrm{k}_{\textrm{j}}}-\hat{x}).\label{eq:lincomp}
\end{equation}
The left-hand side is $\delta_{\textrm{k}_{\textrm{j}}}$ by (\ref{eq:delta}),
so (\ref{eq:lincomp}) and (\ref{eq:contr2}) imply 
\[
\delta_{\textrm{k}_{\textrm{j}}}\ \ge\ \varepsilon\qquad\text{for all sufficiently large }j.
\]
This contradicts $\delta_{\textrm{k}}\to0$ proven in Step~5. Therefore
the assumption (\ref{eq:contr0}) was false, and $x^{\bullet}$ satisfies
(\ref{eq:VI}). Since $x^{\bullet}$ was an arbitrary accumulation
point, every accumulation point satisfies (\ref{eq:VI}). This completes
the proof.
\end{proof}
Theorem \ref{thm:lyap} demonstrates the iteration \ref{eq:VIiteration}
converges to a stationary point $x^{\bullet}$ of the merit function
$\phi$ subject to the constraints $\mathcal{X}$, defined as a solution
to the variational inequality (\ref{eq:VI}). Furthermore, there may
exist multiple solutions to the variational inequality (\ref{eq:VI}).
It is a standard result in variational analysis ($\mathdollar$6.13
in \parencite{rockafellar_variational_1998}) that, for any convex
set $\mathcal{X}$, and any mapping $f(x):\mathcal{X}\rightarrow\mathbb{R}^{n}$,
a solution $x^{\bullet}$ to the variational inequality (\ref{eq:VI})
may be interpreted as
\[
x^{\bullet}\in\underset{x\in\mathcal{X}}{\textrm{arg\;min}}f(x^{\bullet})^{T}\cdot x.
\]
The affine function $x\rightarrow f(x^{\bullet})^{T}\cdot x$ defines
a supporting hyperplane to the convex set $\mathcal{X}$ at $x^{\bullet}$.
The halfspace $\{x:f(x^{\bullet})^{T}\cdot(x-x^{\bullet})\ \ge\ 0\}$
contains all of $\mathcal{X}$ while the hyperplane $f(x^{\bullet})^{T}\cdot(x-x^{\bullet})=0$
touches $\mathcal{X}$ at $x^{\bullet}$, so $x^{\bullet}$ cannot
be a strict interior point of $\mathcal{X}$ because an interior point
admits feasible perturbations in both directions of any vector. A
solution to the variational inequality (\ref{eq:VI}) does not imply
that any $x^{\bullet}$ is on the boundary of any particular combination
of the constraints that define $\mathcal{X}$. Equivalently, it does
not imply that any particular combination of the constraints that
define $\mathcal{X}$ are active. Note that a stationary point may
be interpreted as a fixed point of the iteration (\ref{eq:VIiteration}),
where $x^{\bullet}\coloneqq x_{k+1}=x_{\textrm{k}}.$

There are two types of constraints on the feasible set \ref{eq:feasibleSet}:
(i) a linear constraint, defined by $A_{\textrm{i}}\cdot x\le b_{\textrm{i}}$,
that is active when $A_{\textrm{i}}x=b_{\textrm{i}}$, and, (ii) an
exponential cone constraint defined by $(F_{1}x+g_{1})_{\textrm{i}}\ge\exp(F_{3}x+g_{3})_{\textrm{i}}$,
that is active when $(F_{1}x+g_{1})_{\textrm{i}}=\exp(F_{3}x+g_{3})_{\textrm{i}}$.
The merit function \ref{eq:phi-def}, illustrated in Figure \ref{fig:expConeFeasibleSet},
is zero when all exponential cone constraints are active and strictly
positive when at least one exponential cone constraint is inactive,
that is
\begin{eqnarray*}
\phi(x) & = & 0\iff(F_{1}x+g_{1})=\exp(F_{3}x+g_{3}),\\
\phi(x) & > & 0\iff\exists\;i\in1\ldots v,\;s.t.\;(F_{1}x+g_{1})_{\textrm{i}}>\exp(F_{3}x+g_{3})_{\textrm{i}}.
\end{eqnarray*}
Therefore, a stationary point with at least one exponential cone constraint
inactive (strict interior) may be recognised by a strictly positive
merit function $\phi(x^{\bullet})>0$. To ensure that every stationary
point corresponds to activity of every exponential cone constraint,
one requires additional assumptions on the input data $\{A,b,F,g\}$.
That is, additional assumptions are required to eliminate the existence
of a stationary point where one or more exponential cone constraints
is inactive. It may be that exponential cone and linear constraints
are simultaneously active. Theorem \ref{thm:lyap} proves stationarity
of accumulation points, not convergence of the entire sequence to
a unique point.

Supplementary Section \ref{sec:Adaptive-Sequential-Conic} describes
an adaptive sequential conic linear approximation algorithm that numerically
implements the iterative mathematical algorithm in Theorem \ref{thm:lyap}.
It is but one approach to implements the iterative mathematical algorithm.
It is included for completeness, underlies the numerical experiments
in \ref{sec:Numerical-experiments}, but is agnostic to the biochemical
origins that motivated it and is purely a numerical optimisation construct.
The solver was developed with the assistance of AI coding tools (Claude,
Anthropic; OpenAI Codex).

\subsection{\protect\label{subsec:Correspondence-with-reaction-kinetics-iterative}Correspondence
with reaction kinetics}

Theorem (\ref{thm:lyap}) may be applied reaction kinetics by defining
the exponential boundary-seeking merit function
\begin{equation}
\phi(v_{f},v_{r},lnc)\coloneqq1^{T}\cdot\left(\left[\begin{array}{c}
v_{f}\\
v_{r}
\end{array}\right]-\exp\left(\left[F,R\right]^{T}\cdot lnc+\left[\begin{array}{c}
lnk_{f}\\
lnk_{r}
\end{array}\right]\right)+\ln\left(\left[\begin{array}{c}
v_{f}\\
v_{r}
\end{array}\right]\right)-\left[F,R\right]^{T}\cdot lnc-\left[\begin{array}{c}
lnk_{f}\\
lnk_{r}
\end{array}\right]\right),\label{eq:VKmerit}
\end{equation}
with partial derivatives
\begin{eqnarray*}
\nabla_{v_{f}}\phi & = & 1+1\oslash v_{f}\\
\nabla_{v_{r}}\phi & = & 1+1\oslash v_{r}\\
\nabla_{lnc}\phi & = & -\left[F,R\right]\cdot\left(\exp\left(\left[F,R\right]^{T}\cdot lnc+\left[\begin{array}{c}
lnk_{f}\\
lnk_{r}
\end{array}\right]\right)+1\right).
\end{eqnarray*}
Theorem (\ref{thm:lyap}) proves that an iterative sequence of exponential
conic optimisation problems each of the form of (\ref{eq:conicP}),
generates descent of this merit function over the feasible set, with
convergence to a stationary state. The correspondence between the
general exponential conic optimisation problem in (\ref{eq:conicP})
and reaction kinetic optimisation Problem (\ref{eq:vk1_s}) is provided
in Section \ref{subsec:Correspondence-with-reaction-kinetics-minor}.
In particular, at the $(k+1)^{th}$ iteration of the iterative sequence,
the linear objective coefficients in Problem (\ref{eq:vk1_s}) are
\begin{eqnarray}
c_{v_{f}}^{(k+1)} & \coloneqq\nabla_{v_{f}}\phi= & 1+1\oslash v_{f}^{(k)},\label{eq:cfk+1}\\
c_{v_{r}}^{(k+1)} & \coloneqq\nabla_{v_{r}}\phi= & 1+1\oslash v_{r}^{(k)},\label{eq:crk+1}\\
c_{lnc}^{(k+1)} & \coloneqq\nabla_{lnc}\phi= & -\left[F,R\right]\cdot\left(\exp\left(\left[F,R\right]^{T}\cdot lnc^{(k)}+\left[\begin{array}{c}
lnk_{f}\\
lnk_{r}
\end{array}\right]\right)+1\right).\label{eq:clnck+1}
\end{eqnarray}
where $\{v_{f}^{(k)},v_{r}^{(k)},c_{lnc}^{(k)}\}$ are the optimal
values of the previous minor exponential conic optimisation problem.

\section{\protect\label{sec:vkSteady}Variational kinetics: convergence to
a steady state}

The following theorem establishes sufficient conditions on the input
data $\{F,R\}$ such that $\{v_{f}^{\bullet},v_{r}^{\bullet},lnc^{\bullet}\}$
is a stationary point of the merit function $\phi$ subject to the
constraints in problem (\ref{eq:vk1_s}) (with $b\coloneqq B\cdot w$),
implies that elementary reaction kinetics is satisfied, that is,
\[
\left[\begin{array}{c}
v_{f}^{\bullet}\\
v_{r}^{\bullet}
\end{array}\right]=\exp\left(\left[F,R\right]^{T}\cdot lnc^{\bullet}+\left[\begin{array}{c}
lnk_{f}\\
lnk_{r}
\end{array}\right]\right).
\]

\begin{thm}
\label{thm:stationaryIsSteady}Let $N\in\mathbb{R}^{m\times n}$ and
$b\in\mathbb{R}^{m}$. Let $F,R\in\mathbb{R}_{\ge0}^{m\times n}$,
where $N=R-F$, and let the variables be $v_{f},v_{r}\in\mathbb{R}^{n}$
and $lnc\in\mathbb{R}^{m}$. Define the slack vectors
\[
\delta_{f}:=v_{f}-\exp(F^{T}\cdot lnc+lnk_{f})\ge0,\qquad\delta_{r}:=v_{r}-\exp(R^{T}\cdot lnc+lnk_{r})\ge0,
\]
and the sets $\mathcal{S}_{4}\subseteq\mathcal{S}_{3}\subseteq\mathcal{S}_{2}\subseteq\mathcal{S}_{1}$
by: 
\begin{align*}
\mathcal{S}_{1} & :=\Big\{(v_{f},v_{r},lnc):N(v_{f}-v_{r})=b,\;v_{f}\ge\exp(F^{T}\cdot lnc+lnk_{f}),\;v_{r}\ge\exp(R^{T}\cdot lnc+lnk_{r}),\;v_{f}>0,\;v_{r}>0\Big\},\\[1mm]
\mathcal{S}_{2} & :=\Big\{(v_{f},v_{r},lnc)\in\mathcal{S}_{1}:\ \exists u\in\mathbb{R}^{m}\ \text{s.t.}\ \textrm{sign}(v_{f}-v_{r})=-\textrm{sign}(N^{T}\cdot u)\Big\},\\[1mm]
\mathcal{S}_{3} & :=\Big\{(v_{f},v_{r},lnc)\in\mathcal{S}_{1}:\ \exists u\in\mathbb{R}^{m}\ \text{s.t.}\ \ln(v_{f}\oslash v_{r})=-N^{T}\cdot u\iff\ln(v_{f}\oslash v_{r})\in\mathcal{R}(N^{T})\Big\},\\[1mm]
\mathcal{S}_{4} & :=\Big\{(v_{f},v_{r},lnc)\in\mathcal{S}_{1}:\ v_{f}=\exp(F^{T}\cdot lnc+lnk_{f}),\ v_{r}=\exp(R^{T}\cdot lnc+lnk_{r})\Big\},
\end{align*}
where $lnc\in\mathbb{R}^{m}$. On the domain $\{v_{f},v_{r},lnc\}\in\mathcal{S}_{1}$,
define the continuously differentiable merit function 
\begin{equation}
\phi(v_{f},v_{r},lnc):=\mathbf{1}^{T}\cdot\Big(\begin{bmatrix}v_{f}\\
v_{r}
\end{bmatrix}-\exp([F,R]^{T}\cdot lnc+\left[\begin{array}{c}
lnk_{f}\\
lnk_{r}
\end{array}\right])+\ln\!\Big(\begin{bmatrix}v_{f}\\
v_{r}
\end{bmatrix}\Big)-[F,R]^{T}\cdot lnc-\left[\begin{array}{c}
lnk_{f}\\
lnk_{r}
\end{array}\right]\Big).\label{eq:meritvk}
\end{equation}
A point $(v_{f}^{\bullet},v_{r}^{\bullet},lnc^{\bullet})\in\mathcal{S}_{1}$
is called \emph{a first-order stationary point of $\phi$ over $\mathcal{\mathcal{S}}_{1}$}
if, for every direction $(d_{f},d_{r},d_{c})\in\mathbb{R}^{2n+m}$
for which there exists $\epsilon>0$ such that 
\[
(v_{f}^{\bullet}+\alpha d_{f},\ v_{r}^{\bullet}+\alpha d_{r},\ lnc^{\bullet}+\alpha d_{c})\in\mathcal{S}_{1},\qquad\forall\alpha\in[0,\epsilon],
\]
the directional derivative is non-negative, that is
\begin{equation}
\nabla\phi(v_{f}^{\bullet},v_{r}^{\bullet},lnc^{\bullet})^{T}\cdot\begin{bmatrix}d_{f}\\
d_{r}\\
d_{c}
\end{bmatrix}\ \ge\ 0.\label{eq:directionalDerivative}
\end{equation}
Assume the data satisfy the following:

(2.1) \textbf{(Independent forward and reverse stoichiometry)} For
every $j\in\{1,\dots,n\}$,
\begin{equation}
\textrm{supp}\left(F_{\textrm{:,j}}\right)\cap\textrm{supp}\left(R_{\textrm{:,j}}\right)=\emptyset,\qquad F_{\textrm{:,j}}\ne0,\;R_{\textrm{:,j}}\ne0.\label{eq:Condition2.1}
\end{equation}

(2.2) \textbf{(Cyclic}\textbf{\emph{ }}\textbf{flux}\textbf{\emph{
consistency}}\textbf{)} Every reaction participates in at least one
stoichiometrically balanced cycle. That is, for every $j$, there
exists a $z\in\mathbb{R}^{n}$ such that 
\begin{equation}
Nz=0,\qquad z_{\textrm{j}}\neq0.\label{eq:Condition2.2}
\end{equation}
Then every first-order stationary point $(v_{f}^{\bullet},v_{r}^{\bullet},lnc^{\bullet})\in\mathcal{S}_{1}$
lies in $\mathcal{S}_{4}$. Equivalently, 
\[
\left[\begin{array}{c}
v_{f}^{\bullet}\\
v_{r}^{\bullet}
\end{array}\right]=\exp\left(\left[F,R\right]^{T}\cdot lnc^{\bullet}+\left[\begin{array}{c}
lnk_{f}\\
lnk_{r}
\end{array}\right]\right).
\]
so there is no bidirectional slack and no unidirectional slack in
the kinetic inequalities at $(v_{f}^{\bullet},v_{r}^{\bullet},lnc^{\bullet})$.
\end{thm}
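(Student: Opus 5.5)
The plan is to argue by contradiction. Suppose $(v_{f}^{\bullet},v_{r}^{\bullet},lnc^{\bullet})$ is a first-order stationary point of $\phi$ over $\mathcal{S}_{1}$, but some slack is positive, say $\delta_{f\textrm{j}}>0$; the case $\delta_{r\textrm{j}}>0$ is symmetric. The goal is to build a feasible direction $(d_{f},d_{r},d_{c})$ that violates (\ref{eq:directionalDerivative}). The basic tool is the explicit gradient already computed for (\ref{eq:VKmerit}):
\[
\nabla_{v_{f}}\phi=1+1\oslash v_{f},\qquad\nabla_{v_{r}}\phi=1+1\oslash v_{r},\qquad\nabla_{lnc}\phi=-[F,R]\cdot\left(\exp\left([F,R]^{T}\cdot lnc+\left[\begin{array}{c}lnk_{f}\\lnk_{r}\end{array}\right]\right)+1\right).
\]
Two first-order facts drive the construction. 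First, decreasing any unidirectional flux strictly decreases $\phi$, since $\nabla_{v}\phi>0$. Second, a move that stays on the boundary of a tight cone is $\phi$-neutral to first order. To see the second fact, take a tight cone $\textrm{i}$ and raise its exponent by $\Delta a$ through $d_{c}$, while raising $v_{\textrm{i}}$ by $v_{\textrm{i}}\Delta a$ to keep it tight. The flux change contributes $(1+1/v_{\textrm{i}})v_{\textrm{i}}\Delta a=(v_{\textrm{i}}+1)\Delta a$, and the $lnc$ change contributes $-(\exp(\cdot)_{\textrm{i}}+1)\Delta a=-(v_{\textrm{i}}+1)\Delta a$. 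These cancel exactly. Only slack cones can therefore produce strict descent, and the task is to route a decrease through the slack cone $\textrm{j}$ while preserving $N(v_{f}-v_{r})=b$.

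First consider the case where both slacks of reaction $\textrm{j}$ are positive. Take $d_{f}=d_{r}=-e_{\textrm{j}}$ and $d_{c}=0$. This leaves $N(v_{f}-v_{r})$ unchanged, stays feasible for small $\alpha$, and gives directional derivative $-(2+1/v_{f\textrm{j}}+1/v_{r\textrm{j}})<0$. This contradicts stationarity, so at most one direction of each reaction carries slack.

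The main case is $\delta_{f\textrm{j}}>0$ and $\delta_{r\textrm{j}}=0$, and this is where both hypotheses enter. By (2.2), choose $z$ with $Nz=0$ and $z_{\textrm{j}}\neq0$, normalised so that $z_{\textrm{j}}=-1$. The candidate direction perturbs net fluxes by $\alpha z$. Along this cycle, reaction $\textrm{j}$ absorbs its share by lowering $v_{f\textrm{j}}$ into its slack, which is strict descent. Every other reaction $\textrm{i}$ with $z_{\textrm{i}}\neq0$ must realise its net change $z_{\textrm{i}}$ either by increasing the appropriate unidirectional flux, which is always feasible, or by a $\phi$-neutral boundary move. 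Condition (2.1), disjoint supports with $F_{:,\textrm{i}},R_{:,\textrm{i}}\ne0$, is what makes the boundary moves available. Because the supports are disjoint, I can choose $d_{c}$ to shift the forward exponent $F_{:,\textrm{i}}^{T}d_{c}$ independently of the reverse exponent $R_{:,\textrm{i}}^{T}d_{c}$, so that tight forward and reverse fluxes move in the directions needed to produce $z_{\textrm{i}}$ while remaining on their boundaries. I would first try choosing $d_{c}$ to lower the reverse exponent of every reaction whose required net change is positive, so that the matching decrease in $v_{r}$ stays feasible and descends, combined with neutral boundary adjustments elsewhere.

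I expect this step to be the main obstacle. A single $d_{c}\in\mathbb{R}^{m}$ must simultaneously serve all reactions on the cycle, and increasing a tight flux without a matching exponent increase costs $+(1+1/v)$. So the claim that the total first-order cost of the compensating moves is strictly less than the gain $1+1/v_{f\textrm{j}}$ from reaction $\textrm{j}$ needs a genuine argument, not just bookkeeping. My fallback is to recast stationarity via the KKT conditions (\ref{eq:dL_vf})--(\ref{eq:dL_p}) with $c=\nabla\phi$. Complementarity on a slack cone forces its dual to vanish, giving $s_{vf\textrm{j}}=0$ and $s_{F\textrm{j}}=0$, so $1+1/v_{f\textrm{j}}=N_{:,\textrm{j}}^{T}y$. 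I would then pair the dual identities with $z$ through $z^{T}N^{T}y=0$, using (2.1) to separate the $s_{F}$ and $s_{R}$ contributions in (\ref{eq:dL_p}) and sign information from the dual exponential cone ($s_{v}>0$, $s_{F},s_{R}\le0$), and aim for a sign contradiction.

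Once $\mathcal{S}_{4}$ membership is established, the remaining inclusions follow directly. On $\mathcal{S}_{4}$ we have $\ln(v_{f}\oslash v_{r})=\ln(k_{f}\oslash k_{r})-N^{T}\cdot lnc$, which lies in $\mathcal{R}(N^{T})$ under thermodynamically feasible parameters (\ref{eq:thermoFeasibleKineticParam}); taking logarithms then gives the sign condition. Hence $\mathcal{S}_{4}\subseteq\mathcal{S}_{3}\subseteq\mathcal{S}_{2}\subseteq\mathcal{S}_{1}$ is routine.
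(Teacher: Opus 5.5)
\textbf{The main case has a genuine gap, and the statement itself is false.} Your case with $\delta_{fj}>0$ and $\delta_{rj}>0$ is correct. The case you flag as the main obstacle (one slack cone whose partner is tight) is left open, and it cannot be closed from (2.1)--(2.2) alone. The KKT system you fall back on is consistent at points with slack, so no sign contradiction exists.

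Take species $A,B$ and reactions $A\rightleftharpoons 2B$, $2A\rightleftharpoons B$, $A\rightleftharpoons B$:
\[
F=\begin{bmatrix}1 & 2 & 1\\ 0 & 0 & 0\end{bmatrix},\qquad R=\begin{bmatrix}0 & 0 & 0\\ 2 & 1 & 1\end{bmatrix},\qquad N\cdot(1,1,-3)^{T}=0,
\]
so (2.1) and (2.2) hold. Let $lnc=0$, $k_{f}=(\tfrac{3}{2},1,1)$, $k_{r}=(1,2,1)$, $v_{f}=(4,1,1)$, $v_{r}=(1,2,1)$ and $b\coloneqq N(v_{f}-v_{r})=(-1,5)^{T}$. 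Only the first forward cone is slack ($4>\tfrac{3}{2}$). The gradient is $\nabla_{v_{f}}\phi=(\tfrac{5}{4},2,2)$, $\nabla_{v_{r}}\phi=(2,\tfrac{3}{2},2)$, $\nabla_{lnc}\phi=-(\tfrac{17}{2},9)$. Take $y=(\tfrac{5}{4},\tfrac{5}{4})$, so $N^{T}y=(\tfrac{5}{4},-\tfrac{5}{4},0)$, and multipliers $\mu_{f}=(0,\tfrac{13}{4},2)$, $\mu_{r}=(\tfrac{13}{4},\tfrac{1}{4},2)$. Checking coefficient by coefficient, for every direction $d$
\[
\nabla\phi^{T}\cdot d=y^{T}\cdot N\cdot(d_{f}-d_{r})+\sum_{\textrm{i tight}}\mu_{\textrm{i}}\bigl(d_{v,\textrm{i}}-v_{\textrm{i}}\,\Delta a_{\textrm{i}}\bigr),\qquad\Delta a\coloneqq[F,R]^{T}\cdot d_{c},
\]
where the sum runs over the five tight cones. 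Any feasible direction satisfies $N(d_{f}-d_{r})=0$. On each tight cone it also satisfies $d_{v,\textrm{i}}\ge v_{\textrm{i}}\Delta a_{\textrm{i}}$, because $v_{\textrm{i}}+\alpha d_{v,\textrm{i}}\ge v_{\textrm{i}}e^{\alpha\Delta a_{\textrm{i}}}\ge v_{\textrm{i}}(1+\alpha\Delta a_{\textrm{i}})$. So this point is first-order stationary over $\mathcal{S}_{1}$ but not in $\mathcal{S}_{4}$.

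Replacing $k_{f3}=v_{f3}=1$ by $2^{1/3}$ still admits such a certificate (with $\mu_{f3}=1+2^{-1/3}$, using $(N^{T}y)_{3}=0$) and puts the point in $\mathcal{S}_{3}\setminus\mathcal{S}_{4}$. This is exactly the coupling you worried about. (2.1) decouples the two exponents of one reaction, but here a single $d_{c}\in\mathbb{R}^{2}$ drives all six exponents.

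\textbf{The paper's proof has the same hole.} In Claims 1--3 it takes $d_{f}=-z_{f}$, $d_{r}=-z_{r}$ from a cycle. It then asserts that $d_{c}$ can be chosen to keep the tight cones feasible and rescaled to make the derivative negative. But lowering a tight flux forces $\Delta a$ down on that cone, which makes $\nabla_{lnc}\phi^{T}d_{c}$ positive. Scaling the whole direction does not change the sign, and shrinking $d_{c}$ alone destroys feasibility. In the example, $z=(1,1,-3)^{T}$ gives $d_{f}=(-1,-1,0)$ and $d_{r}=(0,0,-3)$. Feasibility then forces $d_{c}\le(-\tfrac{1}{2},-3)$ componentwise, and the directional derivative is at least $22$.

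What is missing is a further structural hypothesis, not a cleverer direction. Eliminating $\mu$, the $lnc$-row of your KKT system reads $F(\kappa_{f}\odot N^{T}y)=R(\kappa_{r}\odot N^{T}y)$ for some $\kappa_{f},\kappa_{r}>0$. Suppose the distinct columns of $[F,R]$ (the complexes) are linearly independent. Then this row splits complex by complex. At a complex maximising $C^{T}y$, every incident term has a fixed sign, and a maximum-principle argument forces $N^{T}y=0$. That contradicts $(N^{T}y)_{j}=1+1/v_{fj}$ at the slack cone. The example escapes this only because $A$ and $2A$ are both complexes.

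Separately, your last paragraph uses thermodynamic feasibility of $k_{f},k_{r}$, which is not a hypothesis of the statement.
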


\begin{proof}
We prove three claims:\smallskip{}
\textbf{Claim 1:} No stationary point lies in $\mathcal{\mathcal{S}}_{1}\setminus\mathcal{\mathcal{S}}_{2}$.\smallskip{}
\textbf{Claim 2:} No stationary point lies in $\mathcal{\mathcal{S}}_{2}\setminus\mathcal{\mathcal{S}}_{3}$.\smallskip{}
\textbf{Claim 3:} No stationary point lies in $\mathcal{\mathcal{S}}_{3}\setminus\mathcal{\mathcal{S}}_{4}$.\smallskip{}
Since $\mathcal{\mathcal{S}}_{4}\subseteq\mathcal{\mathcal{S}}_{3}\subseteq\mathcal{\mathcal{S}}_{2}\subseteq\mathcal{\mathcal{S}}_{1}$,
these claims imply any stationary point in $\mathcal{\mathcal{S}}_{1}$
must lie in $\mathcal{\mathcal{S}}_{4}$. To rule out stationary of
the merit function (\ref{eq:meritvk}) at a point $(v_{f}^ {},v_{r}^ {},lnc^ {})$,
it suffices to establish the existence of one feasible direction with
strictly negative directional derivative (\ref{eq:directionalDerivative}).
All stationary statements are made with respect to the feasible set
$\mathcal{\mathcal{S}}_{1}$, even if it leaves $\mathcal{\mathcal{S}}_{2}$
or $\mathcal{\mathcal{S}}_{3}$. Thus, in each claim it suffices to
construct a descent direction that remains feasible in $\mathcal{\mathcal{S}}_{1}$.
For clarity, henceforth in this section we omit the constants $k_{f},k_{r}$.\medskip{}
\textbf{Preliminaries (first order feasibility for kinetic inequalities).}
For each $j$, in the forward direction, if 
\begin{equation}
v_{f}(j)=\exp(F_{\textrm{:,j}}^{T}\cdot lnc)\label{eq:tightForward}
\end{equation}
then we require the direction $d_{c}$ to satisfy $v_{f}(j)\ge\exp(F_{\textrm{:,j}}^{T}\cdot lnc)$
for a small step $\alpha d_{c}$, where $\alpha>0$, that is 
\begin{eqnarray}
v_{f}(j)+\alpha d_{f}(j) & \ge & \exp(F_{\textrm{:,j}}^{T}\cdot(lnc+\alpha d_{c})).\label{eq:forwardInequality}
\end{eqnarray}
The first-order expansion of $\exp(F_{\textrm{:,j}}^{T}\cdot lnc)$
is 
\[
\exp(F_{\textrm{:,j}}^{T}\cdot(lnc+\alpha d_{c}))=\exp(F_{\textrm{:,j}}^{T}\cdot lnc)\big(1+\alpha F_{\textrm{:,j}}^{T}\cdot d_{c}+o(\alpha)\big).
\]
where $o(\alpha)$ denotes a remainder satisfying $o(\alpha)/\alpha\rightarrow0$
as $\alpha\downarrow0$. Adding $\alpha d_{f}(j)$ and then subtracting
$\exp(F_{\textrm{:,j}}^{T}\cdot(lnc+\alpha d_{c}))$ from \ref{eq:tightForward},
we have
\begin{eqnarray*}
v_{f}(j)+\alpha d_{f}(j)-\exp(F_{\textrm{:,j}}^{T}\cdot(lnc+\alpha d_{c})) & = & \exp(F_{\textrm{:,j}}^{T}\cdot lnc)+\alpha d_{f}(j)-\exp(F_{\textrm{:,j}}^{T}\cdot lnc)\big(1+\alpha F_{\textrm{:,j}}^{T}\cdot d_{c}+o(\alpha)\big),\\
 & = & \alpha\left(d_{f}(j)-\exp(F_{\textrm{:,j}}^{T}\cdot lnc)\left(F_{\textrm{:,j}}^{T}\cdot d_{c}\right)\right)+o(\alpha).
\end{eqnarray*}
Thus a sufficient first-order condition to ensure \ref{eq:forwardInequality}
holds for sufficiently small $\alpha$ is
\begin{equation}
d_{f}(j)-\exp(F_{\textrm{:,j}}^{T}\cdot lnc)\,\left(F_{\textrm{:,j}}^{T}\cdot d_{c}\right)\ \ge\ 0.\label{eq:Fcond_nosymbols}
\end{equation}
Similarly, for a reverse direction, if $v_{r}(j)=\exp(R_{\textrm{:,j}}^{T}\cdot lnc)$,
a sufficient first-order condition to ensure $v_{r}(j)+\alpha d_{r}(j)\ge\exp(R_{\textrm{:,j}}^{T}\cdot(lnc+\alpha d_{c}))$
for a sufficiently small $\alpha$ is
\begin{equation}
d_{r}(j)-\exp(R_{\textrm{:,j}}^{T}\cdot lnc)\,\left(R_{\textrm{:,j}}^{T}\cdot d_{c}\right)\ \ge\ 0.\label{eq:Rcond_nosymbols}
\end{equation}
Slack forward or reverse directions impose no restriction on $d_{c}$.\medskip{}
\textbf{Gradient sign facts.} Since $\nabla_{v_{f}}\phi=1+v_{f}^{-1}>0$
and $\nabla_{v_{r}}\phi=1+v_{r}^{-1}>0$, for any $d_{f}\le0$, $d_{r}\le0$
with $(d_{f},d_{r})\neq0$, 
\[
(\nabla_{v_{f}}\phi)^{T}\cdot d_{f}+(\nabla_{v_{r}}\phi)^{T}\cdot d_{r}<0.
\]
\medskip{}
\textbf{Justification of Claim 1.} Take any $(v_{f},v_{r},lnc)\in\mathcal{\mathcal{S}}_{1}\setminus\mathcal{\mathcal{S}}_{2}$
and orthogonally decompose the net flux as $v_{f}-v_{r}=v+z$ with
$v\in\mathcal{R}(N^{T})$, $z\in\mathcal{N}(N)$, so $Nv=b$ and $Nz=0$.
If $z=0$ then $v_{f}-v_{r}=N^{T}\cdot\bar{u}$ for some $\bar{u}$;
with $u:=-\bar{u}$ we get $\mathrm{sign}(v_{f}-v_{r})=-\mathrm{sign}(N^{T}\cdot u)$,
which is the membership condition for $\mathcal{\mathcal{S}}_{2}$,
contradicting $(v_{f},v_{r},lnc)\in\mathcal{\mathcal{S}}_{1}\setminus\mathcal{\mathcal{S}}_{2}$.
Hence $z\neq0$. Decompose $z$ as follows:
\begin{equation}
z=z_{f}-z_{r},\qquad z_{f}\ge0,z_{r}\ge0,\qquad z_{f}\odot z_{r}=0,\label{eq:decomposez}
\end{equation}
so $z_{f}$ and $z_{r}$ are non-negative and have disjoint support.
Set $d_{f}:=-z_{f}$, $d_{r}:=-z_{r}$. Since $z\neq0$ and $z_{f},z_{r}$
have disjoint support, $(d_{f},d_{r})\neq0$; moreover $N(d_{f}-d_{r})=N(-z)=-Nz=0$,
so the steady-state equality $N(v_{f}-v_{r})=b$ is preserved to first
order. The log-concentration $lnc$ does not appear in this equality,
so its direction $d_{c}$ is unconstrained by it: rescaling $d_{c}\mapsto\lambda d_{c}$
leaves $N(d_{f}-d_{r})=0$ intact and $d_{c}$ enters only the exponential
cone tightness conditions, which are positively homogeneous in $d_{c}$.
Hence there is $\epsilon>0$ with $(v_{f}+\alpha d_{f},v_{r}+\alpha d_{r},lnc+\alpha d_{c})\in\mathcal{\mathcal{S}}_{1}$
for $\alpha\in[0,\epsilon]$, and scaling $d_{c}$ then the whole
direction by $\tau\in(0,1)$ yields $\nabla\phi(v_{f},v_{r},lnc)^{T}\cdot(d_{f},d_{r},d_{c})<0$,
the strictness using $(d_{f},d_{r})\neq0$. This contradicts stationarity,
so no stationary point lies in $\mathcal{\mathcal{S}}_{1}\setminus\mathcal{\mathcal{S}}_{2}$.\medskip{}
\textbf{Claim 2.} Take any $(v_{f},v_{r},lnc)\in\mathcal{\mathcal{S}}_{2}\setminus\mathcal{\mathcal{S}}_{3}$.
By the characterisation of $\mathcal{\mathcal{S}}_{3}$ this membership
gives $\ln(v_{f}\oslash v_{r})\notin\mathcal{R}(N^{T})=\mathcal{N}(N)^{\perp}$,
so the component of $\ln(v_{f}\oslash v_{r})$ in $\mathcal{N}(N)$
is nonzero: for any $Z$ whose columns span $\mathcal{N}(N)$ (so
$NZ=0$), $Z^{T}\cdot\ln(v_{f}\oslash v_{r})\neq0$. Hence there is
$z\in\mathcal{N}(N)$ with $z_{\textrm{j}}\neq0$ for some $j$; in
particular $z\neq0$. Split $z=z_{f}-z_{r}$ as in \ref{eq:decomposez}
and set $d_{f}:=-z_{f}$, $d_{r}:=-z_{r}$. As in Claim 1, $(d_{f},d_{r})=(-z_{f},-z_{r})\neq0$
and $N(d_{f}-d_{r})=-Nz=0$, so the direction is nonzero and preserves
the steady-state equality to first order. Choosing $d_{c}$ as in
Claim 1 so the exponential cone tightness conditions (\ref{eq:Fcond_nosymbols})–(\ref{eq:Rcond_nosymbols})
hold with slack gives $\epsilon>0$ with the perturbed point in $\mathcal{\mathcal{S}}_{1}$
for $\alpha\in[0,\epsilon]$, and rescaling $d_{c}$ yields $\nabla\phi^{T}\cdot(d_{f},d_{r},d_{c})<0$.
The strict inequality requires $(d_{f},d_{r})\neq0$: the first-order
descent is carried by the flux terms, whose contribution is proportional
to $(d_{f},d_{r})$ and would vanish if $(d_{f},d_{r})=0$; nonzeroness
is exactly what the $\mathcal{\mathcal{S}}_{2}\setminus\mathcal{\mathcal{S}}_{3}$
membership provides. This contradicts stationarity, so no stationary
point lies in $\mathcal{\mathcal{S}}_{2}\setminus\mathcal{\mathcal{S}}_{3}$.

\medskip{}
\textbf{Claim 3.} Take any $(v_{f},v_{r},lnc)\in\mathcal{\mathcal{S}}_{3}\setminus\mathcal{\mathcal{S}}_{4}$.
Then $\delta_{f}\neq0$ or $\delta_{r}\neq0$, so there exists at
least one index $j$ such that either
\[
v_{f}(j)>\exp(F_{\textrm{:,j}}^{T}\cdot(lnc))\qquad\textrm{or}\qquad v_{r}(j)>\exp(R_{\textrm{:,j}}^{T}\cdot(lnc)).
\]
We treat the case $\delta_{r}\neq0\iff v_{r}(j)>\exp(R_{\textrm{:,j}}^{T}\cdot(lnc))$;
the other case is symmetric. Because $v_{r}(j)$ is strictly above
$\exp(R_{\textrm{:,j}}^{T}\cdot(lnc))$, there exists $\epsilon>0$
such that for all $\alpha\in(0,\epsilon]$, 
\[
v_{r}(j)-\alpha\ge\exp(R_{\textrm{:,j}}^{T}\cdot(lnc)).
\]
Thus decreasing $v_{r}(j)$ slightly does not violate the reverse
inequality at index $j$. By (\ref{eq:Condition2.2}), choose $z\in\mathcal{N}\left(N\right)$
and $z_{\textrm{j}}\ne0$. If necessary replace $z\leftarrow-z$ so
that $z_{\textrm{j}}<0$. Decompose $z$ as in \ref{eq:decomposez}
and define 
\[
d_{f}:=-z_{f},\qquad d_{r}:=-z_{r}.
\]
Then $N(d_{f}-d_{r})=0$, so the equality constraint is preserved
to first order. Since $z_{\textrm{j}}<0$ then $z_{r}(j)=-z_{\textrm{j}}>0$
so $d_{r}(j)=-z_{r}(j)<0$ decreases $v_{r}(j)$. Choose $d_{c}$
as in Claim 1 so that (\ref{eq:Fcond_nosymbols})–(\ref{eq:Rcond_nosymbols})
hold on the tight sets, ensuring feasibility in $\mathcal{\mathcal{S}}_{1}$
for small steps. As in Claim 1, scaling $d_{c}$ if necessary yields
a feasible direction with 
\[
\nabla\phi(v_{f},v_{r},lnc)^{T}\cdot\begin{bmatrix}d_{f}\\
d_{r}\\
d_{c}
\end{bmatrix}<0,
\]
contradicting stationarity. Hence no stationary point lies in $\mathcal{S}_{3}\setminus\mathcal{S}_{4}$.\medskip{}
Combining Claims 1–3, every first-order stationary point $(v_{f}^{\bullet},v_{r}^{\bullet},lnc^{\bullet})\in\mathcal{S}_{1}$
must lie in $\mathcal{S}_{4}$, so all kinetic inequalities are tight,
that is
\[
v_{f}^{\bullet}=\exp(F^{T}\cdot lnc^{\bullet}),\qquad v_{r}^{\bullet}=\exp(R^{T}\cdot lnc^{\bullet}).
\]
\end{proof}

\subsection{Interpretation of Theorem (\ref{thm:stationaryIsSteady}).}

The right hand side of both kinetic inequalities $v_{f}\ge\exp(F^{T}\cdot lnc+lnk_{f})$
and $v_{r}\ge\exp(R^{T}\cdot lnc+lnk_{r})$ are lower bounds on each
one-way rate implied by elementary kinetics. The slack vectors $\delta_{f}$
and $\delta_{r}$ therefore quantify the extent to which the chosen
rates are \emph{in excess} of what is kinetically implied by $lnc$.
In biochemical terms, nonzero slack corresponds to a ``rate assignment''
that cannot be attributed to the stated substrate/product dependencies
alone (e.g., it would implicitly require unmodelled activation, inhibition,
regulation, or additional species).

The merit function $\phi$ is a separable barrier-like penalty that
strictly prefers smaller positive one-way fluxes while also penalising
positive slack vectors. Its gradient with respect to $v_{f}$ and
$v_{r}$ is strictly positive componentwise; consequently, any feasible
perturbation that decreases any component of $v_{f}$ or $v_{r}$
produces an immediate decrease in the merit value, unless doing so
violates feasibility. The proof leverages this fact by explicitly
constructing feasible ``rate-reducing'' directions and showing that,
unless all kinetic inequalities are tight, such a direction always
exists.

The three-stage argument admits a direct biochemical interpretation.
In $\mathcal{S}_{1}\setminus\mathcal{S}_{2}$, the net flux vector
$v_{f}-v_{r}$ contains a component lying in the internal flux cycle
space $\mathcal{N}(N)$. Such cycle flux can circulate without changing
the external exchange balance $b$ \cite{desouki_cyclefreeflux_2015}.
The proof shows that whenever such a cyclic component is present,
one can reduce a subset of one-way rates along the cycle while maintaining
the steady-state balance $N(v_{f}-v_{r})=b$. Because $\phi$ strictly
decreases when one-way rates decrease, any point with a removable
cyclic component cannot be stationary. This corresponds to the biochemical
notion that purely internal futile cycling is disfavoured by the merit:
it is ``unproductive'' with respect to meeting the exchange demands
$b$, yet it increases one-way turnover. $\mathcal{S}_{1}$ corresponds
to the feasible set of Problem \ref{eq:FBA}. $\mathcal{S}_{2}$ corresponds
to the set of optimal solutions to Problem \ref{eq:TFBA}.

In $\mathcal{S}_{2}\setminus\mathcal{S}_{3}$, the natural log ratio
$\ln(v_{f}\oslash v_{r})$ is not compatible with a potential-like
representation in the stoichiometric row space. Biochemically, $\ln(v_{f}\oslash v_{r})$
plays the role of a force term (affinity-like quantity) driving the
net direction of each reaction; requiring $\ln(v_{f}\oslash v_{r})\in\mathcal{R}(N^{T})$
enforces that these forces are consistent with a globally defined
set of chemical potentials. The proof shows that if this compatibility
fails, then there again exists a nullspace direction along which one
can decrease one-way rates without affecting the steady state constraints,
contradicting stationarity. $\mathcal{S}_{3}$ corresponds to optimality
condition \ref{eq:lnvfvrNty} in one of the optimality conditions
of Problem \ref{eq:EntropicFBA}, which requires $\ln(v_{f}\oslash v_{r})\in\mathcal{R}(N^{T})$.

Finally, in $\mathcal{S}_{3}\setminus\mathcal{S}_{4}$, the system
is already ``potential-consistent'' (the ln forward/reverse ratios
can be written as $-N^{T}\cdot u$), but at least one kinetic inequality
is slack. In biochemical terms, this means that even though the directionality
pattern is consistent with a global potential, at least one reaction
has an excess one-way rate beyond what the concentrations would imply.
The cyclic flux consistency assumption ensures that for any reaction
index $j$ exhibiting slack, there exists an internal cycle that includes
reaction $j$. This guarantees a feasible cycle-based perturbation
that reduces the slack one-way rate at $j$ while preserving the net
exchanges $b$. Hence slack cannot persist at stationarity: the only
stationary configurations are those in which every one-way rate is
exactly matched to the monomial kinetics implied by $\ln c$, i.e.,
$\delta_{f}=\delta_{r}=0$. The innermost set $\mathcal{S}_{4}$ is
the set of elementary kinetic steady states, in which both kinetic
inequalities hold with equality (\ref{eq:elementaryKineticsF}, \ref{eq:elementaryKineticsR});
it is the feasible set of the variational kinetics problem \ref{eq:vk1},
over which the algorithm optimises.

The proof is structured in stages because different types of non-physical
or non-minimal behaviour arise from distinct geometric features of
the feasible set. Internal cycling ($\mathcal{S}_{1}\setminus\mathcal{S}_{2}$),
thermodynamic inconsistency ($\mathcal{S}_{2}\setminus\mathcal{S}_{3}$),
and kinetic slack ($\mathcal{S}_{3}\setminus\mathcal{S}_{4}$) correspond
to progressively stronger notions of feasibility that may be difficult
to be ruled out by a single argument. Each stage identifies one source
of excess in the merit function and constructs a descent direction
tailored to that mechanism, while maintaining feasibility with respect
to all constraints. This staged approach mirrors a biochemical hierarchy
from mass-balance yet net flux direction inconsistent with thermodynamics
$\left(\mathcal{S}_{1}\setminus\mathcal{S}_{2}\right)$, to steady
state and net flux direction consistent with thermodynamics, yet inconsistent
with the ratio of forward over reverse unidirectional fluxes $\left(\mathcal{S}_{2}\setminus\mathcal{S}_{3}\right)$,
to steady state and full thermodynamic consistency yet kinetic inconsistency
$\left(\mathcal{S}_{3}\setminus\mathcal{S}_{4}\right)$, and makes
explicit how the merit function eliminates each form of inconsistency
in turn.

Overall, the proof formalises the following biochemical principle:
if a flux configuration contains any removable internal cycling or
any excess one-way turnover beyond that implied by concentrations,
then the merit function provides a direction of feasible improvement
that reduces total one-way turnover while maintaining the same external
demands. Under assumptions (\ref{eq:Condition2.1})-(\ref{eq:Condition2.2}),
the only points where this is no longer possible are those where all
kinetic inequalities are tight, meaning the rates are fully explained
by the kinetic monomials for the same $lnc$. This justifies the method
as a constructive mechanism for eliminating futile internal cycling
and enforcing kinetically coherent one-way rates consistent with steady-state
exchange requirements.

At a stationary point of Theorem (\ref{thm:lyap}) the objective in
Problem (\ref{eq:vk1_s}) is 
\begin{eqnarray*}
c_{v_{f}}^{T}\cdot v_{f}^{\bullet} & \coloneqq & \left(1+1\oslash v_{f}^{\bullet}\right)^{T}\cdot v_{f}^{\bullet}=\mathbf{1}^{T}\cdot v_{f}^{\bullet}+n\\
c_{v_{r}}^{T}\cdot v_{r}^{\bullet} & \coloneqq & \left(1+1\oslash v_{r}^{\bullet}\right)^{T}\cdot v_{r}^{\bullet}=\mathbf{1}^{T}\cdot v_{r}^{\bullet}+n\\
c_{lnc}^{T}\cdot lnc^{\bullet} & \coloneqq & -\left(\left[F,R\right]\cdot\exp\left(\left[F,R\right]^{T}\cdot lnc^{\bullet}+\left[\begin{array}{c}
lnk_{f}\\
lnk_{r}
\end{array}\right]\right)+1\right)^{T}\cdot lnc^{\bullet}=\nabla\phi(lnc^{\bullet})^{T}\cdot lnc^{\bullet}
\end{eqnarray*}
which may be interpreted as minimising the sum of unidirectional fluxes
and maximising the rate of consumption of every species weighted by
the logarithm of species concentration.

\section{Additional constraints and regularisation}

Section \ref{sec:Variational-elementary-kinetics}, \ref{sec:SCLPstationarity}
and \ref{sec:vkSteady} establish convergence properties for an algorithm
to obtain a steady state, as defined by Eq. \ref{eq:massBalanceKinetics}.
Beyond that, there are additional constraints that can be added, such
as moiety conservation and thermodynamic constraints on kinetic parameters.
Strictly, Theorem \ref{thm:stationaryIsSteady} applies to solutions
to Eq. \ref{eq:massBalanceKinetics}, but in practice, addition of
the constraints below is observed to be numerically compatible with
convergence to satisfaction of elementary kinetics also.

\subsection{Moiety conservation constraints}

The moiety conservation constraints in Eq. (\ref{eq:moietyConservation})
are linear in linear concentrations, while the formulation of variational
elementary kinetics in Problem (\ref{eq:vk1}) is expressed in terms
of logarithmic concentrations, therefore to add moiety conservation
to Problem (\ref{eq:vk1}), one can employ an exponential cone to
constrain the relationship between linear and logarithmic concentration
variables then add terms to optimise to the boundary of this cone,
to give the conic optimisation problem

\begin{eqnarray}
\underset{c,lnc}{\text{min}}\qquad c_{c}^{T}\cdot c+c_{lnc}^{T}\cdot lnc\nonumber \\
\text{s.t.}\qquad L\cdot c=L\cdot c(0),\label{eq:moietyConservationVK}\\
\left(\begin{array}{c}
c\\
1\\
lnc
\end{array}\right)\in\mathcal{K}_{exp}^{m},
\end{eqnarray}
where at an optimum we have $c_{\textrm{i}}^{\star}=\exp\left(lnc_{\textrm{i}}^{\star}\right)$
when $c_{\textrm{i}}$ and $lnc_{\textrm{i}}$ lie on the exponential
face of the $i^{th}$ exponential cone.

\subsection{Thermodynamic constraints on elementary kinetic parameters}

In an extension to (\ref{eq:vk1}), logarithmic forward and reverse
kinetic parameters may be modelled as variables, in which case it
is possible to implement thermodynamic constraints on kinetic parameters
using
\begin{eqnarray}
lnk_{f}-lnk_{r}+N{}^{T}\cdot u^{\circ} & = & 0,\label{eq:thermoConstrainedKineticParam-1}
\end{eqnarray}
which is linear system of equations in logarithmic variables.

\subsection{Regularisation of kinetic steady states}

Assuming a solution exists to (\ref{eq:massBalanceKinetics}) implies
that there exists an optimal solution to (\ref{eq:vk1}), where each
exponential cone constraint is active, that is, the optimal solution
is on the exponential face of each exponential cone. However, it may
occur that there does not exist a steady state solution to (\ref{eq:massBalance})
that also satisfies the elementary reaction rate laws (\ref{eq:vf=00003D})
and (\ref{eq:vr=00003D}), as well as box constraints on internal
and external reaction rates. As described in \ref{subsec:Moiety-conservation},
replacement of stoichiometrically inconsistent exchange reactions
with perpetireactions, that are stoichiometrically consistent but
driven by intentionally thermodynamically infeasible kinetic parameters,
is the principled and theoretically supported approach to ensure there
exists a non-equilibrium steady state. However, in practice there
are a large cadre of established modes that employ exchange reactions
so it is useful to have the option to solve for a regularised steady
state solution that penalises deviation from steady state in case
the given exchange reactions are not compatible with a kinetically
feasible steady state. This can be achieved by the addition of a quadratic
penalty on a regularisation variable, $r\in\mathbb{R}^{m}$, which
is conically representable with a rotated quadratic cone, that is
\begin{eqnarray}
\underset{v_{f,}v_{r},w,r,t_{r},}{\text{min}}\qquad c_{t_{r}}\cdot t_{r}\nonumber \\
\text{s.t.}\qquad N\cdot(v_{f}-v_{r})+r+B\cdot w=0,\label{eq:qpSteadyStateRegularisation}\\
\left(\begin{array}{c}
t_{r}\\
1\\
H_{r}\cdot r
\end{array}\right)\in\mathcal{Q}^{2+m}.
\end{eqnarray}
where $t_{r}\in\mathbb{R}_{\ge0}$ is an auxiliary variable. Penalisation
can be weighted for or against different molecular species by selection
of a suitable set of weights on the diagonal of $H_{r}\in\mathbb{R}^{m\times m}$,
but the default is $H_{r}\coloneqq I$, Regularisation enables one
to relax one or more steady state constraints in (\ref{eq:massBalance})
yet still satisfy elementary reaction kinetics, in the case where
a solution to (\ref{eq:massBalanceKinetics}) does not exist.

\subsection{Optimisation of network states}

A key feature of constraint-based modelling is the ability to optimise
over a feasible set of network states. In flux balance analysis, such
optimisation is over a feasible set of steady state fluxes. Typically,
optimisation is of one or more exchange fluxes, rather than internal
fluxes. Although optimisation over any linear combination of internal
and external fluxes is possible, internal fluxes with support in the
nullspace of the internal stoichiometric matrix must be bounded by
given lower or upper bounds to avoid an unbounded optimisation problem,
if any of those reactions is optimised. In entropic flux balance analysis
\parencite{fleming_variational_2012}, every thermodynamically feasible
steady state net flux can be obtained as a function of parameters
corresponding to internal reactions that may be interpreted as prior
information in a relative entropy optimisation problem \parencite{amestica-toledo_thermodynamically_2026}.
Entropic flux balance analysis can also include optimisation of any
linear combination of external net fluxes, via a trade off between
(relative) entropy optimisation of internal fluxes and optimisation
of exchange fluxes. However, this is not optimisation \emph{over}
a set of thermodynamically feasible fluxes, because that set is non-convex.
In variational kinetics, it is currently not possible to arbitrarily
optimise internal net fluxes, concentrations, or kinetic parameters,
without interfering with convergence to satisfaction of elementary
reaction kinetics, because it is the objective coefficients corresponding
to unidirectional fluxes, concentrations ($\pm$ kinetic parameters)
that are iteratively optimised to ensure that elementary reaction
kinetics is satisfied. However, one can optimise over the set of net
external fluxes as in flux balance analysis, by adding a linear objective
over exchange reactions, that is $c_{w}^{T}\cdot w$. In practice,
in each conic optimisation problem in the terms in the linear objective
compete with one another, therefore it is beneficial to add scalar
parameter to balance optimisation of exchange fluxes with satisfaction
of kinetics, that is \textbackslash alpha\_\{w\}c\_\{w\}\textasciicircum\{T\}\textbackslash cdot
w.

each term in the linear objective. Therefore, to establish priorities
between the different objectives within the combined formulation of
variational elementary kinetics, we introduce non-negative scalar
weights, denoted $\alpha_{x}\in\mathbb{R}_{\ge0}$ where the subscript
$x$ is replaced by a symbol that corresponds to the primal variable.
That is, the objective in the combined formulation of variational
elementary kinetics becomes

\begin{eqnarray}
\underset{}{\text{min}}\qquad\alpha_{v_{f}}c_{v_{f}}^{T}\cdot v_{f}+\alpha_{v_{r}}c_{v_{r}}^{T}\cdot v_{r}+\alpha_{lnk_{f}}c_{lnk_{f}}^{T}\cdot lnk_{f}+\alpha_{lnk_{r}}c_{lnk_{r}}^{T}\cdot lnk_{r}\ldots\label{eq:VKObjectiveKineticsAlpha}\\
+\alpha_{lnc}c_{lnc}^{T}\cdot lnc\ldots+\alpha_{c}c_{c}^{T}\cdot c\ldots\label{eq:VKObjectiveLinConcAlpha}\\
+\alpha_{t_{r}}c_{t_{r}}^{T}\cdot t_{r}\ldots\label{eq:VKObjectiveMassBalanceRegulariserAlpha}\\
+\alpha_{w}c_{w}^{T}\cdot w.\label{eq:VKObjectiveOptExternalFluxAlpha}
\end{eqnarray}
The relative values of these scalar weights substantially affects
the type of variational kinetic solution obtained.

\section{\protect\label{subsec:Comprehensive-formulation-of-VK}Variational
elementary kinetics}

In this section, each of the conic optimisation problems in the preceding
sections is combined into a single conic optimisation problem. Each
constraint is represented with the corresponding dual variables, where
$y$ denotes a dual variable to a linear equality constraint, $s$
denotes a dual variable to a cone constraint and $z$ denotes a dual
variable to a box constraint. The subscript to these dual variables
is chosen to reflect a correspondence to a primal term. The combined
formulation of variational elementary kinetics is

\begin{eqnarray}
\underset{}{\text{min}}\qquad c_{v_{f}}^{T}\cdot v_{f}+c_{v_{r}}^{T}\cdot v_{r}+c_{lnc}^{T}\cdot lnc+c_{lnk_{f}}^{T}\cdot lnk_{f}+c_{lnk_{r}}^{T}\cdot lnk_{r}\ldots\label{eq:VKObjectiveKinetics}\\
+c_{c}^{T}\cdot c\ldots\label{eq:VKObjectiveLinConc}\\
+c_{t_{r}}^{T}\cdot t_{r}\ldots\label{eq:VKObjectiveMassBalanceRegulariser}\\
+c_{w}^{T}\cdot w\label{eq:VKObjectiveOptExternalFlux}\\
\text{s.t.}\qquad N\cdot(v_{f}-v_{r})+r+B\cdot w=0\;:y_{N}\label{eq:VKConstraintMassBalance}\\
lnk_{f}-lnk_{r}+N{}^{T}\cdot u^{\circ}=0\;:y_{u^{\circ}}\label{eq:VKConstraintThermoKinParam}\\
L\cdot c=L\cdot c(0)\;:y_{L}\label{eq:VKConstraintMoietyConservation}\\
\left(\begin{array}{c}
v_{f}\\
1\\
F^{T}\cdot lnc+lnk_{f}
\end{array}\right)\in\mathcal{K}_{exp}^{n}:\left(\begin{array}{c}
s_{vf}\\
s_{f1}\\
s_{F}
\end{array}\right),\label{eq:Pexp_F}\\
\left(\begin{array}{c}
v_{r}\\
1\\
R^{T}\cdot lnc+lnk_{r}
\end{array}\right)\in\mathcal{K}_{exp}^{n}:\left(\begin{array}{c}
s_{vr}\\
s_{r1}\\
s_{R}
\end{array}\right),\label{eq:eq:Pexp_R}\\
\left(\begin{array}{c}
c\\
1\\
lnc
\end{array}\right)\in\mathcal{K}_{exp}^{m}:\left(\begin{array}{c}
s_{c}\\
s_{c1}\\
s_{lnc}
\end{array}\right),\label{eq:eq:Pexp_c}\\
\left(\begin{array}{c}
t_{r}\\
1\\
H_{r}\cdot r
\end{array}\right)\in\mathcal{Q}^{2+m}:\left(\begin{array}{c}
s_{t_{r}}\\
s_{r1}\\
s_{H_{r}}
\end{array}\right).\label{eq:Prquad_Hr}
\end{eqnarray}
The following are the intent of the terms in the objective. The linear
coefficients in the terms (\ref{eq:VKObjectiveKinetics}) are iteratively
updated to optimise fluxes, logarithmic concentrations, and logarithmic
kinetic parameters (unless they are fixed) so the corresponding exponential
cone constraints representing elementary kinetics are active at a
stationary state (\ref{eq:Pexp_F}). The linear coefficients in the
term (\ref{eq:VKObjectiveLinConc}) is iteratively updated to so the
corresponding exponential cone constraints ((\ref{eq:Pexp_F}),\ref{eq:eq:Pexp_R},\ref{eq:eq:Pexp_c})
are active at a stationary state and therefore $c^{\bullet}=\exp(lnc^{\bullet})$.
The term (\ref{eq:VKObjectiveMassBalanceRegulariser}) implements
regularisation of steady state constraints. The term (\ref{eq:VKObjectiveOptExternalFlux})
implements linear optimisation of net external reaction flux. The
following are the intent of the equality constraints. Equation (\ref{eq:VKConstraintMassBalance})
implements regularised steady state, in conjunction with the term
(\ref{eq:VKObjectiveMassBalanceRegulariser}) and the rotated quadratic
cone (\ref{eq:Prquad_Hr}). Equation (\ref{eq:VKConstraintThermoKinParam})
implements thermodynamic constraints on logarithmic elementary kinetic
parameters. Finally, Eq. (\ref{eq:VKConstraintMoietyConservation})
implements moiety conservation.

When implementing this conic optimisation problem numerically, one
must encode upper and lower bounds for each variable, but they may
be specified to be unbounded, except in the case where they are required
to be fixed to be a certain given value specified as prior data, or
where an unbounded variable may result in an unbounded optimisation
problem, e.g., upper bounds on elementary fluxes and lower bounds
on logarithmic concentrations. The following additional inequality
constraints are used to constrain unidirectional fluxes, net internal
reaction fluxes, net exchange reaction fluxes, logarithmic concentrations,
logarithmic kinetic parameters and logarithmic standard Gibbs energies
of formation
\begin{eqnarray}
v_{f}\le u_{v_{f}} &  & :-z_{v_{f}},\label{eq:boundsNetFlux}\\
v_{r}\le u_{v_{r}} &  & :-z_{v_{r}},\\
l_{v}\le v_{f}-v_{r}\le u_{v} &  & :z_{v}\coloneqq z_{lv}-z_{uv},\\
l_{w}\le w\le u_{w} &  & :z_{w}\coloneqq z_{lw}-z_{uw},\label{eq:boundsExchangeFlux}\\
l_{lnc}\le lnc\le u_{lnc} &  & :z_{lnc}\coloneqq z_{llnc}-z_{ulnc},\label{eq:boundsLnc}\\
l_{lnk_{f}}\le lnk_{f}\le u_{lnk_{f}} &  & :z_{lnk_{f}}\coloneqq z_{llnk_{f}}-z_{ulnk_{f}},\label{eq:boundslnkf}\\
l_{lnk_{r}}\le lnk_{r}\le u_{lnk_{r}} &  & :z_{lnk_{r}}\coloneqq z_{llnk_{r}}-z_{ulnk_{r}},\label{eq:boundslnkr}\\
l_{u^{\circ}}\le u^{\circ}\le u_{u^{\circ}} &  & :z_{u^{\circ}}\coloneqq z_{lu^{\circ}}-z_{uu^{\circ}}.\label{eq:boundsu0}
\end{eqnarray}
For each box constraint, a net dual vector and two non-negative dual
vectors corresponding to lower and upper bound constraints are introduced,
with dimensions corresponding to the primal variable concerned, e.g.,
$l_{v}\in\mathbb{R}^{n}$ and $u_{v}\in\mathbb{R}^{n}$ denote lower
and upper bounds on net flux $v_{f}-v_{r}$, $z_{lv}\in\mathbb{R}^{n}$
denotes a non-negative dual vector to the lower bound constraint and
$z_{uv}\in\mathbb{R}^{n}$ denotes a non-negative dual vector to the
upper bound constraint, with the dual vector corresponding to box
constraints on net flux defined as the difference between these two
vectors, that is $z_{v}\coloneqq z_{lv}-z_{uv}$. The upper bounds
on the unidirectional fluxes are one-sided, so each introduces a single
non-negative dual vector, $z_{v_{f}},z_{v_{r}}\in\mathbb{R}_{\ge0}^{n}$,
rather than a net dual vector. The Lagrangian corresponding to the
combined formulation of variational kinetics expressed as Problem
(\ref{eq:VKObjectiveKinetics})-(\ref{eq:boundsu0}) is provided in
Supplementary Section \ref{subsec:Comprehensive-formulation-of-VK}.

\section{\protect\label{sec:Numerical-experiments}Numerical experiments}

Numerical experiments are presented as a rendered computational narrative
(Supplementary\_File\_1.html \url{https://doi.org/10.5281/zenodo.21633862})
generated by the MATLAB R2024b (Mathworks Inc.) numerical computing
environment, using the COBRA Toolbox (v3.8beta, specifically SHA-1
shorthand: 4713424eb) \parencite{heirendt_creation_2019} that accesses
an implementation of the sequential conic solver \url{https://doi.org/10.5281/zenodo.21633862}
\href{https://github.com/Digital-Metabolic-Twin-Centre/varkin}{https://github.com/Digital-Metabolic-Twin-Centre/varkin}
(SHA-1 shorthand: b130be3) and an industrial quality conic optimisation
solver (MOSEK Version 11.2.0, feasibility and optimality tolerance
$1e-6$), on a workstation (x86\_64, Intel(R) Core(TM) i9-10980XE
CPU @ 3.00GHz) running a Linux operating system (Linux 6.17.0-35-generic
\#35\textasciitilde 24.04.1-Ubuntu). The computational narrative
can be applied to a variety of genome-scale metabolic models but in
this section, results are summarised for experiments with a stoichiometrically,
flux and thermodynamically flux consistent subset \cite{fleming_cardinality_2023,preciat_xomicstomodel_2025}
of a generic human genome-scale metabolic model (Recon3, \cite{brunk_recon3d_2018}),
containing 5,835 metabolites and 8,791 internal reactions. 

Supplementary File 1 contains three numerical experiments. The three
experiments differ in what boundary condition b is and what extra
constraints are imposed. Experiment 1 (VK1) — satisfaction of elementary
kinetics. The plainest case: find a concentration vector consistent
with the elementary kinetics at steady state, kinetic parameters fixed
at their given values, no synthetic target to recover and no moiety
conservation. Constraints are the exponential cone kinetics, the hard
steady state equality, and the concentration/flux bounds; all penalties
off. It converged trivially a single major iteration with a clean
conic certificate (less than kinetic equality tolerance $tolEXP\coloneqq5e-5$).

Experiment 2 (VK2) — recover a steady state from a known boundary
condition. Here a random, thermodynamically feasible kinetic steady
state is generated, from a random concentration vector and fixed kinetic
parameters, $lnk_{f}=lnk_{r}=0$, the corresponding boundary flux
$b=N\cdot(\exp(\ln(k_{f})+F^{T}\cdot\ln(c))-\exp(\ln(k_{r})+R^{T}\cdot\ln(c)))$
is computed from and the solver must find a steady state satisfying
that same $b$ (with kinetic parameters fixed, but no knowledge of
the generating $c$). This is the controlled test of whether the algorithm
can recover it. Constraints are the same as VK1, the steady-state
equality against the generated $b$, and bounds scaled to include
the test concentrations and fluxes, but no moiety conservation. It
converged, with the help of several reflected kinetic leaf iterates
(cf. Supplementary Section \ref{sec:Adaptive-Sequential-Conic}):
to less than the kinetic equality tolerance $tolEXP\coloneqq5e-5$.
An interesting point to note is that, at least for this model, the
constraints defined uniquely the net internal flux.

Experiment 3 (VK3) — recover a moiety-conserved steady state. Same
generate-and-recover setup as VK2, with the added requirement that
the recovered steady state also respect moiety conservation, so the
feasible set gains the conserved-moiety linear equalities on top of
VK2's kinetics, steady-state, and bound blocks. Converged maximum
kinetic equality violation to approximately the kinetic equality tolerance
$tolEXP\coloneqq5e-5$, at major iteration 18. An interesting point
to note is that, at least for this model, the constraints defined
uniquely the net internal flux, and less so the unidirectional fluxes,
but not concentrations.

\section{Discussion}

Variational kinetics is the first computationally tractable modelling
method that enables satisfaction of elementary reaction kinetic rate
law constraints at genome-scale without recourse to mathematical approximation.
Given a biochemical network with $m$ molecular species and $n$ reactions,
the problem of finding an $m$ dimensional concentration vector within
the non-convex set satisfying elementary reaction kinetics and equilibrium
or non-equilibrium steady state constraints is expressed as the minimisation
of a strictly concave function over the intersection of convex exponential
cones and an affine subspace defined by linear equalities and inequalities.
The steady state constraint is the affine subspace, the concentration
and unidirectional flux bounds are the linear inequalities, and the
relaxations of the elementary kinetic constraints are the exponential
cones. These coexist in one convex conic feasible set, and it is over
that set that the strictly concave function is minimised. That function
is strictly positive on the feasible set and attains a zero local
minimum if, and only if, the $2n$ elementary kinetic constraints
are satisfied, a pair for each reversible reaction. Its minimisation
by a particular sequence of conic optimisation problems is guaranteed
to find a solution to a variational kinetic problem, provided that
such a solution exists.

The difficulty that this addresses is intrinsic to the problem. A
set is non-convex when there exists a line between two points in that
set where an interval of that line lies outside the set. It has long
been recognised that the set of thermodynamically feasible steady
state fluxes is non-convex \parencite{qian_stoichiometric_2003},
and the set of concentrations that also satisfy elementary kinetic
rate law constraints is likewise non-convex. Furthermore, the fundamental
equation defining the set of non-equilibrium steady states of a biochemical
network, Eq. (\ref{eq:f(x)fundamental}), has an asymmetric gradient,
that is a transposed Jacobian that is not symmetric, so a non-equilibrium
steady state cannot, in general, be obtained by minimisation of scalar
valued function. This fact alone eliminates a wide variety of established
optimisation algorithms. Numerically, Eq. (\ref{eq:f(x)fundamental})
is not a monotone function for a wide variety of genome-scale metabolic
models, and previously we demonstrated a specific biochemical network
that was provably not monotone, but rather a generalised monotone
function termed a duplomonotone function \parencite{artacho_globally_2014}.
Whether all biochemical networks give rise to a duplomonotone function
is an open question, though numerical tests support such a conclusion.
It is the difficulty of categorising this fundamental equation in
terms specific enough to admit established algorithms at high dimension
that motivates the development of tailored algorithms.

An optimisation problem is convex when a convex function is minimised,
equivalently when a concave function is maximised, over a convex set.
It is conic when a linear function is optimised over a convex conic
set, that is an intersection of convex cones. In the optimisation
literature conic optimisation is therefore described as a subclass
of convex optimisation, in which the objective is linear and each
constraint defines a convex set, and since the minimisation of any
convex function can be expressed as the minimisation of a linear function
subject to conic constraints \parencite{ben-tal_lectures_2001}, that
subclass is a structured one rather than a restrictive one, admitting
polynomial-time algorithms whenever the cone is tractable. Minimising
a strictly concave function over a conic set satisfies the convex
set requirement but violates the convex objective requirement, so
the minimisation lies outside the convex class. The direction matters,
because maximising that same strictly concave function over the same
set is a convex problem. When a strictly concave function is minimised
over a conic set it attains its local minima on the extreme points
of the feasible set, and where those cones have curved and continuous
boundaries there is, in general, a continuum of local minima. From
a biological perspective the exponential cone has substantial value
for the mathematical modelling of biochemical networks, leveraging
fundamental and applied algorithmic developments in this area \parencites{chares_cones_2009}{andersen_mosek_2000}{dahl_primal-dual_2022}.
Conic formulations enable powerful, structure-exploiting solvers,
but they require the non-linear parts of a modelling problem to be
expressible in terms of particular convex cones, which we have established
is the case for non-linear kinetic and thermodynamic constraints.
It is likely that there are other, lesser known cones of substantial
relevance to biology. We envisage that this potential will encourage
greater appreciation of conic optimisation among the biological modelling
community, and motivate mathematical progress on the characterisation
of novel cones that admit efficient optimisation algorithms.

A key advantage of variational kinetics is that it rests on a sequence
of conic optimisation problems, and so inherits the tractability,
reliability and scalability of convex optimisation: efficient algorithms
of polynomial-time complexity, solutions that are robust and reproducible,
increasing solver support, and a well-developed duality theory. Every
kinetic steady state corresponds to a local minimum of the strictly
concave merit function, and every local minimum of that function is
a global minimum, which avoids the non-global local minima, heuristics
and starting point sensitivity that complicate both analysis and computation
in alternative approaches. It is expected that, in general, there
exist multiple kinetic steady states compatible with the given constraints,
and it is a strength of variational kinetics that the sequence of
conic optimisation problems is not intrinsically biased toward any
particular one of them. That absence of bias is not the same as independence
of the initial point. Each conic optimisation problem in the sequence
starts from an initial feasible point, and which of the admissible
kinetic steady states is returned does depend on that point, so further
analysis is required to establish the nature of this dependence. We
hypothesise that such analysis will be tractable, given the substantial
literature relating perturbations in convex optimisation parameters
to perturbations in their solutions, e.g., \parencite{dontchev_primal-dual_2001}.

Although exponential cones represent relaxations of elementary kinetics,
the convexity of the feasible set guarantees that an optimal solution
satisfies elementary reaction kinetic constraints to within the numerical
tolerances set by the optimisation solver. This presumes that a steady
state kinetic solution exists \parencite{fleming_mass_2012}, that
the objective driving satisfaction of elementary kinetics is not dominated
by a competing objective, and that the numerical values of the data,
e.g., the stoichiometric coefficients \parencite{ma_reliable_2017},
are sufficiently well scaled in comparison to the precision of the
numerical implementation. The computational experiments reported herein
exercise these properties in three regimes. The first isolates satisfaction
of elementary reaction kinetics, with the competing penalties inactive
and violation of mass balance admitted under a quadratic penalty.
The second tests whether a kinetic steady state that is known to exist
can be recovered from the boundary condition that it satisfies. The
third repeats that test with moiety conservation imposed, so that
the predicted steady state must conserve moieties as well as satisfy
the same boundary condition.

Set against these merits are several limitations, some intrinsic and
some that we envisage being overcome in future work. A disadvantage
of any comprehensive approach to modelling biochemical reaction networks
is that each type of constraint or data requires the definition of
a new variable, whose biochemical interpretation depends on prior
understanding of the mathematical form of the physicochemical and
biochemical constraints concerned. As the number of different constraints
and variables increases, any modelling formalism becomes more challenging
to understand. This is compounded in conic optimisation, where auxiliary
variables are required to express the problem in a form amenable to
solution with established conic optimisation solvers. Some of these
variables have accessible biochemical interpretations, e.g., an upper
bound on the total internal net reaction rate, while others are unfamiliar
and therefore harder to interpret biochemically. Our approach to managing
this proliferation is a consistent nomenclature, with subscripts specific
to particular instances of the same type of variable, and repeated
use of as few cones as possible. Nevertheless, understanding this
approach does require a basic familiarity with conic optimisation,
which is less widespread in the biochemical modelling community than,
say, linear optimisation.

A second limitation concerns the rate laws themselves. Elementary
kinetic rate laws based on mass-action kinetics are valid only for
true elementary steps under ideal, well-mixed, dilute conditions.
Where a reaction involves complex mechanisms, non-ideal behaviour,
heterogeneous phases, transport limitations or biological regulation,
alternative mechanistic or phenomenological rate laws are required.
Genome-scale models predominantly represent enzyme-catalysed reactions
as overall rather than elementary reactions, which is partly an artefact
of their development for prediction with early constraint-based modelling
approaches, and partly a pragmatic response to pathways whose stoichiometry
or chemical structural specification is unclear for want of experimental
data, e.g., lipid metabolism. Lumped reactions create a particular
difficulty here, because the standard transformed reaction Gibbs energy
of a lumped reaction is the sum over the series of overall reactions
lumped together. An artificially large negative standard transformed
reaction Gibbs energy makes the thermodynamic constraint on the difference
between logarithmic pseudoelementary kinetic parameters numerically
awkward, since the relative values of those parameters then differ
greatly in magnitude. In turn the unidirectional forward rate becomes
artificially large and the reverse rate artificially small, or vice
versa, either of which strains the finite precision arithmetic of
a numerical optimisation solver. This motivates continued reconstruction
effort, combining manual and algorithmic approaches, to split lumped
reactions, and it warrants further analysis of whether variational
kinetics can be extended to phenomenological kinetic rate laws \parencite{cornish-bowden_fundamentals_1981,cook_enzyme_2007}
applied systematically at genome-scale through a formalism admissible
to tractable computation, e.g., convenience kinetics \parencite{liebermeister_bringing_2006}.

Herein variational kinetics is presented primarily as a means to represent
elementary kinetic constraints for a biochemical reaction network
in which all molecular species concentrations are assumed to be at
steady state. That assumption can be relaxed by admitting deviation
from steady state under a quadratic penalty, where the deviation may
be interpreted as a discrete differential of molecular species concentrations
with a time unit consistent with the reaction rates. We likewise admit
quadratic penalisation of deviation from given molecular species concentrations,
assuming that the mean concentrations are compatible with a steady
state. One could then consider an iterative scheme in which molecular
species concentrations are fixed at the sum of the concentrations
and the concentration deviation from the previous iteration, so that
the quadratic penalty enforces a form of continuity, or smoothing,
of trajectories with respect to time. Application to biochemical network
dynamics via damped gradient flow is thus one of a variety of natural
theoretical extensions.

\section{Conclusions}

Variational kinetics is a novel, scalable biochemical network modelling
method that enables satisfaction of elementary reaction kinetics in
non-equilibrium steady states, optionally with the addition of moiety
conservation and thermodynamic constraints on elementary kinetic parameters.
These constraints couple variables representing unidirectional fluxes,
molecular species concentrations and, optionally, (pseudo)elementary
kinetic variables, unless they are specified as given parameters.
All variables can be constrained by given lower and upper bounds.
Optionally, quadratic penalisation of deviations from steady state
constraints enables a kinetically feasible state to be obtained despite
conflicting constraints. Linear optimisation of external net reaction
rates enables representation of biologically motivated objectives,
subject to the aforementioned constraints. The method is implemented
by a sequence of conic optimisation problems that is globally convergent
to an elementary reaction kinetic steady state, which is demonstrated
to exist for any stoichiometrically consistent biochemical network.
Variational kinetics is envisaged to provide a foundation for genome-scale
biochemical network modelling, extending the focus of constraint-based
modelling beyond metabolism and bringing kinetic modelling within
reach at genome-scale.

\paragraph*{CRediT authorship contribution statement}

Ronan M.T. Fleming: Conceptualisation, Methodology, Software, Formal
analysis, Validation, Visualization, Writing - original draft, Writing
- review and editing, Funding acquisition. Ines Thiele: Funding acquisition,
Resources, Writing - review and editing.

\paragraph*{Declaration of competing interest}

The authors declare that they have no known competing financial interests
or personal relationships that could have appeared to influence the
work reported in this paper.

\paragraph*{Data availability}

The genome-scale metabolic model used (code/ data/ raw/ iDopaNeuroC\_VK\_withL.mat),
the computational narrative (src/ matlab/ JTB/ driver\_optimizeVKmodel\_VK1to3.mlx)
that reproduces the numerical experiments in Section \ref{sec:Numerical-experiments},
as well as all of the source code implementing variational kinetics,
including the sequential conic solver is available at \href{https://github.com/Digital-Metabolic-Twin-Centre/varkin}{https://github.com/Digital-Metabolic-Twin-Centre/varkin}
(SHA-1 b130be3). Reproduction of numerical computations requires a
numerical computing environment MATLAB R2024b (Mathworks Inc.), the
COBRA Toolbox (v3.8beta, specifically SHA-1 shorthand: 4713424eb)
and a conic optimisation solver (MOSEK Version 11.2.0, feasibility
and optimality tolerance $1e-6$), a commercial solver for which free
academic licences are available. The absolute paths in the narrative
have to be updated to reflect the location that, e.g, iDopaNeuroC\_VK\_withL.mat
exists on each system.

\paragraph*{Acknowledgements}

This work was funded by the European Union Horizon Europe Framework
Programme within the 'Reconstruction and Computational Modelling for
Inherited Metabolic Diseases' project (Recon4IMD, 101080997), the
European Research Council under the European Union Horizon Europe
research and innovation programme for the 'Innovative whole-body metabolism
models for personalised medicine' project (AVATAR, 101125633) and
the European Commission, Research and Innovation action within the
'Systems Medicine of Mitochondrial Parkinson’s Disease' project (SysMedPD,
668738), the U.S. National Institutes of Health and Department of
Energy interagency, collaborative research award for the 'Multiscale
Molecular Systems Biology: Reconstruction and Model Optimization'
project (U01GM102098) and the U.S. Department of Energy (Office of
Biological and Environmental Research) under the 'Numerical Optimization
Algorithms and Software for Systems Biology' project (DE-FG02-09ER25917).

\paragraph*{Declaration of generative AI and AI-assisted technologies in the
manuscript preparation process}

During the preparation of this work the authors used Claude (Anthropic),
accessed through the Claude Code command-line interface, in order
to draft and revise passages of manuscript text, to draft figure captions,
and to write the Python source code that deterministically render
Figure 1,2,3 and Supplementary Figure 5 from instances of the equations
defined in this manuscript. No general-purpose generative AI image
tool was used to create or alter any figure, and no image representing
primary observed or experimental data was created or altered by AI;
all graphical output is produced by versioned, re-executable plotting
code. After using this tool, the authors reviewed and edited the content
and take full responsibility for the content of the published article.

\printbibliography
\pagebreak{}

\appendix

\section{Local rate-law approximations}

Figure \ref{fig:kineticFormats} illustrates how local rate-law approximations
lose accuracy away from the reference state.

\begin{figure}[H]
\includegraphics[width=1\columnwidth]{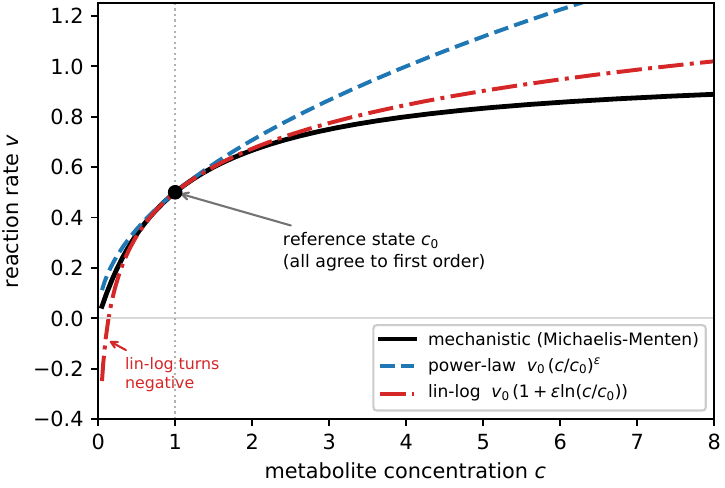}\caption{\protect\label{fig:kineticFormats}Local rate-law approximations lose
accuracy away from the reference state. The mechanistic Michaelis-Menten
rate $v=V_{\max}c/(K_{M}+c)$ (black) is approximated at a reference
concentration $c_{0}$ by the power-law / S-system form $v_{0}(c/c_{0})^{\varepsilon}$
(blue) and the linear-logarithmic (lin-log) form $v_{0}(1+\varepsilon\ln(c/c_{0}))$
(red), both built from the reference elasticity $\varepsilon=K_{M}/(K_{M}+c_{0})$.
All three agree to first order at $c_{0}$ but diverge away from it,
the lin-log form even turning negative at low concentration.}
\end{figure}

\section{Elementary exponential and logarithmic identities}

Let $x,y\in\mathbb{R}_{>0}$, then the following hold

\begin{eqnarray*}
\exp(\ln(x)) & = & x\\
\exp(x+y) & = & \exp(x)\cdot\exp(y)\\
\ln(xy) & = & \ln(x)+\ln(y)\\
xy=\exp(\ln(xy)) & = & \exp(\ln(x)+\ln(y))=\exp(\ln(x))\cdot\exp(\ln(y))\\
x^{2}y=\exp(2\ln(x)+\ln(y)) & = & \exp(2\ln(x))\cdot\exp(\ln(y))\\
\prod x_{\textrm{i}}^{a_{\textrm{i}}} & = & \exp(a^{T}\cdot\ln(x))\\
\exp(0) & = & 1\\
\ln\left(\frac{x}{y}\right) & = & \ln(x)-\ln(y)\\
\ln(x^{y}) & = & y\ln(x)\\
-\ln(x) & = & \ln\left(\frac{1}{x}\right)
\end{eqnarray*}

\section{\protect\label{sec:Cyclic-stoichiometric-matrix}Cyclic stoichiometric
matrix}
\begin{thm}
Let 
\begin{equation}
\bar{N}\;\coloneqq\;\begin{bmatrix}N & -I_{m}\\
0 & L
\end{bmatrix}\in\mathbb{R}^{(2m-r)\times(n+m)},\label{eq:cyclicNetStoich}
\end{equation}
where 
\[
N\in\mathbb{R}^{m\times n},\qquad L\in\mathbb{R}^{(m-r)\times m},
\]
satisfy 
\[
LN=0,\qquad\textrm{rank}(N)=r,\qquad\textrm{rank}(L)=m-r.
\]
Then:\\
(i) The left nullspace of $\bar{N}$ is 
\begin{equation}
\mathcal{N}_{\ell}(\bar{N})=\left\{ \begin{bmatrix}L^{T}\\
I_{m-r}
\end{bmatrix}y\;:\;y\in\mathbb{R}^{m-r}\right\} ,\label{eq:leftNullCyclic}
\end{equation}
so a basis is given by the rows of $\begin{bmatrix}L & I_{m-r}\end{bmatrix}$,
and 
\[
\dim\mathcal{N}_{\ell}(\bar{N})=m-r.
\]
(ii) The right nullspace of $\bar{N}$ is 
\begin{equation}
\mathcal{N}_{r}(\bar{N})=\left\{ \begin{bmatrix}I_{n}\\
N
\end{bmatrix}v\;:\;v\in\mathbb{R}^{n}\right\} ,\label{eq:rightNullCyclic}
\end{equation}
so a basis is given by the columns of $\begin{bmatrix}I_{n}\\
N
\end{bmatrix}$, and 
\[
\dim\mathcal{N}_{r}(\bar{N})=n.
\]
(iii) The rank of $\bar{N}$ is 
\[
\textrm{rank}(\bar{N})=m.
\]
\end{thm}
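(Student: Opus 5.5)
The plan is to compute both nullspaces directly from the block structure of $\bar{N}$ in (\ref{eq:cyclicNetStoich}), and then obtain the rank by rank–nullity. The only hypotheses needed are $LN=0$ and the two block identities ($-I_{m}$ in the upper right and $I_{n}$ in the parametrisation). The hypotheses $\textrm{rank}(N)=r$ and $\textrm{rank}(L)=m-r$ serve mainly as a consistency check at the end.

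\textbf{Right nullspace (ii).} Write a vector in $\mathbb{R}^{n+m}$ as $[v;p]$ with $v\in\mathbb{R}^{n}$ and $p\in\mathbb{R}^{m}$. Then $\bar{N}[v;p]=0$ splits into the two block equations $Nv-p=0$ and $Lp=0$.
\begin{itemize}
\item The first equation forces $p=Nv$.
\item Substituting into the second gives $LNv=0$, which holds for every $v$ because $LN=0$.
\end{itemize}
Hence $\mathcal{N}_{r}(\bar{N})=\{[I_{n};N]v : v\in\mathbb{R}^{n}\}$, as claimed in (\ref{eq:rightNullCyclic}). The columns of $[I_{n};N]$ are linearly independent because of the identity block, so they form a basis and $\dim\mathcal{N}_{r}(\bar{N})=n$.

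\textbf{Rank (iii).} Rank–nullity on the $n+m$ columns gives $\textrm{rank}(\bar{N})=(n+m)-n=m$.

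\textbf{Left nullspace (i).} Write a vector in $\mathbb{R}^{2m-r}$ as $[a;y]$ with $a\in\mathbb{R}^{m}$ and $y\in\mathbb{R}^{m-r}$. The condition $[a;y]^{T}\bar{N}=0$ splits columnwise into $a^{T}N=0$ (from the first $n$ columns) and $-a^{T}+y^{T}L=0$ (from the last $m$ columns).
\begin{itemize}
\item The second equation forces $a=L^{T}y$.
\item Substituting into the first gives $y^{T}LN=0$, which again holds automatically.
\end{itemize}
So $\mathcal{N}_{\ell}(\bar{N})=\{[L^{T};I_{m-r}]y : y\in\mathbb{R}^{m-r}\}$, which is (\ref{eq:leftNullCyclic}). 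The map $y\mapsto[L^{T}y;y]$ is injective because of the $I_{m-r}$ block, so the rows of $[L,\ I_{m-r}]$ form a basis and $\dim\mathcal{N}_{\ell}(\bar{N})=m-r$.

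\textbf{Consistency check.} Counting rows, $(2m-r)-\textrm{rank}(\bar{N})=(2m-r)-m=m-r$, which agrees with the dimension just found. This confirms (iii) independently of (ii).

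No step is a genuine obstacle. The only point needing care is recognising that each block system is solved by the identity block, and that the residual constraint ($LNv=0$, respectively $y^{T}LN=0$) is vacuous by $LN=0$. It follows that the rank assumptions on $N$ and $L$ are not needed for the nullspace descriptions themselves. They matter only for interpreting $L$ as a full basis of the left nullspace of $N$, and hence the left-null vectors of $\bar{N}$ as conserved moieties, as used earlier in the discussion after Theorem \ref{thm:noDetailedBalance}.
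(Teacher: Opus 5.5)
Your proof is correct. Part (ii) is identical to the paper's argument, but parts (i) and (iii) take a genuinely different and more economical route.

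\textbf{Left nullspace.} The paper starts from the first block equation $a^{T}N=0$. It then uses the fact that the rows of $L$ span the left nullspace of $N$ (which follows from $LN=0$, $\textrm{rank}(N)=r$ and $\textrm{rank}(L)=m-r$) to write $a^{T}=t^{T}L$. It also needs full row rank of $L$ to cancel $(b^{T}-t^{T})L=0$ and conclude $b=t$, and it checks the converse inclusion by a separate multiplication. You solve the second block equation first: the $-I_{m}$ block gives $a=L^{T}y$ at once, the first equation collapses to $y^{T}LN=0$, and both inclusions follow together.

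\textbf{Rank.} The paper applies rank--nullity to the rows, using $\dim\mathcal{N}_{\ell}(\bar{N})=m-r$. You apply it to the columns, using $\dim\mathcal{N}_{r}(\bar{N})=n$, and then cross-check against the row count.

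\textbf{What each route buys.} Your ordering is shorter and more general. It shows that all three conclusions hold for any $L$ with $m-r$ rows satisfying $LN=0$, and the two rank hypotheses play no role; the paper's argument obscures this. The paper's route, by passing through $a^{T}N=0$, foregrounds that the metabolite block of a left-null vector of $\bar{N}$ is a moiety conservation vector of $N$. Your parametrisation $a=L^{T}y$ shows the same fact directly, so nothing is lost.

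\textbf{One small correction to your framing.} Your closing consistency check $(2m-r)-m=m-r$ uses only the row count of $\bar{N}$, not the rank hypotheses. So those hypotheses are not needed anywhere in the proof, not even "as a consistency check", which your final remark effectively concedes.
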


\begin{proof}
\emph{Left nullspace.} Let $y\in\mathbb{R}^{2m-r}$ be a left-null
vector of $\bar{N}$, written in block form as 
\[
y=\begin{bmatrix}a\\
b
\end{bmatrix},\qquad a\in\mathbb{R}^{m},\;b\in\mathbb{R}^{m-r}.
\]
Then $y^{T}\cdot\bar{N}=0$ is equivalent to 
\begin{equation}
\begin{cases}
a^{T}\cdot N=0,\\
-a^{T}+b^{T}\cdot L=0.
\end{cases}\label{eq:a'N}
\end{equation}
From $a^{T}\cdot N=0$ and the assumption that the rows of $L$ span
the left nullspace of $N$, there exists $t\in\mathbb{R}^{m-r}$ such
that $a^{T}=t^{T}\cdot L$. Substituting into the second equation
of \ref{eq:a'N} gives 
\[
-t^{T}\cdot L+b^{T}\cdot L=0\quad\Longleftrightarrow\quad(b^{T}-t^{T})L=0.
\]
Since $\textrm{rank}(L)=m-r$, $L$ has full row rank, hence $(b^{T}-t^{T})L=0$
implies $b^{T}=t^{T}$. Therefore 
\[
y^{T}=\begin{bmatrix}t^{T}\cdot L & t^{T}\end{bmatrix}=t^{T}\cdot\begin{bmatrix}L & I_{m-r}\end{bmatrix}.
\]
Thus every left-null vector of $\bar{N}$ is a linear combination
of the rows of $\begin{bmatrix}L & I_{m-r}\end{bmatrix}$.

Conversely, a direct computation yields 
\[
\begin{bmatrix}L & I_{m-r}\end{bmatrix}\begin{bmatrix}N & -I_{m}\\
0 & L
\end{bmatrix}=\begin{bmatrix}LN & -L+L\end{bmatrix}=\begin{bmatrix}0 & 0\end{bmatrix},
\]
so every row of $\begin{bmatrix}L & I_{m-r}\end{bmatrix}$ lies in
$\mathcal{N}_{\ell}(\bar{N})$. Because these rows are linearly independent,
they form a basis and $\dim\mathcal{N}_{\ell}(\bar{N})=m-r$. Equivalently,
\[
\mathcal{N}_{\ell}(\bar{N})=\left\{ \begin{bmatrix}L^{T}\\
I_{m-r}
\end{bmatrix}y:\ y\in\mathbb{R}^{m-r}\right\} .
\]
\emph{Right nullspace.} Let $x\in\mathbb{R}^{n+m}$ be written as
\[
x=\begin{bmatrix}v\\
w
\end{bmatrix},\qquad v\in\mathbb{R}^{n},\;w\in\mathbb{R}^{m}.
\]
Then $\bar{N}x=0$ is equivalent to the block equations 
\[
Nv-w=0,\qquad Lw=0.\tag{2}
\]
The first equation in (2) implies $w=Nv$. Substituting into the second
gives 
\[
Lw=LNv=0,
\]
which holds for every $v\in\mathbb{R}^{n}$ because $LN=0$. Hence
\[
\mathcal{N}_{r}(\bar{N})=\left\{ \begin{bmatrix}v\\
Nv
\end{bmatrix}:\ v\in\mathbb{R}^{n}\right\} ,
\]
and the columns of $\begin{bmatrix}I_{n}\\
N
\end{bmatrix}$ form a basis, so $\dim\mathcal{N}_{r}(\bar{N})=n$.

\emph{Rank.} By rank–nullity applied to $\bar{N}\in\mathbb{R}^{(2m-r)\times(n+m)}$,
\[
\textrm{rank}(\bar{N})+\dim\mathcal{N}_{\ell}(\bar{N})=2m-r,
\]
so using $\dim\mathcal{N}_{\ell}(\bar{N})=m-r$ yields 
\[
\textrm{rank}(\bar{N})=(2m-r)-(m-r)=m.
\]
\end{proof}
Given \ref{eq:cyclicNetStoich}, the forward and reverse stoichiometric
matrices are
\[
\bar{F}\;\coloneqq\;\begin{bmatrix}F & I_{m}\\
0 & 0
\end{bmatrix}\qquad\bar{R}\;\coloneqq\;\begin{bmatrix}R & 0\\
0 & L
\end{bmatrix}\in\mathbb{R}^{(2m-r)\times(n+m)},
\]
and their horizontal concatenation is
\begin{eqnarray*}
\left[\begin{array}{cc}
\bar{F}, & \bar{R}\end{array}\right] & = & \left[\begin{array}{cccc}
F & I_{m} & R & 0\\
0 & 0 & 0 & L
\end{array}\right]
\end{eqnarray*}
with $\textrm{rank}\left(\left[\begin{array}{cc}
\bar{F}, & \bar{R}\end{array}\right]\right)=2m-r$, since $\textrm{rank}\left(L\right)=m-r$. Note that $\textrm{rank}\left(\left[\begin{array}{ccc}
F, & I_{m}, & R\end{array}\right]\right)=m$ irrespective of $\textrm{rank}\left(\left[\begin{array}{cc}
F, & R\end{array}\right]\right)$.

\section{\protect\label{subsec:VKOptimality-conditions}Optimality conditions}

The Lagrangian corresponding to the combined formulation of variational
kinetics expressed as Problem (\ref{eq:VKObjectiveKinetics})-(\ref{eq:boundsu0})
in Section \ref{subsec:Comprehensive-formulation-of-VK} is

\begin{eqnarray*}
 &  & \mathcal{L}(v_{f},v_{r},lnk_{f},lnk_{r},lnc,c,u^{\circ},r,w,t_{r},\ldots\\
 &  & y_{N},y_{u^{\circ}},y_{L},s_{v_{f}},s_{f_{1}},s_{F},s_{v_{r}},s_{r_{1}},s_{R},s_{c},s_{c_{1}},s_{lnc},s_{ce},s_{t_{r}},s_{r1},s_{H_{r}})\coloneqq\\
 &  & c_{v_{f}}^{T}\cdot v_{f}+c_{v_{r}}^{T}\cdot v_{r}+c_{lnk_{f}}^{T}\cdot lnk_{f}+c_{lnk_{r}}^{T}\cdot lnk_{r}+c_{w}^{T}\cdot w+c_{lnc}^{T}\cdot lnc+c_{c}^{T}\cdot c\ldots\\
 &  & +c_{t_{r}}^{T}\cdot t_{r}\ldots\\
 &  & +y_{N}^{T}\cdot(N\cdot(v_{f}-v_{r})+r+B\cdot w)\ldots\\
 &  & +y_{u^{\circ}}^{T}\cdot(lnk_{f}-lnk_{r}+N{}^{T}\cdot u^{\circ})\ldots\\
 &  & +y_{L}^{T}\cdot(L\cdot c-L\cdot c(0))\ldots\\
 &  & +\left(\begin{array}{c}
s_{v_{f}}\\
s_{f_{1}}\\
s_{F}
\end{array}\right)^{T}\cdot\left(\begin{array}{c}
v_{f}\\
1\\
F^{T}\cdot lnc+lnk_{f}
\end{array}\right)+\left(\begin{array}{c}
s_{v_{r}}\\
s_{r1}\\
s_{R}
\end{array}\right)^{T}\cdot\left(\begin{array}{c}
v_{r}\\
1\\
R^{T}\cdot lnc+lnk_{r}
\end{array}\right)+\left(\begin{array}{c}
s_{c}\\
s_{c_{1}}\\
s_{lnc}
\end{array}\right)^{T}\cdot\left(\begin{array}{c}
c\\
1\\
lnc
\end{array}\right)\ldots\\
 &  & +\left(\begin{array}{c}
s_{t_{r}}\\
s_{r1}\\
s_{H_{r}}
\end{array}\right)^{T}\cdot\left(\begin{array}{c}
t_{r}\\
1\\
H_{r}\cdot r
\end{array}\right)\ldots\\
 &  & +z_{v_{f}}^{T}\cdot(u_{v_{f}}-v_{f})+z_{v_{r}}^{T}\cdot(u_{v_{r}}-v_{r})\ldots\\
 &  & +z_{lv}^{T}\cdot((v_{f}-v_{r})-l_{v})+z_{uv}^{T}\cdot(u_{v}-(v_{f}-v_{r}))+z_{lw}^{T}\cdot(w-l_{w})+z_{uw}^{T}\cdot(u_{w}-w)\ldots\\
 &  & +z_{llnkf}^{T}\cdot(lnkf-l_{lnkf})+z_{ulnkf}^{T}\cdot(u_{lnkf}-lnkf)+z_{llnkr}^{T}\cdot(lnkr-l_{lnkr})+z_{ulnkr}^{T}\cdot(u_{lnkr}-lnkr)\ldots\\
 &  & +z_{llnc}^{T}\cdot(lnc-l_{lnc})+z_{ulnc}^{T}\cdot(u_{lnc}-lnc)+z_{lu^{\circ}}^{T}\cdot(u^{\circ}-l_{u^{\circ}})+z_{uu^{\circ}}^{T}\cdot(u_{u^{\circ}}-u^{\circ}),
\end{eqnarray*}
where we require that the following dual variables are non-negative
$z_{v_{f}},z_{v_{r}},z_{lv},z_{uv},z_{llnkf},z_{ulnkf},z_{llnkr},z_{ulnkr}\in\mathbb{R}_{\ge0}^{n}$,
$z_{lw},z_{uw}\in\mathbb{R}_{\ge0}^{k}$, $z_{llnc},z_{ulnc},z_{lu^{\circ}}^{T},z_{uu^{\circ}}\in\mathbb{R}_{\ge0}^{m}$
and that the primal and dual variables are feasible with respect to
the primal and dual cones, that is
\begin{eqnarray}
\left(\begin{array}{c}
v_{f}\\
1\\
F^{T}\cdot lnc+lnk_{f}
\end{array}\right)\in\mathcal{K}_{exp}^{n}, &  & \left(\begin{array}{c}
s_{vf}\\
s_{f1}\\
s_{F}
\end{array}\right)\in\mathcal{K}_{exp}^{^{\star}n},\label{eq:pdConicallyFeasibleForwardFlux}\\
\left(\begin{array}{c}
v_{r}\\
1\\
R^{T}\cdot lnc+lnk_{r}
\end{array}\right)\in\mathcal{K}_{exp}^{n}, &  & \left(\begin{array}{c}
s_{vr}\\
s_{r1}\\
s_{R}
\end{array}\right)\in\mathcal{K}_{exp}^{^{\star}n},\label{eq:pdConicallyFeasibleReverseFlux}\\
\left(\begin{array}{c}
c\\
1\\
lnc
\end{array}\right)\in\mathcal{K}_{exp}^{m}, &  & \left(\begin{array}{c}
s_{c}\\
s_{c1}\\
s_{lnc}
\end{array}\right)\in\mathcal{K}_{exp}^{^{\star}m},\label{eq:eq:pdConicallyFeasibleConcentration}\\
\left(\begin{array}{c}
t_{r}\\
1\\
H_{r}\cdot r
\end{array}\right)\in\mathcal{Q}^{2+m}, &  & \left(\begin{array}{c}
s_{t_{r}}\\
s_{r1}\\
s_{H_{r}}
\end{array}\right)\in\mathcal{Q}^{^{\star}2+m}.\label{eq:pdConicallyFeasibleQuadr}
\end{eqnarray}
Derivation of the optimality conditions to a general conic optimisation
problem are described elsewhere ($\mathdollar$5.9.2 \cite{boyd_convex_2004}).
The optimality conditions to Problem (\ref{eq:VKObjectiveKinetics})-(\ref{eq:boundsu0})
may be obtained by setting the partial derivatives of the Lagrangian
with respect to the variables $v_{f},v_{r},lnk_{f},lnk_{r},lnc,c,u^{\circ},r,w,t_{r},y_{N},y_{u^{\circ}}$
and $y_{L}$ to zero, that is

\begin{eqnarray}
\frac{\partial\mathcal{L}}{\partial v_{f}} & = & c_{v_{f}}-N^{T}\cdot y_{N}^{\star}-s_{vf}^{\star}-s_{ef1}^{\star}+s_{ef}^{\star}-s_{vfe}^{\star}-z_{v_{f}}^{\star}+z_{v}^{\star}=0\label{eq:dL_vf-1}\\
\frac{\partial\mathcal{L}}{\partial v_{r}} & = & c_{v_{r}}+N^{T}\cdot y_{N}^{\star}-s_{vr}^{\star}-s_{er1}^{\star}+s_{er}^{\star}-s_{vre}^{\star}-z_{v_{r}}^{\star}-z_{v}^{\star}=0\label{eq:dL_vr-1}\\
\frac{\partial\mathcal{L}}{\partial lnk_{f}} & = & c_{lnk_{f}}-y_{u^{\circ}}^{\star}-s_{F}^{\star}-z_{lnkf}^{\star}=0\\
\frac{\partial\mathcal{L}}{\partial lnk_{r}} & = & c_{lnk_{r}}+y_{u^{\circ}}^{\star}-s_{R}^{\star}-z_{lnkr}^{\star}=0\\
\frac{\partial\mathcal{L}}{\partial lnc} & = & c_{lnc}-F\cdot s_{F}^{\star}-R\cdot s_{R}^{\star}-s_{lnc}^{\star}-z_{lnc}^{\star}=0\\
\frac{\partial\mathcal{L}}{\partial c} & = & c_{c}+s_{c_{e}}^{\star}-s_{c}^{\star}-z_{c}^{\star}+L^{T}\cdot y_{L}^{\star}=0\\
\frac{\partial\mathcal{L}}{\partial u^{\circ}} & = & c_{u^{\circ}}-z_{u^{\circ}}^{\star}+N\cdot y_{u^{\circ}}^{\star}=0\\
\frac{\partial\mathcal{L}}{\partial r} & = & c_{r}-H_{r}\cdot s_{H_{r}}^{\star}-z_{r}^{\star}+y_{N}^{\star}=0\\
\frac{\partial\mathcal{L}}{\partial w} & = & c_{w}-B^{T}\cdot y_{N}^{\star}-z_{w}^{\star}=0\\
\frac{\partial\mathcal{L}}{\partial t_{r}} & = & c_{t_{r}}-s_{t_{r}}^{\star}-z_{t_{r}}^{\star}=0\\
\frac{\partial\mathcal{L}}{\partial y_{N}} & = & N\cdot(v_{f}^{\star}-v_{r}^{\star})+r^{\star}+B\cdot w^{\star}=0\nonumber \\
\frac{\partial\mathcal{L}}{\partial y_{u^{\circ}}} & = & lnk_{f}^{\star}-lnk_{r}^{\star}+N{}^{T}\cdot u^{\circ\star}=0\\
\frac{\partial\mathcal{L}}{\partial y_{L}} & = & L\cdot c^{\star}-L\cdot c(0)=0
\end{eqnarray}
and expressing the complementarity constraints between primal and
dual terms, that is

\begin{eqnarray}
\left(\begin{array}{c}
s_{v_{f}}^{\star}\\
s_{f1}^{\star}\\
s_{F}^{\star}
\end{array}\right)^{T}\cdot\left(\begin{array}{c}
v_{f}^{\star}\\
1\\
F^{T}\cdot lnc^{\star}+lnk_{f}^{\star}
\end{array}\right) & = & 0^{nx1}\label{eq:complemenatarityF-1}\\
\left(\begin{array}{c}
s_{v_{r}}^{\star}\\
s_{r1}^{\star}\\
s_{R}^{\star}
\end{array}\right)^{T}\cdot\left(\begin{array}{c}
v_{r}^{\star}\\
1\\
R^{T}\cdot lnc^{\star}+lnk_{r}^{\star}
\end{array}\right) & = & 0^{nx1}\nonumber \\
\left(\begin{array}{c}
s_{c}^{\star}\\
s_{c1}^{\star}\\
s_{lnc}^{\star}
\end{array}\right)^{T}\cdot\left(\begin{array}{c}
c^{\star}\\
1\\
lnc^{\star}
\end{array}\right) & = & 0^{mx1}\\
\left(\begin{array}{c}
s_{t_{r}}^{\star}\\
s_{r1}^{\star}\\
s_{H_{r}}^{\star}
\end{array}\right)^{T}\cdot\left(\begin{array}{c}
t_{r}^{\star}\\
1\\
H_{r}\cdot r^{\star}
\end{array}\right) & = & 0^{1\times1}
\end{eqnarray}
where $0^{nx1}$ denotes a set of $n$ complementarity constraints,
one for each reaction, and similarly for other zero vectors as indicated
above. The final parts of the optimality conditions to Problem (\ref{eq:VKObjectiveKinetics})-(\ref{eq:boundsu0})
are to specify that the primal and dual terms are constrained to reside
within primal and dual conic cones, respectively, that is Eqs. (\ref{eq:pdConicallyFeasibleForwardFlux})-(\ref{eq:pdConicallyFeasibleQuadr}),
with the addition of $^{\star}$ to each variable to denote optimality.

\section{\protect\label{subsec:Chemical-potential}Chemical potential}

We assume constant temperature, $\mathcal{T}=310.15\;\textrm{K}$,
constant pressure, $\mathcal{P}=1\;\textrm{atm}$, and let $x\in\mathbb{R}^{m}$
be a vector of mole fractions of molecular species in a solution.
Assuming an ideal solution, the chemical potential is
\[
u\coloneqq u^{\circ}(\mathcal{T},\mathcal{P},x^{\circ})+\mathcal{R}\mathcal{T}\ln\left(x\right)
\]
where $u^{\circ}(\mathcal{T},\mathcal{P},x^{\circ})$ is the chemical
potential at a standard mole fraction $x^{\circ}\in\mathbb{R}^{m}$
(3.7.1.2.2 in \cite{smith_chemical_1982}) and $\mathcal{R}$ is the
gas constant. The molar concentration of pure water is $c_{w}^{\circ}=55.28$
mol/L, while in typical biochemical solutions the estimated total
concentration of solutes is \textasciitilde$0.3$ mol/L \cite{bennett_absolute_2009},
which is $\sim180$ times less than the molar concentration of water
solvent so a distinction between solvent and solute is appropriate
(5.3 in \parencite{atkins_atkins_2022}. We assume Raoult's law for
the solvent, where the standard mole fraction approaches unity, $x_{w}^{\circ}\rightarrow1$,
and corresponds to a molar concentration of pure water $c_{w}^{\circ}=55.28$
mol/L, consistent with pure water at the aforementioned temperature
and pressure. We assume Henry's law for each solute, where the standard
mole fraction approaches zero, $x^{\circ}\rightarrow0$, and we assume
the standard concentration of each solute is $c^{\circ}=1$ mol/L.

Following an established approach to biochemical thermodynamics \cite{alberty_thermodynamics_2003},
refined for multi-compartmental systems \cite{haraldsdottir_quantitative_2012},
we assume constant compartment-specific pH \cite{haraldsdottir_quantitative_2012},
constant compartment-specific electrical potentials \cite{haraldsdottir_quantitative_2012},
and define standard transformed chemical potential as 
\[
u^{\circ\prime}\coloneqq u^{\circ}(\mathcal{T},\mathcal{P},x^{\circ})+\mathcal{R}\mathcal{T}\ln\left(a\right),
\]
where $a\in\mathbb{R}^{m}$ is the \emph{activity coefficient} of
each molecular species, estimated using the extended Debye-Hückel
equation ($\mathsection$3.6 in \cite{alberty_thermodynamics_2003}).
This absorbs an approximation to non-ideal behaviour into the standard
transformed term, enabling definition of transformed chemical potential
as 
\begin{eqnarray}
u & \coloneqq & u^{\circ\prime}+\mathcal{R}\mathcal{T}\ln\left(x\right)\nonumber \\
 & = & u^{\circ\prime}+\mathcal{R}\mathcal{T}\ln\left(\frac{c}{\mathbf{1}^{T}\cdot c}\right)\label{eq:chemicalPotential}
\end{eqnarray}
where we assume $x=\nicefrac{c}{\mathbf{1}^{T}\cdot c},$ where $c\in\mathbb{R}^{m}$
is a vector of molar concentrations. Transformed (standard) chemical
potential is usually distinguished from (standard) chemical potential
with $^{\prime}$ but henceforth we assume (standard) chemical potential
is transformed and omit the prime for clarity. For an ideal-dilute
solution, assuming the approximation $\mathbf{1}^{T}\cdot c\approx c_{w}^{\circ},$
standard chemical potential in terms of concentration may be related
(Eq 3.7-23 in \cite{smith_chemical_1982}) to standard chemical potential
in terms of mole fraction using 
\begin{equation}
u^{\circ}(\mathcal{T},\mathcal{P},c^{\circ})\coloneqq u^{\circ}(\mathcal{T},\mathcal{P},x^{\circ})-\mathcal{R}\mathcal{T}\ln\left(c_{w}^{\circ}\right).\label{eq:mu0ConcToMoleFraction}
\end{equation}

\section{\protect\label{sec:Adaptive-Sequential-Conic}Adaptive Sequential
Conic Linear Approximation Algorithm}

\subsection{Purpose and problem class}

This section describes the adaptive sequential conic linear approximation
algorithm used to solve the combined formulation of variational elementary
kinetics of Section \ref{subsec:Comprehensive-formulation-of-VK}.
That problem is itself a conic optimisation problem, so a single conic
solve returns an optimal solution, but not in general one at which
the exponential cone constraints representing elementary kinetics
are active. It is activity of those constraints that makes an optimal
solution satisfy elementary reaction kinetics, and Theorem \ref{thm:stationaryIsSteady}
establishes that, for the constraints of variational elementary kinetics,
every stationary point of the iteration below has that property. The
algorithm therefore treats the linear objective coefficients on the
kinetic exponential-cone variables as parameters rather than as given
data, and updates them across a sequence of conic optimisation problems
until each of those constraints is active.

Every problem in that sequence is an inner problem, solved by a conic
optimiser, whose endpoint supplies only a candidate search direction.
Progress is measured instead by a nonlinear outer merit, assembled
from the exponential-cone boundary residuals and evaluated after each
candidate step, so the inner objective that generates a direction
is distinct from the outer merit that decides whether the resulting
step is accepted. Theorem \ref{thm:lyap} of Section \ref{sec:SCLPstationarity}
establishes, for the abstract sequence, convergence to a stationary
point of the outer merit subject to the constraints. What follows
is a concrete realisation of that scheme, together with the initialisation,
step safeguards and adaptive strategy portfolio that it requires in
order to converge on a genome-scale model.

The next subsection states the algorithm in full as pseudocode. The
subsections after it define each of its ingredients in turn: the base
conic model, the exponential-cone boundary residuals and the outer
merit derived from them, the working scaling and fixed columns, the
construction of an initial point, the portfolio of inner conic models,
the step safeguards and line search, the portfolio controller, and
the finalisation and audit of the returned primal-dual tuple. The
description is independent of programming language and follows the
notation of Section \ref{sec:Notation}.

\subsection{Algorithmic summary}

The following summary gives the logical flow of the algorithm, in
the order of the subsections below.
\begin{lyxcode}
Definitions.

~~~Working~feasible~set~$\mathcal{X}\coloneqq\{x\in\mathbb{R}^{n}\,:\,b_{l}\le A\cdot x\le b_{u},\;l\le x\le u,\;F\cdot x+d\in\mathcal{K}\}$.

~~~Active~reduced~exponential-cone~images~$t_{1}(x)\coloneqq F_{1}\cdot x+d_{1}$~and~$t_{3}(x)\coloneqq F_{3}\cdot x+d_{3}$.

~~~Boundary~residuals~$h(x)\coloneqq t_{1}(x)-\exp(t_{3}(x))$~and~$g(x)\coloneqq\ln(t_{1}(x))-t_{3}(x)$.

~~~Outer~merit~$\phi(x)\coloneqq\mathbf{1}^{T}\cdot h(x)+\mathbf{1}^{T}\cdot g(x)$.

~~~Convergence~residual~$\theta(x)\coloneqq\max\{\left\Vert h(x)\right\Vert _{\infty},\left\Vert g(x)\right\Vert _{\infty}\}$.

Working~scaling~and~fixed~columns.

~1~~if~explicit~row~scaling~is~active~then~$A\leftarrow S\cdot A$,~$b_{l}\leftarrow S\cdot b_{l}$~and~$b_{u}\leftarrow S\cdot b_{u}$,

~~~~~~~leaving~$F$,~$d$,~$l$,~$u$~and~the~caller~objective~$a$~unscaled.

~2~~$\mathcal{F}\leftarrow$~the~structurally~fixed~columns;~when~projection~is~enabled,~every~candidate

~~~~~~~start,~inner~endpoint~and~accepted~trial~point~is~projected~onto~$\{x:x_{\mathcal{F}}=l_{\mathcal{F}}=u_{\mathcal{F}}\}$.

Initialisation.

~3~~$a_{0}\leftarrow a$~with~the~active~merit-column~entries~set~to~zero.

~4~~for~each~start~mode,~in~the~order~centred~feasible,~elastic~feasible,~feasible,

~~~~~~~relaxed~feasible,~all-ones,~do

~5~~~~~~for~the~centred~feasible~mode,~solve~the~elastic~Phase~I~problem,~then~the

~~~~~~~~~~~centring~problem~driving~$t_{1}\to\mathbf{1}$~and~$t_{3}\to0$,~then,~if~the~raw~explicit

~~~~~~~~~~~residual~exceeds~the~repair~threshold,~the~one-norm-movement-penalised

~~~~~~~~~~~repair~problem.

~6~~~~~~$x^{0}\leftarrow$~the~candidate~point,~projected~onto~the~fixed-column~subspace.

~7~~~~~~if~$r_{\textrm{exp}}(x^{0})\le\tau_{\textrm{start}}$,~or~the~soft~repaired-centred~allowance~is~met,

~~~~~~~~~~~then~accept~$x^{0}$~and~leave~the~loop.

~8~~if~no~start~mode~is~accepted~then~stop:~there~is~no~usable~initial~point.

Outer~loop.

~9~~$k\leftarrow0$;~select~the~first~strategy~of~the~portfolio.

10~~while~$k<k_{\max}$~do

11~~~~~~evaluate~$\phi(x^{k})$,~$h$,~$g$,~$\theta(x^{k})$,~$\nabla\phi(x^{k})$~and~the~curvature~diagnostics.

12~~~~~~if~$\theta(x^{k})\le\tau_{\textrm{stop}}$~then~stop:~the~outer~loop~has~converged.

13~~~~~~repeat

14~~~~~~~~~~build~the~inner~conic~model~of~the~current~strategy~over~$\mathcal{X}$:

~~~~~~~~~~~~~~~cost~approximation,~minimising~$(c_{\textrm{in}}^{k})^{T}\cdot x$~with~$c_{\textrm{in}}^{k}$~the~active~merit~gradient;

~~~~~~~~~~~~~~~reachable-boundary~cost~approximation,~adding~the~boundary~attraction

~~~~~~~~~~~~~~~~~~~$\eta_{\textrm{bd}}^{k}(a_{\textrm{bd}}^{k})^{T}\cdot x$~and~the~target-centred~rotated~quadratic~penalty;

~~~~~~~~~~~~~~~block-curvature~regularised~cost~approximation;

~~~~~~~~~~~~~~~residual-balanced~quadratically~regularised~cost~approximation;

~~~~~~~~~~~~~~~quadratically~regularised~cost~approximation;

~~~~~~~~~~~~~~~local-box~quadratically~regularised~cost~approximation,~which~adds

~~~~~~~~~~~~~~~~~~~temporary~affine-image~bounds~on~$t_{1}$~and~$t_{3}$.

15~~~~~~~~~~solve~the~inner~conic~model~and~set~$d^{k}\leftarrow x_{\textrm{in}}^{k}-x^{k}$;~for~a~safeguarded

~~~~~~~~~~~~~~~gradient~strategy~set~$d^{k}\leftarrow-\nabla\phi(x^{k})\oslash\max\{1,\left\Vert \nabla\phi(x^{k})\right\Vert _{\infty}\}$~instead.

16~~~~~~~~~~if~the~inner~solve~failed,~or~$d^{k}$~is~not~finite,~then~advance~the~strategy~and

~~~~~~~~~~~~~~~return~to~step~14.

17~~~~~~~~~~if~$\nabla\phi(x^{k})^{T}\cdot d^{k}\ge-\tau_{\textrm{desc}}$,~and~the~curvature-rescue~model~does~not~admit

~~~~~~~~~~~~~~~$d^{k}$,~then~advance~the~strategy~and~return~to~step~14.

18~~~~~~~~~~$\lambda\leftarrow$~the~least~of~the~step~caps:~full~step,~explicit~bound,~log~domain,

~~~~~~~~~~~~~~~exponential~boundary,~coefficient~change~and~numerical~movement.

19~~~~~~~~~~while~$\lambda>\lambda_{\min}$~do

20~~~~~~~~~~~~~~$x(\lambda)\leftarrow x^{k}+\lambda d^{k}$,~projected~onto~the~fixed-column~subspace.

21~~~~~~~~~~~~~~accept~$x(\lambda)$~if~$r_{\textrm{exp}}(x(\lambda))\le\tau_{\textrm{exp}}$,~if~$t_{1}(x(\lambda))>\eta\mathbf{1}$,~if

~~~~~~~~~~~~~~~~~~~$\phi(x(\lambda))\le\phi(x^{k})+\sigma\lambda\nabla\phi(x^{k})^{T}\cdot d^{k}$,~and,~once~$\theta(x^{k})$~has~entered

~~~~~~~~~~~~~~~~~~~the~moderate-residual~regime,~if~$\theta(x(\lambda))\le\theta(x^{k})-\delta_{\theta}+\tau_{\theta}$.

22~~~~~~~~~~~~~~if~$x(\lambda)$~is~accepted~then~leave~the~line~search,~else~$\lambda\leftarrow\beta\lambda$.

23~~~~~~~~~~if~no~trial~point~was~accepted,~or~the~accepted~step~is~a~microscopic

~~~~~~~~~~~~~~~boundary-limited~step,~then~advance~the~strategy~and~return~to~step~14.

24~~~~~~until~a~step~is~accepted,~or~every~strategy~has~been~attempted

25~~~~~~if~every~strategy~has~been~attempted~without~an~accepted~step~then~stop:

~~~~~~~~~~~the~portfolio~is~exhausted.

26~~~~~~$x^{k+1}\leftarrow x^{k}+\lambda_{\textrm{k}}d^{k}$~and~$k\leftarrow k+1$.

27~~~~~~update~the~portfolio~controller:~after~repeated~calm~accepted~steps,~probe~back

~~~~~~~~~~~toward~cost~approximation,~unless~the~high-curvature,~repeated~no-descent~or

~~~~~~~~~~~small-residual~suppressors~apply;~advance~to~a~safer~strategy~on~stagnation,

~~~~~~~~~~~on~curvature~dominance,~or~on~boundary~saturation.

Finalisation.

28~~$x_{\textrm{acc}}\leftarrow$~the~last~accepted~outer~iterate.

29~~if~polish~is~permitted~then~solve~the~unrestricted~conic~polish~problem~with~the

~~~~~~~caller~objective~$a$~over~$\mathcal{X}$,~and~accept~its~endpoint~only~if~it~passes~the~final

~~~~~~~explicit-feasibility,~outer-residual~and~merit~gates.

30~~if~no~polished~point~is~accepted~then~solve~the~fixed-primal~recovery~problem~for

~~~~~~~dual~variables~at~$x_{\textrm{acc}}$,~without~moving~the~primal~point.

31~~if~that~also~fails~then~return~$x_{\textrm{acc}}$~with~zero~dual~placeholders~and~a~failure~status.

32~~audit~the~returned~tuple:~$\theta_{\textrm{fin}}$,~the~explicit~residual~in~both~the~working~and~the

~~~~~~~original~row~units,~and~the~Karush-Kuhn-Tucker~residual~of~the~returned~pair.

33~~report~convergence~from~$\theta_{\textrm{fin}}\le\tau_{\textrm{stop}}$.
\end{lyxcode}

\subsection{Base conic model}

Let the primal variable be a column vector $x\in\mathbb{R}^{n}$.
The inner conic models are all built over a common feasible set, possibly
augmented by temporary epigraph variables or temporary local rows.
The unaugmented working feasible set is

\[
\mathcal{X}\coloneqq\left\{ x\in\mathbb{R}^{n}:b_{l}\le A\cdot x\le b_{u},\;l\le x\le u,\;F\cdot x+d\in\mathcal{K}\right\} ,
\]

where

\[
A\in\mathbb{R}^{p\times n},\quad b_{l},b_{u}\in(\mathbb{R}\cup\{-\infty,+\infty\})^{p},\quad l,u\in(\mathbb{R}\cup\{-\infty,+\infty\})^{n},
\]

\[
F\in\mathbb{R}^{q\times n},\quad d\in\mathbb{R}^{q},\quad\mathcal{K}=\mathcal{K}_{1}\times\cdots\times\mathcal{K}_{r}.
\]

All vector inequalities are understood componentwise. The finite lower
and upper bounds in the first two displays define the explicit part
of primal feasibility. The conic inclusion defines the affine conic
part of primal feasibility. The caller-supplied linear objective is
represented by a vector $a\in\mathbb{R}^{n}$, but this objective
is not the outer merit used to control the iterative boundary-matching
process. For the combined formulation of variational elementary kinetics
of Section \ref{subsec:Comprehensive-formulation-of-VK}, the vector
$a$ collects the linear objective coefficients on the exponential-cone
columns, which the algorithm treats as parameters and overwrites at
every major iteration, together with the two fixed coefficients $c_{t_{r}}$
and $c_{w}$ of Eqs. (\ref{eq:VKObjectiveMassBalanceRegulariser})
and (\ref{eq:VKObjectiveOptExternalFlux}), which lie on columns that
no exponential cone touches. The inner conic objective of every strategy
below is assembled from the outer merit gradient on the exponential-cone
columns alone, and is zero on every other column, so the two fixed
terms $c_{t_{r}}^{T}\cdot t_{r}$ and $c_{w}^{T}\cdot w$ do not themselves
steer the outer iteration. They act at two points only: the initial
point is constructed from $a$ with the exponential-cone entries set
to zero, and the final polish solve minimises $a$ over the working
conic model, its endpoint being accepted only if it passes the final
feasibility and residual gates.

For diagnostics and stopping tests, the explicit primal residual is
defined as

\[
r_{\textrm{exp}}(x)\coloneqq\max\left\{ \left\Vert [b_{l}-A\cdot x]_{+}\right\Vert _{\infty},\left\Vert [A\cdot x-b_{u}]_{+}\right\Vert _{\infty},\left\Vert [l-x]_{+}\right\Vert _{\infty},\left\Vert [x-u]_{+}\right\Vert _{\infty}\right\} .
\]

Here $[\cdot]_{+}$ denotes the componentwise positive part. This
residual is used as the raw explicit feasibility gate for accepted
initial points and accepted outer trial points.

\subsection{Exponential-cone boundary residuals}

The algorithm focuses on a subset of primal exponential-cone blocks.
For each active block, the second cone coordinate is fixed at one
after reduction. Thus each active block is represented by

\[
(x_{1},x_{2},x_{3})=(t_{1}(x),1,t_{3}(x)).
\]
The MOSEK primal exponential-cone convention is

\[
(x_{1},x_{2},x_{3})\in\mathcal{K}_{exp}\iff x_{1}\ge x_{2}\exp(x_{3}/x_{2}),\quad x_{1}>0,\quad x_{2}>0.
\]
The corresponding reduced feasibility condition is therefore

\[
t_{1}(x)\ge\exp(t_{3}(x)),\quad t_{1}(x)>0.
\]
The active affine images are written as

\[
t_{1}(x)\coloneqq F_{1}\cdot x+d_{1}\in\mathbb{R}^{m},\qquad t_{3}(x)\coloneqq F_{3}\cdot x+d_{3}\in\mathbb{R}^{m},
\]
where $F_{1},F_{3}\in\mathbb{R}^{m\times n}$ select the first and
third coordinates of the active reduced exponential-cone blocks after
embedding them in the full primal space, with constant offsets $d_{1},d_{3}\in\mathbb{R}^{m}$.
The two residual vectors are

\[
h(x)\coloneqq t_{1}(x)-\exp(t_{3}(x)),
\]

\[
g(x)\coloneqq\ln(t_{1}(x))-t_{3}(x).
\]
The vector $h\in\mathbb{R}^{m}$ is an additive boundary gap. The
vector $g\in\mathbb{R}^{m}$ is a logarithmic boundary gap. Both are
zero on the reduced exponential-cone boundary $t_{1}=\exp(t_{3})$,
and both are nonnegative in the strictly feasible reduced exponential-cone
region. Their simultaneous reduction gives a scale-aware measure of
approach to the target boundary.

The dual exponential cone is interpreted with the corresponding MOSEK
convention

\[
(s_{1},s_{2},s_{3})\in\mathcal{K}_{exp}^{*}\iff s_{1}\ge-s_{3}\exp\left((s_{2}/s_{3})-1\right),
\]

where $s\in\mathbb{R}^{q}$ denotes the affine-conic dual vector.
This convention is relevant for final primal-dual auditing, but the
outer iteration itself is driven primarily by primal residuals $h$
and $g$.

\subsection{Outer merit and derivatives}

The default outer merit is the sum of the additive and logarithmic
boundary gaps:

\[
\phi(x)\coloneqq\mathbf{1}^{T}\cdot h(x)+\mathbf{1}^{T}\cdot g(x).
\]
The associated convergence residual is not the scalar merit but the
maximum componentwise residual

\[
\theta(x)\coloneqq\max\left\{ \left\Vert h(x)\right\Vert _{\infty},\left\Vert g(x)\right\Vert _{\infty}\right\} .
\]
This distinction is important. The scalar merit aggregates all active
blocks and is useful for line search, whereas $\theta(x)$ is the
reported nonlinear stopping residual. Late in the iteration, a trial
point may reduce the sum merit while worsening the largest component.
The algorithm may therefore impose an additional maximum-residual
acceptance gate after $\theta(x)$ enters a moderate residual regime.

For the default merit, the gradient is

\[
\nabla\phi(x)=F_{1}^{T}\cdot\left(\mathbf{1}+t_{1}(x)^{-1}\right)-F_{3}^{T}\cdot\left(\mathbf{1}+\exp(t_{3}(x))\right),
\]
where $t_{1}(x)^{-1}$ denotes the componentwise reciprocal. The gradient
of $h$ is written, using the convention that the first dimension
equals the number of variables and the second dimension equals the
range dimension, as

\[
\nabla h(x)=F_{1}^{T}-F_{3}^{T}\cdot\textrm{diag}(\exp(t_{3}(x)))\in\mathbb{R}^{n\times m}.
\]

\subsection{Working scaling and fixed columns}

Before the initial point is constructed, the explicit linear rows
may be scaled by a positive diagonal matrix. If row scaling is active,
the working explicit constraints become

\[
A_{\textrm{sc}}\coloneqq S\cdot A,\qquad b_{l,\textrm{sc}}\coloneqq S\cdot b_{l},\qquad b_{u,\textrm{sc}}\coloneqq S\cdot b_{u},
\]
where $S\in\mathbb{R}^{p\times p}$ is a positive diagonal matrix.
The affine conic image $F\cdot x+d$, the variable bounds $l$ and
$u$, and the original conic objective vector $a$ are not scaled.
Consequently, the exponential-cone boundary $t_{1}(x)=\exp(t_{3}(x))$
is unchanged. The entire initialisation, all inner models, all line
searches, and finalisation use a single working representation, either
scaled or unscaled. At return, explicit row duals are mapped back
to the caller's original row units. Some variables may be structurally
fixed. Let the fixed-column set be denoted by $\mathcal{F}\subseteq\{1,\ldots,n\}$.
The fixed-column affine subspace is

\[
\{x\in\mathbb{R}^{n}:x_{\mathcal{F}}=l_{\mathcal{F}}=u_{\mathcal{F}}\}.
\]
Candidate starts, inner endpoints, and accepted trial points are optionally
projected back to this affine subspace. This prevents a small fixed-bound
residual in the initial point from being inherited throughout the
outer iteration.

\subsection{Initialisation}

The initialisation stage is not intended to minimise the original
conic objective. Its purpose is to construct a numerically usable
outer starting point $x^{0}$ in the working feasible set, compatible
with fixed columns and well scaled for evaluating the nonlinear merit.
To reduce the risk of choosing an extreme starting point, the active
merit-column entries of the original conic objective are set to zero
before the start problem is solved. The remaining inactive part may
act only as a weak tie-breaker in the direct feasible start; the Phase
I, centring and repair problems replace the objective entirely. The
preferred initialisation is a centred feasible construction. First,
an elastic Phase I conic problem is solved. This introduces nonnegative
elastic variables that relax finite explicit row and variable bounds,
while leaving the original affine conic geometry unchanged. In abstract
form, the Phase I objective is

\[
\min\;\mathbf{1}^{T}\cdot r_{l}^{A}+\mathbf{1}^{T}\cdot r_{u}^{A}+\mathbf{1}^{T}\cdot r_{l}^{x}+\mathbf{1}^{T}\cdot r_{u}^{x},
\]
subject to relaxed lower and upper explicit constraints and the original
conic inclusion. The resulting point provides a feasible or nearly
feasible reference for the second stage.

Second, a centring conic problem is solved over the original feasible
set. Its role is to place the reduced exponential-cone images near
the well-scaled boundary point $(1,1,0)$. A representative centring
problem is

\[
\min_{x,r_{1},r_{3}}\;\mathbf{1}^{T}\cdot r_{1}+\mathbf{1}^{T}\cdot r_{3}
\]

\[
\text{subject to }x\in\mathcal{X},\qquad-r_{1}\le t_{1}(x)-\mathbf{1}\le r_{1},\qquad-r_{3}\le t_{3}(x)\le r_{3},
\]

\[
r_{1}\ge0,\qquad r_{3}\ge0.
\]
Thus the start is encouraged to satisfy $t_{1}$ approximately equal
to one and $t_{3}$ approximately equal to zero. These values correspond
to the reduced exponential-cone boundary point $(x_{1},x_{2},x_{3})=(1,1,0)$,
since $1=\exp(0)$.

If the centred point is good in scaled units but misses the raw explicit
feasibility gate, an optional raw-feasibility repair step solves another
conic problem in the same feasible set while penalising movement from
the centred reference. A typical repair penalty is a one-norm movement
proxy.

\[
\min_{x,p,q}\;\omega_{\textrm{rep}}\,\mathbf{1}^{T}\cdot(p+q)
\]

\[
\text{subject to }x\in\mathcal{X},\qquad x-x_{\textrm{ref}}\le p,\qquad x_{\textrm{ref}}-x\le q,\qquad p\ge0,\quad q\ge0.
\]

Here $\omega_{\textrm{rep}}>0$ is the repair penalty weight and $p,q\in\mathbb{R}^{n}$
are the nonnegative positive and negative parts of the movement from
the centred reference $x_{\textrm{ref}}$.

Other start modes can be used as fallbacks: a direct feasible conic
solve, an elastic feasible solve, a relaxed feasible solve, and a
last-resort all-ones algebraic point. Irrespective of the start mode,
acceptance is governed by the same raw explicit residual gate

\[
r_{\textrm{exp}}(x^{0})\le\tau_{\textrm{start}}.
\]
A scaled residual may be retained for diagnosis, but it does not by
itself make a raw-infeasible point acceptable. The one relaxation
of this gate is that a centred start whose raw residual exceeds $\tau_{\textrm{start}}$
is still admitted when the repair step strictly improved it and it
meets a slightly larger soft allowance. This design ensures that the
point entering the outer loop is feasible in the same row units used
later by the explicit-bound step cap and by the line-search acceptance
test.

\subsection{Adaptive portfolio of inner conic models}

At each major iteration $k$, the algorithm evaluates the current
merit state at $x^{k}$ and then chooses a strategy from an ordered
portfolio. Each strategy builds an inner conic model whose feasible
set is the working conic feasible set, possibly augmented by temporary
local rows or convex epigraph variables. The inner objective is constructed
from the current outer merit gradient, optional residual-balancing
weights, and optional convex regularisation terms. The endpoint of
the inner conic solve is not accepted directly; it defines a direction
that must pass descent and line-search safeguards.

\subsubsection{Cost approximation}

The simplest strategy is cost approximation. It solves an inner conic
problem of the form

\[
\min_{x\in\mathcal{X}}\;(c_{\textrm{in}}^{k})^{T}\cdot x,
\]
where $c_{\textrm{in}}^{k}\in\mathbb{R}^{n}$ equals the active part
of the current merit gradient, with fixed and structurally inactive
slots suppressed and with optional small inactive tie-break terms.
The endpoint $x_{\textrm{in}}^{k}\in\mathbb{R}^{n}$ defines the candidate
direction

\[
d^{k}\coloneqq x_{\textrm{in}}^{k}-x^{k}\in\mathbb{R}^{n}.
\]
In an exact conic solve, if $x^{k}$ is feasible for the same inner
model and $c_{\textrm{in}}^{k}$ is precisely the gradient model,
the inner objective should not increase along this endpoint direction.
A value of $\nabla\phi(x^{k})^{T}\cdot d^{k}$ that is not sufficiently
negative, specifically $\nabla\phi(x^{k})^{T}\cdot d^{k}\ge-\tau_{\textrm{desc}}$
for a solver-tolerance-aware threshold $\tau_{\textrm{desc}}\ge0$,
is interpreted as a no-descent event and triggers diagnostics or a
switch to a safer strategy, unless a curvature-rescue test on the
finite-step second-order model admits the direction despite the failed
first-order test.

\subsubsection{Reachable-boundary cost approximation}

The reachable-boundary cost approximation strategy augments the cost
approximation model with a linear attraction toward a locally reachable
exponential boundary. It first solves an auxiliary projection problem
over the explicit feasible region to determine whether movement toward
the local boundary is retained under the explicit constraints. Let
$w^{k}\in\mathbb{R}_{\ge0}^{m}$ be nonnegative weights selecting
boundary-relevant residual components. With the gradient convention
above, a boundary-attraction vector can be written as

\[
a_{\textrm{bd}}^{k}\coloneqq\nabla h(x^{k})\cdot w^{k}\in\mathbb{R}^{n}.
\]
The inner objective is then biased by the additional linear term

\[
\eta_{\textrm{bd}}^{k}\,(a_{\textrm{bd}}^{k})^{T}\cdot x,
\]
where the scalar $\eta_{\textrm{bd}}^{k}$ is set to zero when the
projection diagnostic indicates that useful boundary movement is not
reachable. The purpose of this linear bias is not to introduce nonconvexity.
Rather, it preserves conic representability while encouraging a direction
whose first-order effect reduces the active boundary gap.

In addition to this linear attraction, the reachable-boundary strategy
also appends a convex target-centred penalty on the selected blocks.
For each such block it introduces a nonnegative epigraph variable
$q_{\textrm{i}}\in\mathbb{R}_{\ge0}$ together with the curvature-scaled
deviations of $t_{1}(x)$ and $t_{3}(x)$ from a projected boundary
target, collected in a vector $z_{\textrm{i}}\in\mathbb{R}^{2}$,
imposes $(q_{\textrm{i}},1,z_{\textrm{i}})$ in a rotated quadratic
cone (so that $\left\Vert z_{\textrm{i}}\right\Vert _{2}^{2}\le2q_{\textrm{i}}$),
and adds the term $\mu\sum_{\textrm{i}}w_{\textrm{i}}q_{\textrm{i}}$
to the inner objective, with the same nonnegative weights $w_{\textrm{i}}$
and a penalty scale $\mu>0$; the curvature scaling is frozen at $x^{k}$.
This penalty is a convex second-order-cone term, so the inner model
remains conic-representable, and it is inert when $\mu=0$ or no block
is selected.

\subsubsection{Quadratic and local-box regularisation}

When the first-order cost approximation becomes unreliable, the algorithm
can add convex quadratic regularisation while keeping the inner model
conic representable. Let $W_{x}\in\mathbb{R}^{n\times n}$ and $W_{1},W_{3}\in\mathbb{R}^{m\times m}$
be nonnegative diagonal weight matrices. A representative quadratically
regularised model is

\[
\min\;(c_{\textrm{in}}^{k})^{T}\cdot x+\lambda_{x}\xi_{x}+\lambda_{1}\xi_{1}+\lambda_{3}\xi_{3}
\]

\[
\text{subject to }x\in\mathcal{X},
\]

\[
\left\Vert W_{x}^{1/2}\cdot(x-x^{k})\right\Vert _{2}^{2}\le\xi_{x},
\]

\[
\left\Vert W_{1}^{1/2}\cdot(t_{1}(x)-t_{1}(x^{k}))\right\Vert _{2}^{2}\le\xi_{1},
\]

\[
\left\Vert W_{3}^{1/2}\cdot(t_{3}(x)-t_{3}(x^{k}))\right\Vert _{2}^{2}\le\xi_{3}.
\]
Each squared-norm epigraph is represented by a rotated quadratic cone,
for example by imposing $(\xi,1,z)$ in a rotated quadratic cone so
that $\left\Vert z\right\Vert _{2}^{2}\le2\xi$. In a local-box variant,
temporary affine-image bounds are also added:

\[
\left|t_{1}(x)-t_{1}(x^{k})\right|\le\Delta_{1}^{k},\qquad\left|t_{3}(x)-t_{3}(x^{k})\right|\le\Delta_{3}^{k}.
\]
These local rows are temporary. They stabilise the inner endpoint
by restricting movement in the affine images that control the exponential
boundary, but they do not redefine the global feasible set. The local
widths may be adapted according to the observed agreement between
predicted and actual merit decrease.

\subsubsection{Block-curvature and residual-balanced variants}

A block-curvature variant applies conic quadratic regularisation selectively
to exponential-cone blocks whose curvature diagnostics at the current
iterate indicate that the linear model is unreliable. This avoids
adding unnecessary epigraph variables to all blocks while still damping
the directions that dominate the mismatch between the linear model
and the observed merit change. A residual-balanced variant rescales
the $h$ and $g$ contributions to the merit gradient so that one
residual family does not numerically dominate the inner objective
solely because of scale.

\subsubsection{Safeguarded gradient directions}

If the conic endpoint models fail to provide a useful direction, the
portfolio includes safeguarded gradient directions. The basic gradient
direction is

\[
d^{k}\coloneqq-\nabla\phi(x^{k})\oslash\max\{1,\left\Vert \nabla\phi(x^{k})\right\Vert _{\infty}\},
\]
with fixed and structural coordinates suppressed as required. A local-box
gradient variant combines this direction with the same affine-image
neighbourhood restrictions used by local-box cost-approximation strategies.

\subsection{Step safeguards and line search}

After a candidate direction $d^{k}$ has been constructed, the algorithm
searches along the ray

\[
x(\lambda)\coloneqq x^{k}+\lambda d^{k},\qquad\lambda\ge0.
\]
The first admissible trial step is the minimum of several independent
caps. The explicit-bound cap is the largest step satisfying

\[
b_{l}-\tau_{\textrm{exp}}\mathbf{1}\le A\cdot x(\lambda)\le b_{u}+\tau_{\textrm{exp}}\mathbf{1},\qquad l-\tau_{\textrm{exp}}\mathbf{1}\le x(\lambda)\le u+\tau_{\textrm{exp}}\mathbf{1},
\]
where $\tau_{\textrm{exp}}$ is the accepted-trial explicit feasibility
tolerance in the current working row units. A log-domain cap prevents
$t_{1}$ from approaching the logarithmic singularity:

\[
t_{1}(x(\lambda))>\eta\mathbf{1}.
\]
An exponential-boundary cap prevents a boundary-safe strategy from
stepping across the reduced exponential graph. Along the trial ray
the boundary gap is

\[
h(x(\lambda))=t_{1}(x(\lambda))-\exp(t_{3}(x(\lambda)))\in\mathbb{R}^{m}.
\]
The boundary cap is the first positive $\lambda$ at which any component
of $h(x(\lambda))$ reaches zero, multiplied by a fraction-to-boundary
safety factor. Additional caps can limit predicted coefficient changes
and numerically excessive affine-image movement. These caps are diagnostic
as well as protective: the active cap identifies whether a failed
or tiny step was caused by explicit feasibility, log-domain safety,
exponential-boundary safety, coefficient damping, or numerical movement
damping.

The primary line-search acceptance test is an Armijo condition on
the outer merit:

\[
\phi(x^{k}+\lambda d^{k})\le\phi(x^{k})+\sigma\lambda\nabla\phi(x^{k})^{T}\cdot d^{k},\qquad0<\sigma<1.
\]
A trial point must also satisfy the explicit feasibility gate and
the log-domain gate. Once the maximum residual $\theta(x)$ is sufficiently
small, the optional maximum-residual gate requires

\[
\theta(x^{k}+\lambda d^{k})\le\theta(x^{k})-\delta_{\theta}+\tau_{\theta},
\]
where $\delta_{\theta}$ is a required decrease and $\tau_{\theta}$
is an absolute or relative numerical allowance. If a trial fails,
$\lambda$ is reduced geometrically and the tests are repeated. The
accepted update is

\[
x^{k+1}\coloneqq x^{k}+\lambda_{\textrm{k}}d^{k}.
\]

For nominal no-descent directions, a curvature-rescue test may be
applied before rejection. This test evaluates a finite-step second-order
model

\[
\Delta_{2}(\lambda)\coloneqq\lambda\nabla\phi(x^{k})^{T}\cdot d^{k}-\chi_{\textrm{curv}}(\lambda d^{k}),
\]

where $\chi_{\textrm{curv}}$ is a positive curvature correction estimated
from the exponential and logarithmic nonlinearities. A direction that
is not a strict first-order descent direction can still proceed to
line search if the finite-step model and the actual merit decrease
are both favourable.

\subsection{Portfolio adaptation}

The portfolio controller maintains an ordered list of strategies from
simpler to safer or more regularised models. The principal strategies
are cost approximation, reachable-boundary cost approximation, block-curvature
regularised cost approximation, residual-balanced quadratically regularised
cost approximation, quadratically regularised cost approximation,
local-box quadratically regularised cost approximation, safeguarded
gradient, and local-box safeguarded gradient. A user-supplied order
may use all or a subset of these strategies. The default order omits
the two safeguarded gradient strategies, which are enabled only on
request.

Within each major iteration, if the current strategy fails to solve
its inner conic problem, produces a nonfinite direction, produces
a no-descent direction, or fails line search, or yields only a microscopic
boundary-limited step, the controller advances to the next available
strategy in the same major iteration. If all strategies have been
attempted without an accepted step, the portfolio is exhausted. Conversely,
after repeated calm accepted steps, the controller may probe back
toward a simpler strategy, unless recent diagnostics indicate high
curvature, repeated no-descent classifications, or a small-residual
regime in which the simpler model is likely to repeat the previous
failure mode.

The controller also uses diagnostic triggers. A stagnation detector
advances to a safer strategy when the same accepted strategy produces
too little relative merit decrease over a short window. A curvature
detector switches from ordinary cost approximation to a curvature-aware
strategy when recent accepted cost-approximation rows have a large
curvature ratio or a repeated dominant curvature block. A boundary-saturation
detector recognises regimes in which the model remains feasible but
the step is repeatedly throttled by exponential-boundary caps.

\subsection{Finalisation and returned certificate}

The outer iteration returns an accepted primal point, but a downstream
conic interface often requires a primal-dual tuple. The finalisation
stage therefore separates three concepts: the last accepted outer
primal point, the possibly polished returned primal point, and the
dual certificate for the returned point. Let $x_{\textrm{acc}}$ denote
the last accepted outer iterate. The final accepted outer residual
is

\[
\theta_{\textrm{acc}}\coloneqq\theta(x_{\textrm{acc}}).
\]
If the finalisation policy permits polish, and either the residual
is already below a polish threshold or polish is explicitly requested,
an unrestricted conic polish solve is attempted using the caller-supplied
objective $a$ over the original unboxed working conic model. A polished
point is accepted only if it passes the final explicit-feasibility
and outer-residual acceptance gates. If polish is not attempted, fails,
or is rejected, an auxiliary fixed-primal recovery problem may be
solved to recover dual variables at $x_{\textrm{acc}}$ without moving
the primal point.

The final returned point $x_{\textrm{fin}}$ is audited in the same
working row units used by the inner solves, and explicit residuals
are also reported in the original input row units when row scaling
was used. The final Karush-Kuhn-Tucker residual is evaluated on the
actual returned primal-dual tuple, not merely on the auxiliary problem
that produced it. The outer convergence status is determined from

\[
\theta_{\textrm{fin}}\coloneqq\theta(x_{\textrm{fin}})=\max\left\{ \left\Vert h(x_{\textrm{fin}})\right\Vert _{\infty},\left\Vert g(x_{\textrm{fin}})\right\Vert _{\infty}\right\} .
\]
If polish moves the primal point and brings $\theta_{\textrm{fin}}$
below the requested tolerance, the returned solution is classified
as converged at the returned primal point even if the pre-polish portfolio
stopped because all direction models were exhausted. This avoids conflating
failure of the outer step generator with failure of the returned primal
solution.

\subsection{Diagnostics and interpretation}

The algorithm records compact diagnostic rows for initial starts,
accepted steps, rejected attempts, and finalisation. These diagnostics
are not merely logging artefacts; they separate mathematically distinct
failure modes. A solve failure means that the inner conic optimiser
did not return a usable endpoint. A no-descent event means that the
endpoint did not define a descent direction for the selected outer
merit. A line-search failure means that a descent direction was found
but no trial step satisfied the explicit feasibility, domain, boundary,
merit, and maximum-residual gates. A microscopic boundary-limited
step means that a formally acceptable step was so small, typically
because of an exponential-boundary cap, that it was treated as no
real progress.

No-descent diagnostics compare the tested directional derivative,
the movement in the actual inner objective, explicit feasibility of
the current and endpoint points, and the Karush-Kuhn-Tucker residual
of the endpoint. This distinction is important because a no-descent
row can arise from stationarity, complementarity, loose endpoint feasibility,
objective scaling, masking of structural variables, or local-box constraints.
Boundary-skating diagnostics identify active components for which
$h$ is already small and the first-order derivative points toward
the exponential boundary, yielding a tiny fraction-to-boundary step.
Curvature diagnostics estimate how much the nonlinear merit deviates
from its first-order model and identify blocks that repeatedly dominate
this deviation.

The diagnostic summaries are used to interpret the solve and to guide
the adaptive portfolio. They do not replace the mathematical acceptance
gates. The reported solution should therefore be interpreted through
three simultaneous quantities: explicit primal feasibility, outer
nonlinear residual, and the final primal-dual Karush-Kuhn-Tucker residual.

\subsection{Convergence interpretation}

The algorithm should be interpreted as a safeguarded sequential approximation
approach rather than as a single monolithic conic optimisation problem.
Each inner conic solve supplies a direction for a nonlinear boundary-matching
merit; it does not by itself certify convergence of the outer problem.
Conversely, a successful final polish can return a primal point satisfying
the nonlinear residual tolerance even when the adaptive portfolio
has exhausted its direction models. The most meaningful success criterion
is therefore the final returned residual $\theta_{\textrm{fin}}$,
together with explicit primal feasibility and the final Karush-Kuhn-Tucker
audit of the returned tuple.

The adaptive design is motivated by the mixed numerical regimes typical
of large exponential-cone models. Far from the boundary, a simple
cost approximation often gives useful progress. Near the boundary,
the same first-order model may become curvature dominated or boundary
throttled. The portfolio addresses this by adding reachable-boundary
attraction, conic quadratic regularisation, local affine-image boxes,
residual balancing, and safeguarded gradient steps only when diagnostics
indicate that the simpler model is insufficient.
\end{document}